\newcommand{\pcf}{p_\text{CF}}
\renewcommand{\star}{\ensuremath{*}}
\title{User Strategization and Trustworthy Algorithms}
\author{Sarah H. Cen, Andrew Ilyas, and Aleksander  M\k{a}dry \\
MIT \\
\texttt{\{shcen,ailyas,madry\}@mit.edu}
}
\date{}
\begin{document}
\doparttoc %
\faketableofcontents %
\maketitle

\begin{abstract}
	Many human-facing algorithms---including those that power recommender systems or hiring decision 
	tools---are trained on data provided by their users. 
	The developers of these algorithms commonly adopt the assumption that the data generating process is {\em exogenous}: 
	that is, 
	how a user reacts to a given prompt (e.g., a recommendation or hiring suggestion) depends on the prompt and \emph{not} on the algorithm that generated it. 
	For example, the assumption that a person's behavior follows a ground-truth distribution is an exogeneity assumption. 
	In practice, when algorithms interact with humans, this assumption rarely holds because users can be {\em strategic}. 
	Recent studies document, for example,
	TikTok users changing their scrolling behavior after learning that 
	TikTok uses it to curate their feed, and Uber
	drivers changing how they accept and cancel rides in response to 
	changes in Uber's algorithm.

	Our work studies the implications of this strategic behavior by modeling the interactions between a user and
	their data-driven platform as a repeated, two-player game. 
	We first find that user strategization 
	can actually help platforms in the short term. 
	We then show that it corrupts platforms'
	data and ultimately hurts their ability to make counterfactual decisions.
	We connect this phenomenon to user trust,
	and show that designing trustworthy algorithms can go hand in hand with
	accurate estimation. 
	Finally, we provide a formalization of
	trustworthiness that inspires potential interventions.
\end{abstract}

\section{Introduction}

In the age of personalization,
data-driven platforms have become increasingly reliant on data provided by their users. 
Platforms like Facebook, Amazon, and Uber tailor their services to each user based on the user's interaction history. 
Even data-driven tools in medicine and hiring utilize previous interactions in order to fine-tune and improve their results. 

Traditionally, platforms make a key assumption when processing user data, 
namely that the data is generated 
\emph{exogenously}, i.e., how a user behaves depends on the user but \emph{not} on the platform’s algorithm.
When a user's behavior is exogenous,
an action---say, skipping a video on YouTube---is a reflection of the user's preference for that video and {not} of YouTube's  algorithm: 
the user would skip the video regardless of the algorithm that generated it. 
YouTube can therefore attribute the skip to the user's content preferences and recommend future content accordingly.

In practice, however, users are not blind to how their platforms operate. 
In fact, many  users adapt their behavior---or  \emph{strategize}---based on their perception of how the platform works. 
For instance, a Facebook user might not click on a post not because they find it uninteresting, but because they believe the algorithm will over-recommend similar content in the future if they do click. Or an Uber driver may cancel low-paying rides that they would normally accept because they have learned that Uber’s algorithm does not penalize them for excessive cancelations \cite{wireduber}. 
In fact, \citet{empirical} recently document user strategization in a lab experiment,
discovering that over half of users consciously strategize on online platforms.

\paragraph{Contributions.} 
 In this paper, 
we study the implications of 
user strategization and establish that it can be 
both
helpful and harmful to 
platforms. 
Using a game-theoretic model of strategization (which \cite{empirical} find to be empirically predictive of strategization), 
we first demonstrate that strategization can provide the platform with cues that
the platform would otherwise miss. 
We then show, however, that the platform's data is no longer generalizable when users strategize---data gathered under one algorithm
cannot be used to make reliable predictions in other contexts. 
This outcome is undesirable for platforms that may wish to use the data for other purposes, 
e.g., to train alternate algorithms,
begging the question: 
What should platforms do when users strategize?

To work around user strategization,
some platforms attempt to gather data that is more likely to satisfy the exogeneity assumption. 
For instance, 
Facebook may notice that users strategize by selectively clicking the “like” button and begin tracking how quickly users scroll down their feeds, using the dwell time on each post as an ``exogenous'' signal of user interest. 
However, this strategy can backfire. 
Upon learning that Facebook is tracking their dwell time, users may begin strategizing how they scroll, violating the exogeneity assumption once again and prompting a cat-and-mouse game that only serves to erode the trust between a user and their platform. 

It is natural to ask: is this outcome inevitable? 
Is out-maneuvering users the only way for platforms to obtain high-quality data?
Not necessarily. 
In this work, 
we argue that trustworthy algorithm design plays a key role. When an algorithm is trustworthy,
users do not feel compelled to strategize---they provide data that, for all intents and purposes, \emph{looks exogenous}. 
We formalize this intuition, calling a platform $\kappa$-\emph{trustworthy} when their choice of algorithm does not induce strategization and ensures users receive at least $\kappa$ utility on average. Intuitively, a platform is trustworthy when a user believes that the platform looks out for their interests along the same axis that would induce the user to strategize. We connect this formalization to the existing literature on trustworthiness, comparing it to definitions that arise in political science and law. Using this framework, we show that trustworthy design can mitigate the effects of strategization on a platform, to the benefit of both the platform and user.

\subsection{Summary of contributions}

Below, we describe our contributions and findings in greater detail.
\\[-20pt]

\paragraph{A game-theoretic model of data-driven algorithms.}
We begin by modeling the interactions between a user and their
data-driven platform as a repeated, two-player game (\cref{sec:model}).
Under our model, 
there are two agents: a user and their platform. 
At each time step,
the platform puts forth a proposition (e.g., a prediction or a recommendation), 
the user responds with a behavior (e.g., agreeing with the prediction or ignoring the recommendation),
and each party receives a payoff based on their action. 
The platform's goal is to generate high-payoff propositions by 
developing a good estimate of
the user's behavioral tendencies from repeated interactions with the user. 
In our model, 
the platform ``moves first'' in that it declares how it generates propositions, 
after which the user ``moves second'' by choosing how they respond to propositions (akin to a Stackelberg game). 
Importantly, 
this model 
allows users to observe and adapt to the platform's algorithm. 
\\[-20pt]

\paragraph{Users strategize in order to optimize their long-term outcomes.}
In Section \ref{sec:user_strat},
we use this model to define what it means for a user to be strategic.
We call a user ``naive'' if they adopt a best-response strategy,
i.e., if at every time step, they behave as though they are unaware that an algorithm uses their behavior to updates its future propositions. 
In contrast, we call a user ``strategic'' if they anticipate how their next action may affect the platform's future propositions and, in turn, the user's long-term payoff. 
That is, the strategic user understands that their actions can affect the platform's long-term behavior and adapt accordingly. 

From a technical perspective, our model of strategization
departs from previous models of non-myopic agents
in that we consider users who optimize their {\em limiting payoff}
rather than a sum of their discounted future payoffs. 
We leverage 
the notion of a \emph{globally stable set} \cite{frick2020stability}
from the misspecified learning literature
to study the platform's limiting behavior. 
This approach (a) allows us to bypass the often difficult calculations that result when optimizing a discounted sum;
and (b) better aligns with our intuition about strategization,
where users think abstractly about {\em how} platforms use 
their data, without much regard for {\em when} 
the consequences will arise. 
{As a result, our model accounts for the fact that
delay-based mechanisms (i.e., algorithms that wait to incorporate new data) 
are unlikely to ``solve'' strategization even though these mechanisms would do
so in a discounted-sum formulation.}
\\[-20pt]

\paragraph{Main results: User strategization can both help and hurt the platform.}
In Section \ref{sec:main_results}, 
we  present our main results.
We first find that, when the platform's algorithm is fixed, 
user strategization can \emph{improve} the platform's payoff 
when the user's and platform's payoffs are sufficiently aligned. 
We then show that
user strategization can \emph{mislead the platform by distorting the data that the platform collects}. 
Specifically, when a user is strategic, 
the data the platform collects under one algorithm is not reflective of the user's behavior under a different algorithm (or in a different context). 
As a result, user strategization can result in unexpected phenomena---for example,
we demonstrate that expanding the platform's hypothesis class (an action that typically improves platform performance) can actually lower the platform's payoff when users are strategic. 
As expected, \emph{these difficulties disappear when the user is not strategic}, 
i.e., when the user behaves exogenously, 
which suggests that platforms should design algorithms that do not incentivize strategization.  
\\[-20pt]

\paragraph{Trustworthy design can mitigate user strategization.}
To counter strategization,
platforms are left with a few options. 
They could simply use the collected data and risk drawing incorrect inferences.
They could post-process the data, 
but ``correcting'' the data by removing the effect of strategization is extremely challenging because a user can strategize for many different reasons and along many different axes. 
Alternatively, platforms could ``correct'' for strategic behavior by gathering more data.
In Section~\ref{sec:trust_sec}, 
we discuss when and how these approaches can fall short. 

We then discuss how another approach---designing trustworthy algorithms---can improve user and platform outcomes.
Formally, we define a $\kappa$-trustworthy algorithm 
as one that (i) does not incentivize users to strategize and 
(ii) guarantees that the user's payoff is at least $\kappa \geq 0$. 
We connect this definition to existing notions of trust,
e.g., trust as ``encapsulated interest'' \cite{hardin2006trust}.
We conclude by surfacing the reasons why users strategize, 
then discussing two interventions for trustworthy design: 
offering multiple algorithms and providing feedback mechanisms. 
Operationally,
these two mechanisms allow platforms to induce behavior that looks approximately exogenous while uncovering what users find untrustworthy. 
\section{Related work}
\label{sec:rel_work}
Our work draws inspiration from extensive lines of work spanning computer science,
economics, game theory, and the social sciences. 
We summarize a few of the most related 
areas to our work below. 
\Cref{app:full_related_work} contains 
an extended account of related work.

Conceptually related to this work are recent models of user behavior on 
recommender systems that remove the traditional ``fixed preferences''
assumption (e.g., \citep{Kleinberg2022-wy,Haupt2023-vo,Cen2022-bc}).
In short, our work differs from prior work in at least three ways:
(a) we study users that {\em adapt to the algorithm that a platform uses};
(b) we focus on the effects of this adaption on platforms;
and (c) our analysis applies to any data-driven platform that involves human
interaction. We further disambiguate from these models in \cref{app:full_related_work}.
Prior work has also empirically documented user adaptation to their 
algorithms, most notably in the context of TikTok 
\citep{Siles2022-yv,Kim2023-tv,Simpson2022-vg}.
The closest to our work is subsequent 
work by \citet{empirical}, 
who adapt our model to 
directly test for user strategization in recommendation settings
using a lab experiment.

From the economics literature, our work leverages existing 
results on misspecified 
learning \citep{Fudenberg2017-rb,Fudenberg2021-tu,bohren2016informational,Bohren2021-rs,frick2020stability} 
(in particular, that of \citet{frick2020stability})
to characterize the platform's limiting behavior. 
Furthermore, our definitions of strategization and trustworthiness are closely
related to concepts in mechanism design
and is thus connected to mechanism 
design for repeated games and auctions 
\citep{Mailath2006-oy, Bergin1993-ly, Kalai1993-ze}. 
In particular, part of the definition of trustworthiness 
is an adaptation of incentive-compatibility to our setting.

From the computer science literature, our model resembles  
that of strategic classification as well as its many variants
\citep{Levanon2022-zx,Bruckner2012-fs,Hardt2016-ks,Dong2018-rm,Ghalme2021-nf,Zrnic2021-vh,Haghtalab2022-bv}.
Compared to these works, in our model {\em all} of the following are true:
(a) users have idiosyncratic utility functions that they can optimize arbitrarily,
and are not bound to small perturbations of some ``ground-truth'' features;
(b) the platform uses a pre-specified learning algorithm (Bayesian updating);
(c) users declare a behavior {\em strategy} rather than an action;
(d) users can be myopic or non-myopic in how they interact with the platform
(and our model of non-myopic users differs from that of prior work 
\citep{Haghtalab2022-bv}).
Strategic classification is a special case of performative 
prediction \citep{Perdomo2020-za}, and the latter is also 
relevant to our work---in particular, our characterization 
of platform convergence can be viewed as a performative 
stability analysis. 
We then study the effect of data gathered in a performative setting 
when used in other contexts.

\Cref{app:full_related_work} provides an extended comparison 
with the works mentioned above, and also places our results 
into context with works from multi-agent learning 
\citep{Wu2020-bl,Hadfield-Menell2016-mv,Zhang2021-on}
and Human-AI collaboration 
\citep{Okamura2020-eu, Hou2023-cb, Ezer2019-zl, Bao2021-fn}.

\section{Model}
\label{sec:model}

In this section, we present our game-theoretic model of the interactions between a user and their data-driven platform. 
At its core, the model comprises a repeated, two-player game.
At every time step, the platform generates \emph{propositions} (e.g., recommendations).
The user then responds to these propositions  with  {\em behaviors}
(e.g., chooses whether to engage with a recommendation).

A key feature of our model is that there is no ``ground-truth'' user behavior.
In particular, 
the way that the user responds to propositions
may depend on {\em how} the platform generates them.  
Our model therefore departs from earlier works that assume a
user's behaviors are drawn from a single fixed, unknown distribution.

Since
we are interested in how users adapt to their platforms,
we study the setting where the platform first declares the
strategy it will use to generate propositions. 
The user then decides on how they wish to behave, 
with full knowledge of the platform's intended strategy.\footnote{While 
full user knowledge is a strong assumption, 
the effects we discuss here will only be exacerbated
when users have an imperfect model of the platform.
We leave {studying this case} 
to future work.}

\begin{figure}[t]
	\centering
	\includegraphics[width=\textwidth]{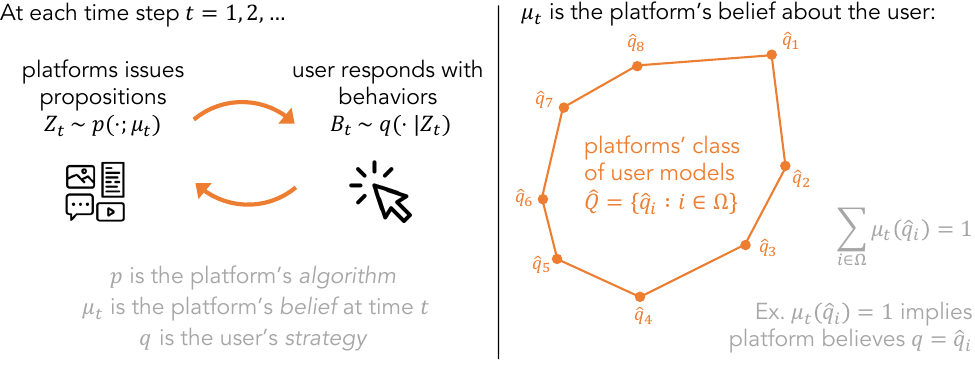}
	\caption{{\bf Illustration of the setup described in Section \ref{sec:model}}.
		{ (Left)} At each time step $t$, 
		the platform issues propositions $Z_t$, 
		and the user responds with behaviors $B_t$.
		The user's actions are determined by  
		their strategy $q: \cZ \rightarrow \Delta(\cB)$. 
		The platform's are determined by the algorithm $p$,
		the hypothesis class $\hcQ$, 
		and the platform's belief $\mu_t$ over the hypothesis class at time $t$. 
		{ (Right)} The platform's actions at time $t$ depend on its belief $\mu_t$.
		Here, $\mu_t$ is a distribution (i.e., set of weights) over $\hcQ$ such
		that $\mu_t ( \hq_i )$ denotes the probability that the platform assigns
		to the user model $q = \hq_i$ at time $t$.}
		\label{fig:setup}
\end{figure}

\subsection{Setup}
\label{sec:setup}
We model the interactions between a user and their platform as a two-player game. Formally,
{for some $d_1, d_2 > 0$},
let $\cZ \subset \bbR^{d_1}$ and $\cB \subset \bbR^{d_2}$ denote the action spaces of the platform and user, 
respectively, {where we assume that $\cB$ is finite}.
Then, at every time step $t = 0, 1, 2, \hdots$, 
\begin{enumerate}[itemsep=0pt]
	\item The platform generates \emph{propositions} $Z_t \in \cZ$.

	\item The user responds with \emph{behavior} $B_t \in \cB$,
	drawn from the {conditional} distribution $q( \cdot | Z_t)$.
	
	\item {The platform and user collect payoffs $V(Z_t, B_t)$ and $U(Z_t, B_t)$, respectively.}
\end{enumerate}
{This setup is given on the  left side of Figure \ref{fig:setup}.}
{W}e use the shorthand $q: \cZ \rightarrow \Delta(\cB)$ to denote 
{the user's {\em strategy}, i.e.,
mappings from propositions $Z$ to behavior distributions $q(\cdot|Z)$ for all $Z \in \cZ$.}
Throughout this work, $\Delta(X)$ denotes the simplex over a probability space $X$.
Furthermore, we assume that the payoffs $V$ and $U$ 
as well as the action spaces $\cZ$ and $\cB$ are {\em exogeneous},
i.e., they are pre-specified.
We assume that the payoffs $U$ and $V$ are bounded and scaled between $[0, 1]$.

\vspace*{-1em}
\paragraph{Generating propositions.}
To generate propositions, 
the platform first constructs an estimate of the user's 
strategy $q \in \cQ$. 
In particular, 
the platform believes that $q$ belongs to some set $\hcQ$,
which we assume to be finite.
$\hcQ$ can be thought of as the platform's hypothesis class,
with each
$\hq_i \in \hcQ$ 
being one of the platform's ``user models.''
The platform constructs its estimate of $q$ using
a \emph{belief} $\mu_t \in \Delta(\hcQ)$ over the hypothesis class $\hcQ$.
For instance, 
$\mu_t (\hq_i) = 1$ means that the platform believes 
$q = \hat{q}_i$ with full certainty at time step $t$.
When there is no user model that matches the user's chosen strategy (i.e., $q \not\in \hcQ$), 
we say that the platform is {\em misspecified}.

At each time step, the platform 
{uses a {\em (proposition) algorithm}}
$\smash{p: \Delta(\hcQ) \to \Delta(\cZ)}$ 
{to map its} belief $\mu_t$ to a distribution over propositions.
That is, at each time $t$, 
the platform uses its current belief $\mu_t$ to sample a proposition 
$Z_t \sim \pmu{\mu_t}{\cdot}$,
as shown on the right side of Figure~\ref{fig:setup}.\footnote{ 
One can think of $\mu_t$ as parameters of a
machine learning model trained on the data gathered until time
step $t$.}
Intuitively, the algorithm $p$ {captures} whether the platform chooses to maximize revenue, social welfare, 
or any other objective (based on its current belief $\mu_t$).

\paragraph{Bayesian updating.}
The platform incorporates its observations into its belief using Bayesian updating. 
Specifically, at each time step $t$, 
the platform updates its belief $\mu_t$ to $\mu_{t+1}$
based on the user's response $B_t$ to the platform's most recent proposition $Z_t$
 using 
Bayes' rule:
\begin{align}
	\mu_{t+1} (\hq_i) =
	\frac{\mu_{t}(\hq_i) \cdot \hq_{i} (B_t | Z_t)}{\sum_{j \in \Omega} \mu_{t}(\hq_j) \cdot \hq_{j} (B_t | Z_t) } ,
	\qquad 
	\forall\ \hq_i \in \hcQ
	.
	\label{eq:belief_update}
\end{align}
We assume that the initial belief
$\mu_0 \in\Delta(\hcQ)$ has full support, i.e., assigns a positive probability
to every possible model $\smash{\hq \in \hcQ}$.
Although we assume that the platform updates its beliefs in a Bayesian way, 
it is possible to generalize our findings to other update strategies (e.g., empirical risk minimization). 
As we discuss in Section \ref{sec:strat_user}, 
Bayesian updating allows us to leverage recent results and  precisely characterize the platform's limiting behavior as $t \rightarrow \infty$.

\paragraph{Committing to strategies.}
Thus far, 
we have {instantiated a} repeated, 
two-player game 
{between a user and their platform}. 
The platform generates propositions 
{
	by applying its {\em algorithm} $p$ to its (evolving) belief
	$\mu_t \in \Delta(\hcQ)$ in order to sample $Z_t \sim \pmu{\mu_t}{\cdot}$.
}
The user responds with behaviors $B_t \sim q(\cdot | Z_t)$. 
Note that, for a  fixed {action spaces, payoffs, and initial beliefs} 
$( \cB, \cZ, U, V , \mu_0 )$,
the user's and platform's actions are fully determined by $q$, $p$, and $\hcQ$. 
We therefore refer to $q$ as the \emph{user's strategy}
and the tuple $(p, \hcQ)$ as the {\em platform's strategy} (see Table \ref{tab:key_notation}).

In this work, we are interested in how users adapt to their platforms. We therefore study the setting in which the platform commits to a strategy $(p,
\hcQ)$ at the start of the game, after which the user chooses their strategy $q$, which may depend on $(p, \hcQ)$. In order to characterize user adaptation, we study the idealized setting in which the user has perfect knowledge of $(p,
\hcQ)$.

\subsection{Examples}

Our model captures a variety of data-driven settings. 
For instance, 
the interactions between a user and their recommender system (e.g., Netflix or
Facebook) can be viewed as a repeated, two-player game.
(Indeed, we provide a detailed recommender system example in Section
\ref{sec:example}).
Our model also captures other contexts, 
such as data-driven hiring and ride-share matching, 
as detailed below. 

\begin{example}[Hiring example]\label{ex:hiring_1}
	Suppose
	an employer (i.e., the user) uses a data-driven hiring platform to determine which job candidates to interview. 
	Then, at each time step $t$,
	$Z_t$ contains the set of candidates at time $t$ and the scores that the hiring platform assigns to the candidates. 
	$B_t$ denotes the employer's decisions (i.e., which candidates are interviewed) and the final outcomes (i.e., who is hired). 
	The employer's hiring strategy is therefore captured by $q$.
	In this context, $V(Z_t, B_t)$ is the payoff that the platform receives based on the scored candidate $Z_t$ and the employer's decisions $B_t$, 
	e.g., it could reflect a pay-per-success scheme, scaling with how many top-ranked
	applicants are successfully hired. 
	$U(Z_t, B_t)$ is the employer's corresponding payoff, 
	e.g., how many people it successfully hires using the tool. 
	
	The hiring platform's goal is to learn the employer's preferences so that it can provide scores attuned to the employer's needs. 
	It does so by first assuming that the employer's decision mechanism $q$ 
	belongs to a hypothesis class $\hQ$, 
	which describes the set of employers that the platform can model. 
	As it receives more data on the employer's hiring practices, 
	it updates its belief $\mu_t$ about the employer's preferences. 
	Given its current belief $\mu_t$, the platform generates
	candidates by sampling from a distribution $p(\cdot; \mu_t)$.
	 $p$ the \underline{\smash{algorithm}} that the hiring platform uses. 
	For instance, $p$ might select the candidates it believes the company will 
	be most likely to hire (according to $\mu_t$), 
	or $p$ could be a bandit-like algorithm that ensures exploration.
\end{example}

\begin{example}[Uber example]\label{ex:uber_1}
	Suppose
	drivers use a ride-sharing app---say, Uber---to find riders. 
	At each time step,
	Uber (i.e., the platform) proposes a ride $Z_t$ and the driver (i.e., the user) decides 
	whether to accept, reject,
	or accept then cancel the ride, 
	as denoted by $B_t$.
	Uber's payoff from the interaction is $V(Z_t, B_t)$,
	which could be a constant fraction of the ride's cost if the driver accepts it.
	Similarly, $U(Z_t, B_t)$ is the driver's payoff,
	which might be zero if they decline the ride and otherwise depend on the ride payment or whether the ride takes the driver closer to
	home,
	among other factors.
	
	Uber's goal is to match drivers and riders. 
	To minimize delays, 
	Uber attempts to learn each driver's habits and preferences $q$. 
	Specifically, it assumes that $q$ (i.e., how the driver
	accepts, declines, or cancels rides) belongs to some class $\hcQ$. 
	After observing the driver's behavior, 
	Uber updates its belief $\mu_t$ about $q$,
	and then uses its belief along with its {algorithm} $p$ 
	to better match riders and drivers.
	For instance, whether the platform maximizes revenue, minimizes cancellations, seeks to keep drivers in their preferred zones, or a combination of these factors is reflected in $p$.
\end{example}

\begin{table}[t]
	\caption{Key concepts and notation from \cref{sec:model,sec:user_strat}.}
	\label{tab:key_notation}
	\begin{tabularx}{\textwidth}{lcX} 
		\toprule
		{\bf Object} & {\bf Symbol} & {\bf Description} \\ \midrule
		Proposition space
		& $\cZ$ & Platform action space, subset of $\mathbb{R}^{d_1}$ (exogeneous) 
		\\ \specialrule{0.1pt}{2pt}{2pt}
		Behavior space & $\cB$ & User action space, finite subset of $\mathbb{R}^{d_2}$ (exogenous)
		\\\specialrule{0.1pt}{2pt}{2pt}
		Payoff functions & $U, V$ & Functions that map $\cZ \times \cB$ to $\mathbb{R}$ (exogeneous) 
		\\\specialrule{0.1pt}{2pt}{2pt}
		User strategy & $q$ & Mapping $q : \cZ \rightarrow \Delta(\cB)$ such that $B_t \sim q(\cdot|Z_t)$ 
		\\ \specialrule{0.1pt}{2pt}{2pt}
		Hypothesis class & $\hcQ$ & Finite set of models $\{\hq_i\!:\! i\! \in\! \Omega\}$ that platform uses to estimate $q$
		\\\specialrule{0.1pt}{2pt}{2pt}
		Platform belief  & $\mu_t$ & Distribution over $\hcQ$   
		\\ \specialrule{0.1pt}{2pt}{2pt}
		Platform algorithm & $p$ & Function that maps a belief $\mu \in \Delta(\hcQ)$ to a distribution in $\Delta(\cZ)$ such that  $Z_t \sim \pmu{\mu_t}{\cdot}$ 
		\\
		\bottomrule
	\end{tabularx}
\end{table}

\vspace*{-1em}
\section{User strategization}
\label{sec:user_strat}
In our model (\cref{sec:model}), 
the user selects their strategy~$q$ after the platform has declared its strategy $\smash{(p, \hcQ)}$. 
Although our model allows {us to analyze} a wide range of user behaviors,
there are two types of users of particular interest.
The first type---a \emph{naive} user---behaves without regard for how their current actions affect future outcomes by playing actions that maximize their immediate payoff.
On the other hand, 
a \emph{strategic} user plans ahead; 
they adapt their {strategy} to the platform's strategy 
$(p, \hcQ)$ in order to elicit high payoffs in the long run.
In this way, 
a strategic user's behavior is {dependent on} $(p, \hcQ)$,
whereas a naive user's behavior remains the same across different choices of $(p, \hcQ)$. 
We formalize these two types of users below and visualize 
{their behavior} in Figure \ref{fig:strategization}.

\subsection{Naive user} \label{sec:BR_user}
At each time step $t$, 
the {naive user} chooses the
action $B_t$ that maximizes
their payoff $U$ under the given proposition $Z_t$, as defined next.
If multiple behaviors $B \in \cB$ maximize the immediate payoff,
we assume the user chooses between them uniformly at random.
\begin{definition}[Naive user]
	\label{def:naive_user}
	The \emph{naive} user adopts the strategy $q^\BR$, defined as
	\[
	q^\BR(B | Z) \propto \ind{B \in \arg \max_{B \in \cB} U(Z, B)} , \qquad \forall\ B \in \cB, Z \in \cZ .
	\] 
\end{definition}
\noindent Importantly, 
a {naive} user's strategy $q^\BR$ is independent of the platform's strategy; 
that is, $q^\BR$ remains the same across all choices of $(p, \hcQ)$.

\subsection{Strategic user} \label{sec:strat_user}
In contrast {to a naive user}, 
a \emph{strategic} user maximizes their {\em long-term expected payoff}. 
A strategic user does this by
first considering how the platform's belief $\mu_t$ evolves as $t \rightarrow \infty$ if the user adopts some strategy $q$. 
The user uses this understanding to predict its payoff under all possible strategies $q$ as $t \rightarrow \infty$, 
then selects a strategy that achieves the highest long-term payoff. 

{We formalize this idea in two stages.}
First, 
we define the notion of a globally stable set, 
which tells us how the platform's beliefs evolve as $t \rightarrow \infty$ for a given $(q, p, \hcQ)$.

\begin{figure}[t]
	\centering
	\includegraphics[width=\textwidth]{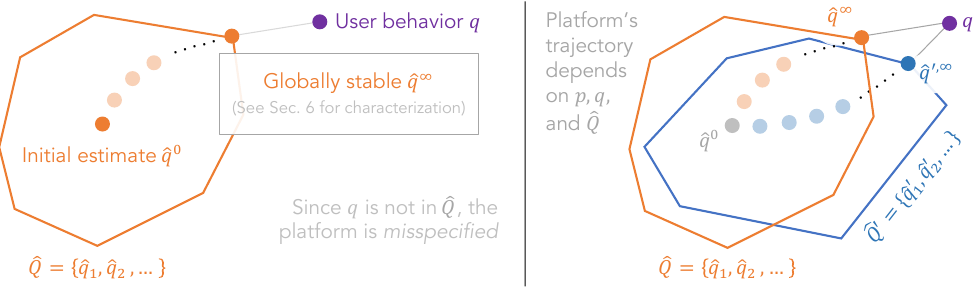}
	\caption{
		{\bf Convergence of platform beliefs about the user as $t \rightarrow \infty$.}
		{ (Left)} Suppose the user adopts strategy $q$,
		and the platform begins with an initial belief $\smash{\mu_0}$. 
		For
		illustrative purposes, we visualize the platform's initial belief using the corresponding estimate 
		$\smash{\hq^0}$,
		and we use the orange polygon to represent $\text{ConvexHull}(\hcQ)$.
		(In this figure, superscripts on $q$ represent time steps, and subscripts index hypotheses/models in $\hcQ$.)
	    As the platform collects data, 
		its estimate evolves,
		eventually converging. 
		The beliefs to which the platform converges is given by the \emph{globally stable set}, as defined in \cref{def:stable};
		in this figure, we visualize the globally stable set $\hcQ_\infty \subset \hcQ$ as a singleton set $\hcQ_\infty = \{\hq^\infty\}$
		such that the platform's limiting belief under $(q, p, \hcQ)$ is the point-mass belief
		$\mu_\infty = \delta_{\hq^\infty}$. 
		{ (Right)} As formalized in \cref{def:stable},
		the 
		belief to which the platform converges 
		depends on the platform's strategy $\smash{(p, \hcQ)}$ and the user's strategy $q$.
		We illustrate this dependence
		by visualizing how changing the platform's hypothesis class 
		(from $\hcQ$ to $\hcQ'$) affects the platform's 
		limiting belief (from $\delta_{\hq^\infty}$ to $\delta_{\hq'^{,\infty}}$).}
		\label{fig:stable}
\end{figure}

\begin{definition}[Globally stable set]\label{def:stable}
	A set $\hcQ_\infty \subset \hcQ$ is $(q, p, \hcQ)$-\emph{globally stable} under
	hypothesis class $\hcQ$, algorithm $p$, and user strategy $q$ if and only if, 
	for any full-support initial belief $\mu_0$,
	$$\bbP (\mu_t ( \hcQ_\infty ) \rightarrow 1 ) = 1 \qquad \text{ as } \qquad t \to \infty,$$ 
	where the probability above is taken with respect to the
	dynamics given in \cref{sec:model}.
\end{definition}
A set $\hcQ_\infty \subset \hcQ$ is
\emph{globally stable} if and only if it contains the support of the platform's
limiting belief (i.e., of $\mu_t$ as $t \rightarrow \infty$) under strategies $(q, p, \hcQ)$.
In Section \ref{sec:main_results}, we discuss one way to ascertain the globally stable set, 
i.e., the platform's limiting belief.

Next, we define the platform and user's expected payoffs.
\begin{definition}[Expected payoffs]
	Consider a distribution $r \in \Delta(\cZ)$ over propositions $\cZ$
	and a user strategy $q \in \cQ$. 
	Then, the platform's and user's \emph{expected payoffs} under $(r, q)$ are
	\begin{align}
		\brV(r, q) &\coloneqq \bbE \left[ 
		V(Z, B) \right], \nonumber
		\\
		\brU(r, q) &\coloneqq  \bbE
		\left[ U(Z, B) - \lambda \cdot d_{TV}(q(\cdot|Z) , q^\BR(\cdot|Z) ) \right] , \label{eq:U_bar}
	\end{align}
	where the expectations are taken with respect to $Z \sim r$ and $B | Z \sim q(\cdot | Z)$, 
	$d_\cQ$ is {some} distance metric
	over $\Delta(\cB)$, 
	and $\lambda \geq 0$. 
	The penalty term $\lambda \cdot d_{TV}(q(\cdot|Z), q^\BR(\cdot|Z))$ 
	in \eqref{eq:U_bar} captures
	the effort that the user expends to deviate from their naive (best-response) behavior.
\end{definition}
Equipped with these definitions, we can now define the {strategic user} as
a user who maximizes their expected payoff under the platform's worst-case,
limiting behavior.
Since different choices of $(q, p, \hcQ)$ {can} induce different globally 
stable sets (\cref{def:stable}),
we define the strategic user with respect to a {\em function} 
$S(q, p, \hcQ)$ that maps $(q, p, \hcQ)$ to {a} corresponding globally stable
set.

\begin{definition}[Strategic user]
	\label{def:S-strat-user}
		{Let}
		$S(q, p, \hcQ)$ 
		{be a function} 
		that maps a user strategy $q$ and platform strategy $(p, \hcQ)$ to a $(q, p, \hcQ)$-globally stable set $\hcQ_\infty$, as defined in \cref{def:stable}.
	{Then, we define the $S$-strategic user as a user who adopts the strategy $q_S^\star(p, \hcQ)$, where}
	\begin{align}
		\label{eq:S-strat}
		q_S^\star(p, \hcQ) & \in \arg \max_{q \in \cQ} 
		\min_{\mu \in \Delta(S(q, p, \hcQ ))}  \brU ( \pmu{{\mu}}{\cdot} , q ).
	\end{align}
\end{definition}

\paragraph{Understanding the definition.}
To tease apart \cref{def:S-strat-user}, 
consider each component of \eqref{eq:S-strat}.
First, recall {from Definition \ref{def:stable}} that a set of user models $\hcQ$ is {\emph{globally stable}} if it contains all the user models to which the platform assigns positive probability as $t \rightarrow \infty$. 
That is, if the user and platform adopt strategies $(q, p, \hcQ)$
and {$S$ is as defined in Definition \ref{def:S-strat-user}}, 
the platform's beliefs as $t \rightarrow \infty$ are contained in $\Delta(S(q, p, \hcQ))$.
Second, 
note that $\brU(\pmu{{\mu}}{\cdot}, \cdot)$ is the user's expected payoff if the platform uses algorithm $p$ to generate propositions under belief $\mu$. 
Putting these two observations together, 
\[
	\min_{\mu \in \Delta(S(q, p, \hcQ ))}  \brU ( \pmu{{\mu}}{\cdot}, q ) ,
\] 
is the $S$-strategic user's envisioned worst-case expected payoff as $t \rightarrow \infty$. 
The user then chooses the strategy $q \in \cQ$ that maximizes this worst-case, limiting payoff.
Therefore, 
{a {$S$-}strategic user maximizes their worst-case limiting payoff under the chosen strategies $(q, p, \hcQ)$ and mapping $S$.}

Importantly, a strategic user pays attention to the platform's strategy $(p, \hcQ)$ whereas a naive user's $q^\BR$ is the same regardless of the platform's chosen strategy.
We illustrate the differences between naive and strategic users in
Figure \ref{fig:strategization}.

\begin{figure}[t]
	\centering
	\includegraphics[width=\textwidth]{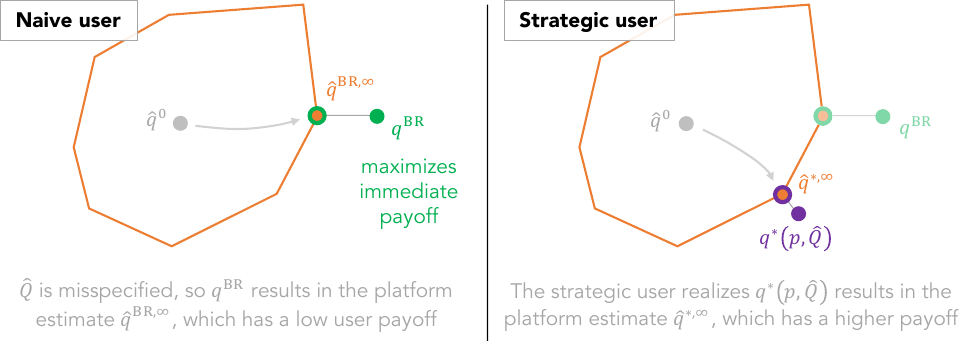}
	\caption{
		{\bf Illustration of a naive user (Section \ref{sec:BR_user}) and a strategic user (Section \ref{sec:strat_user}).}
		{ (Left)}
		The (convex hull of the) platform's hypothesis class
		$\hcQ$ 
		is given by the orange polygon.
		The naive user's strategy $q^\BR$ is given by the solid green dot.
		As in Figure \ref{fig:stable}, 
		the platform's estimate of $q^\BR$ evolves as $t \rightarrow \infty$;
		we visualize the limiting estimate as $\hq_i = \hq^{\BR, \infty}$.
		{ (Right)} 
		The strategic user considers their payoff under the platform's limiting estimate,
		i.e., $\smash{\protect \brU(p^{\delta_{\hq^{\BR, \infty}}}, q^\BR)}$ and finds that they can instead adopt the strategy $q^*(p, \hcQ)$ that leads the platform to a belief (and in turn, a proposition distribution) that is more favorable
		for the user, 
		i.e., $\smash{\protect \brU(p^{\delta_{\hq^{*, \infty}}}, q^*(p, \hcQ))} > \smash{\protect \brU(p^{\delta_{\hq^{\BR, \infty}}}, q^\BR)}$. 
	}
	\label{fig:strategization}
\end{figure}

\paragraph{Remarks on globally stable sets.}
A few remarks are in order. 
First, there is no unique globally stable set, in general. 
For instance, the entire {hypothesis class} $\hcQ$ is trivially {a} globally stable {set}  because $\mu_t(\hcQ) = 1$ for all $t$. 
More fine-grained stable sets  provide stronger analyses. 
We will show in Section \ref{sec:main_results} that there is a principled way to obtain 
{rather fine-grained} globally stable sets.

Second, 
in some cases, 
we will show the existence a globally stable {\em singleton} set, 
i.e., $S^\infty(q, p, \hcQ) = \{\hq^\infty\}$. 
Figure \ref{fig:stable} illustrates on such example.
In this case, 
the platform converges (almost surely) to the point mass 
belief
$\mu_\infty = \delta_{\hq^\infty}$ that depends on $(q, p, \hcQ)$,
and thus that the platform
generates propositions 
from $\pmu{{\mu_\infty}}{\cdot}$. 
As we discuss later on, 
we cannot always guarantee that the platform converges to a single unique belief $\mu_\infty$.
When a unique limiting belief does not exist, 
the globally stable set is our next best tool, 
as it characterizes the set of \emph{possible} limiting beliefs.

Finally, even though strategization is defined with respect to a function $S$,
when $S$ is clear from context we omit it and say ``strategic user'' instead of ``$S$-strategic user.''
Similarly, we omit $S$ or $\hcQ$ from our notation for the strategic user's strategy $q^\star_S(p, \hcQ)$ 
(see \eqref{eq:S-strat}) when clear from context.

\section{Stylized Example}
\label{sec:example}
What are the implications of user strategization? Is it good or bad for platforms?
In this section, 
we consider a styled setting that allows us to answer these questions.
Our goal is to 
illustrate and motivate our main results,
which we establish in their full generality in Section \ref{sec:main_results}. 
	
	At a high level, we consider a simple recommender system that partitions its users into ``types''
	(e.g., comedy lovers and horror lovers) 
	and recommends content based on these types. 
	The platform's payoff is high when the user engages with the recommendations, 
	and the user's payoff depends on both their engagement and their personal taste. 
	Within this setting, 
	we show that:
	\begin{enumerate}[itemsep=0pt]
		\item
		The user is incentivized to be strategic, i.e., there is a user
		strategy that guarantees higher payoff than the
		naive (best-response) strategy $q^\BR$ defined in \cref{def:naive_user}
		(\cref{prop:illustrative_strat_happens}).
		
		\item
		Whenever the recommendation strategy $(p, \hcQ)$ is fixed, 
		the platform's payoff is \emph{never lower} when the user 
		behaves strategically than it is when the user is naive
		(\cref{prop:illustrative_improve_utility}).
		
		\item At the same time,
		user strategization can indirectly hurt the platform. 
		Specifically, suppose that the platform wishes to update its recommendation algorithm from $p$
		to a counterfactual algorithm $p_\text{CF}$. 
		We show that the data that a platform obtains under $p$ cannot be used to reliably make inferences under $\pcf$
		(\cref{prop:illustrative_cfx_bad}).
		
		\item In the same vein, user strategization makes it harder to predict the effect of design choices. 
		Specifically, we show that, when the user is strategic, 
		expanding the hypothesis class $\hcQ$ can unexpectedly hurt the platform's payoff.
		That is, if the platform uses a richer class of user models to estimate the user's preferences, 
		the platform's payoff can actually \emph{decrease} when the user is strategic. 
		(\cref{prop:illustrative_expand_omega})
	\end{enumerate}

\begin{figure}[t]
	\centering
	\includegraphics[width=.95\textwidth]{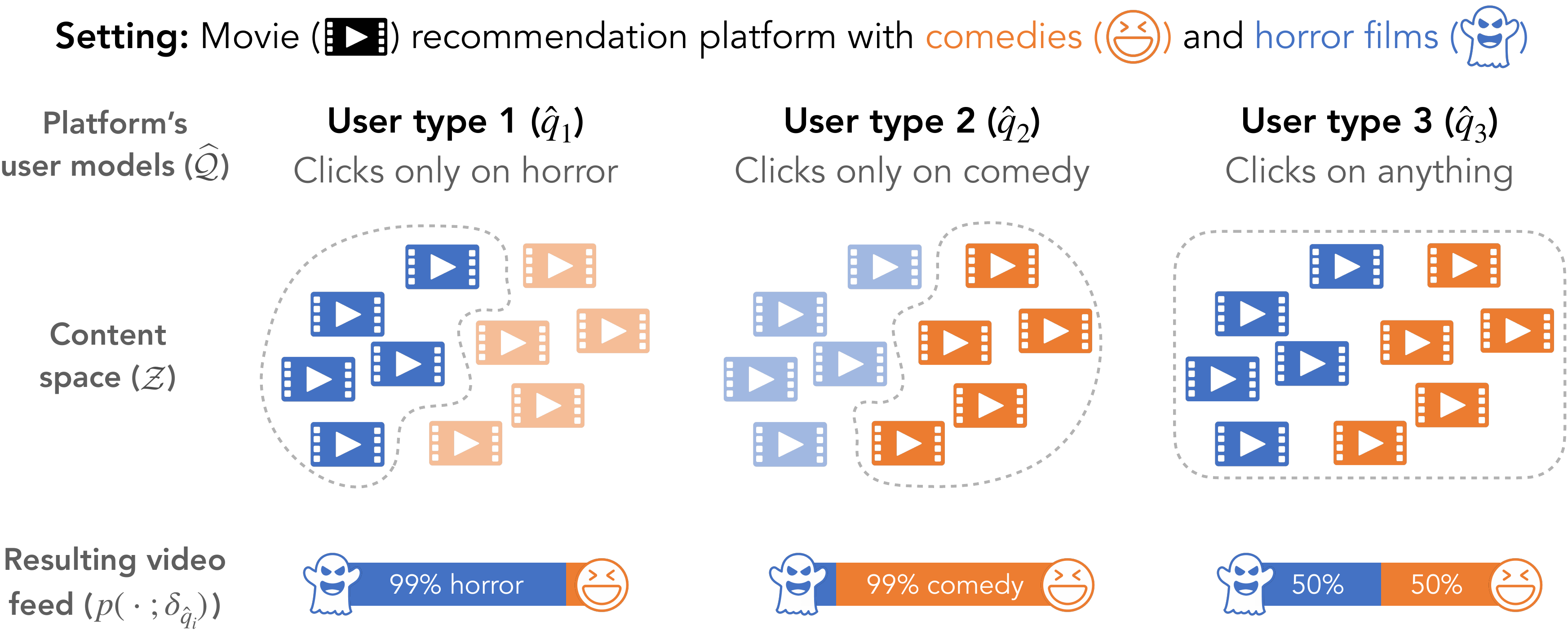}
	\caption{{\bf The recommender system that we consider 
	in our stylized example (Section \ref{sec:example}).}
	The platform's hypothesis class consists of three 
	user models. Under one model, the user watches exclusively
	horror movies; under the other, exclusively comedies; and under the last 
	model, the user is equally interested in comedy and horror.
	The platform represents the user as a convex combination 
	of these models, which dictates the recommendations 
	that the platform gives.
	For example, as shown at the bottom of the figure, 
	if the platform believes the user is of type 1, 
	then the platform shows the user horror movies with 99\% probability. 
	}
	\label{fig:illustrative_setup}
\end{figure}

\subsection{A simple recommender system}
\label{sec:illustrative_setting}
 
Consider a platform that recommends from a finite set of items
$\cZ$ and allows the user to click or ignore the recommended item (i.e., let $\cB = \{0, 1\}$).
Let the platform's and user's payoffs be 
\begin{equation}
	\begin{aligned}
		V(Z, B) = B ,  \qquad
		U(Z, B) = B \cdot a(Z),
	\end{aligned}	\label{eq:user_payoff}
\end{equation}
for all $Z \in \cZ$ and $B \in \cB$, 
where $a : \cZ \rightarrow \{-1, 1\}$ is a fixed function that encodes 
the user's affinity for item $Z$.
As such, the platform gains utility when the user clicks on any recommendation,
and the user gets (possibly negative) utility $a(Z)$ from clicking on an item $Z$, 
and $0$ from not clicking.
We summarize our setup in \cref{fig:illustrative_setup}.

\paragraph{User types and the platform's hypothesis class.}
Suppose that there are two disjoint types of content on the platform,
$\cZ_1$ and $\cZ_2$ (e.g., horror movies and comedies) 
with $\cZ_2 = \cZ \setminus \cZ_1$.
Let there be three types of users: those who prefer $\cZ_1$ 
(i.e., $a(Z) \leq \ind{Z \in \cZ_1}$),
those who prefer $\cZ_2$
(i.e., $a(Z) \leq \ind{Z \in \cZ_2}$),
and those who fall into neither of the two former categories.
Note that $a(Z) \leq \ind{Z \in \cZ_1}$ means that the user definitely does not enjoy anything outside of $\cZ_1$ and may sometimes enjoy content in $\cZ_1$. 

The platform is aware of the three user types, 
but it does not {know} the correct partitioning 
$(\cZ_1, \cZ_2)$ 
and instead believes that the two kinds of content are $(\cZ_A, \cZ_B)$.
This setting is common and can be generalized to capture instances in which the platform does not account for all possible users 
(e.g., there are ``minority'' users that do not follow mainstream trends).

The platform thus
estimates user behavior using the hypothesis class $\hcQ = \{\hq_1, \hq_2, \hq_3\}$, 
where
\begin{equation}
	\label{eq:parameters}
	\hq_i(B=1|Z) = \begin{cases}
		(1 - \gamma)\ind{Z \in \cZ_A} &\text{if } i = 1 , \\
		(1 - \gamma)\ind{Z \in \cZ_B} &\text{if } i = 2 , \\
		(1 - \gamma)\ind{Z \in \cZ} &\text{if } i = 3
	\end{cases}
	\qquad \forall\ Z \in \cZ ,
\end{equation}
and $\gamma > 0$ is a constant that captures the fact that users 
will not click on {\em all} items of a given type.

{Intuitively, 
the platform believes there are three possible users:
users tend to like either only content in $\cZ_A$, 
or only content in $\cZ_B$, or all content. 
Formally, this means that under the user model $\hq_1$ the user only clicks on items in $\cZ_A$ and does so with probability $1 - \gamma$. 
Under $\hq_2$, the user behaves analogously toward $\cZ_B$. 
Under $\hq_3$, the user clicks on any item with probability $1 - \gamma$.}
\paragraph{Recommendation algorithm.} 
Finally, suppose that the platform uses a simple
algorithm $p$ that with small probability $\eps$ (which we specify later) 
recommends a random item $Z \in \cZ$, and
otherwise recommends $Z_i$ with probability proportional to its likelihood
of inducing a ``click'' from the user (under the platform's current belief about the user). 
That is, for all $Z \in \cZ$, 
\begin{align}
	\pmu{\mu}{Z} \coloneqq
	\underbrace{ \varepsilon \cdot 
		\frac{1}{|\cZ|}}_{\text{Uniform w.p. $\varepsilon$}} + 
	\underbrace{(1 - \varepsilon) \cdot 
	\frac{\sum_{\hq_i \in \hcQ} \mu(\hq_i) \cdot \hq_i(B=1|Z)}
		 {\sum_{Z \in \cZ} \sum_{\hq_i \in \hcQ}\mu(\hq_i) \cdot \hq_i(B=1|Z)}}_{\text{
			Proportional to click probability $\hq(B=1|Z)$ w.p. $1-\varepsilon$
		 } },
	\label{eq:rec_alg_1_ex}
\end{align}
where recall from \cref{sec:model} that $\mu$ is the platform's belief, 
and so $\mu(\hq)$ is the probability that the platform assigns to the 
user's strategy being $\hq$.
We visualize this setup in \cref{fig:illustrative_setup}.

\subsection{User behavior}
Now, consider a user of the first type,
i.e., 
a user for which 
$a(Z) \leq \ind{Z \in \cZ_1}.$
Define the set $\cZ^+ = \{Z: a(Z) = 1 \} \subset \cZ_1$ 
as the items that the user enjoys.  
Recall that a user presented with a recommendation $Z_t$ responds with behavior $B_t$ sampled from their behavior 
strategy $q(\cdot|Z_t)$, 
where this behavior strategy depends on whether the user is 
{\em naive} or {\em strategic}, 
as follows. 

\paragraph{Naive users.} 
In this setting, a naive user 
will click on items for which they have 
positive affinity (i.e., for which $a(Z) = 1$), 
and ignore items for which they have negative 
affinity, i.e.,
\begin{align}
	q^\BR(B=1|Z) = \bm{1}\{a(Z) \geq 0\}\ \forall\ Z \in \cZ.
\end{align}

\begin{figure}[t]
	\includegraphics[width=\textwidth]{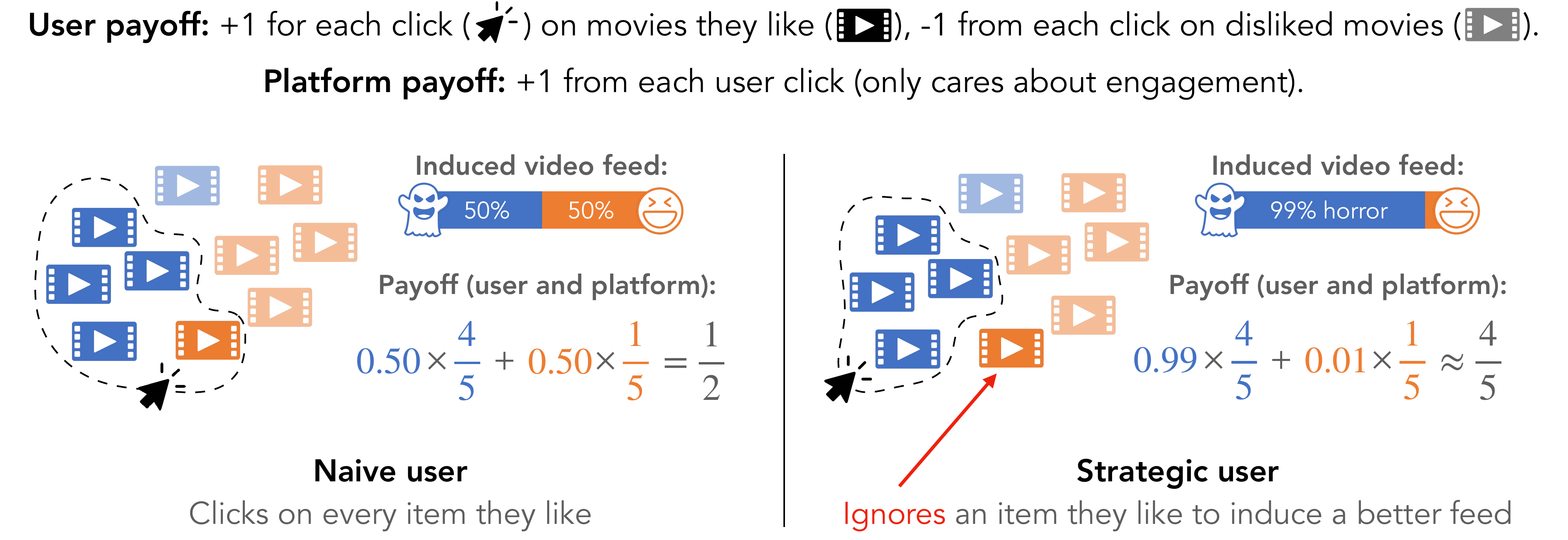}
	\caption{{\bf Naive and strategic user strategies in our stylized example (Section \ref{sec:example}).} 
	We consider a user whose affinity $a(Z)$ is encoded by items' 
	opacity in the Figure above.
	A naive strategy for this user (left) would click on item $Z$ 
	if and only if $a(Z) = 1$. 
	This strategy would result in the platform modeling 
	the user as the ``clicks on anything'' user $\hq_3$ 
	(see \cref{fig:illustrative_setup}),
	and thus serve a feed that is 50\% comedy and 50\% horror.
	If the user is strategic (right), they recognize that the naive 
	strategy is suboptimal, and they avoid clicking on ``outlier''
	comedy videos that they enjoy. The platform thus estimates 
	the user as a ``clicks only on horror'' user $\hq_1$, 
	and serves a feed that better suits the user.
	Notably, {\em both user and platform payoffs are higher} 
	when the user is strategic.}
	\label{eq:chungus}
\end{figure}

\paragraph{Strategic users.}
A strategic user, on the other hand, chooses a strategy that elicits 
the highest long-term payoffs.
For example, they might not click on an item $Z$ 
that they like (i.e., for which $a(Z) = 1$)
in order to influence the distribution of recommended 
items in the future.
To characterize the strategic user, we first need to 
understand the limiting beliefs of the platform.
\begin{restatable}{proposition}{illustrativestable}
	\label{prop:limiting_illustrative}
	Let $\hcQ$ be the hypothesis class defined by
	\eqref{eq:parameters}. Consider a user strategy $q$
	and a platform algorithm $p$ such that $\pmu{\mu}{\cdot}$ has full support for all $\mu \in \Delta(\hcQ)$. 
	Let $\text{supp}(q) = \{Z \in \cZ: q(B=1|Z) > 0\}$. 
	If $|\text{supp}(q)| > 0$,
	then the following function $S$ maps $(q, p, \hcQ)$ to 
	a globally stable set:
	\begin{align*}
		S(q, p, \hcQ) \coloneqq \{\hq_{i^\star}\} , \qquad \text{where} \qquad i^\star = \begin{cases}
			1 &\text{if } |\text{supp}(q) \cap \cZ_B| = 0, \\
			2 &\text{if } |\text{supp}(q) \cap \cZ_A| = 0, \\
			3 &\text{otherwise.} 
		\end{cases}
	\end{align*}
	In other words, the platform's limiting belief is $\mu_\infty = \delta_{\hq_{i^\star}}$.
\end{restatable}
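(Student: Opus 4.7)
My plan is to prove the claim by showing, for each of the three cases, that every model $\hq_j$ with $j \neq i^\star$ is eliminated from the posterior almost surely, so that $\mu_t(\hq_{i^\star}) \to 1$. I would handle all three cases uniformly via the same two-step template: a Kullback--Leibler (KL) comparison followed by a posterior-ratio argument.

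First I would compare the models in expected one-step log-likelihood. For any belief $\mu \in \Delta(\hcQ)$ with full support and any hypothesis $\hq_j$, define
\[
\Delta_j(\mu) \;\coloneqq\; \bbE_{Z \sim \pmu{\mu}{\cdot}}\bigl[\bbE_{B \sim q(\cdot|Z)}[\log \hq_j(B|Z) - \log \hq_{i^\star}(B|Z)]\bigr].
\]
Using the explicit form of $\hcQ$ in \eqref{eq:parameters} and a case split on whether $\text{supp}(q)$ intersects $\cZ_A$ and/or $\cZ_B$, I would show $\Delta_j(\mu) \leq -c$ for some $c > 0$ that does not depend on $\mu$. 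The uniform full-support assumption on $\pmu{\mu}{\cdot}$ is what makes the bound uniform: it guarantees that each item in $\cZ$ is sampled with probability bounded below, so a discriminating event occurs at a rate bounded below. For intuition, in Case~1 ($\text{supp}(q) \subseteq \cZ_A$), the models $\hq_1$ and $\hq_3$ agree on $\cZ_A$ but differ on $\cZ_B$, where the user never clicks; this gives a positive log-likelihood surplus of $-\log \gamma$ in favor of $\hq_1$ whenever the platform samples from $\cZ_B$, which happens with uniformly positive probability.

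Next I would convert these one-step gaps into almost-sure posterior collapse. Some alternative models are \emph{singular}: e.g., $\hq_2$ in Case~1 assigns probability zero to any click on $Z \in \cZ_A \cap \text{supp}(q)$, so the Bayesian update in \eqref{eq:belief_update} sends $\mu_t(\hq_2)$ to exactly zero as soon as such an observation occurs, and by Borel--Cantelli (combined with $|\text{supp}(q)| > 0$ and the uniform full-support of $p$) this happens in finite time almost surely. The same mechanism kills $\hq_1$ and $\hq_2$ in Case~3 and $\hq_1$ in Case~2. For any remaining non-singular model $\hq_j$, iterating \eqref{eq:belief_update} gives
\[
\frac{\mu_t(\hq_j)}{\mu_t(\hq_{i^\star})} \;=\; \frac{\mu_0(\hq_j)}{\mu_0(\hq_{i^\star})} \cdot \exp\!\left(\sum_{s=0}^{t-1} \bigl[\log \hq_j(B_s|Z_s) - \log \hq_{i^\star}(B_s|Z_s)\bigr]\right),
\]
and I would apply a strong law of large numbers for martingale differences to the exponent. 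The per-step increments have conditional mean $\Delta_j(\mu_s) \leq -c$; conditional on singular models already being eliminated, they are also uniformly bounded, so the exponent diverges to $-\infty$ almost surely and the ratio collapses. This yields $\mu_t(\hq_{i^\star}) \to 1$ a.s., which is exactly the condition in \cref{def:stable}.

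The main obstacle is the martingale step: the increments are neither i.i.d.\ (they depend on the evolving belief $\mu_s$) nor a priori bounded (log-likelihoods blow up near degenerate models). The workaround is to perform the finite-time elimination first, so that on the surviving event both $\hq_j$ and $\hq_{i^\star}$ have strictly positive likelihood on every realizable $(Z,B)$, rendering the martingale differences uniformly bounded and permitting a standard SLLN. An alternative route is to invoke the general convergence theorem for misspecified Bayesian learners in \citet{frick2020stability}; verifying its hypotheses for this three-model class reduces essentially to the KL computations in step one.
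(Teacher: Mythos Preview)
Your proposal is correct and matches the paper's approach: first eliminate the singular alternatives in finite time (the paper phrases this as waiting until Bayes' rule zeroes them out almost surely, then restricting attention to the reduced class), and then show that $\hq_{i^\star}$ strictly KL-dominates the surviving model at every belief via exactly the $\log(1/\gamma)\cdot\bbP_{p^\mu}(Z\in\cZ_B)$ computation you sketch. The paper closes by invoking the global-stability theorem of \citet{frick2020stability} rather than your direct posterior-ratio martingale, but you already list that route as an alternative, and in this three-model setting the two are equivalent.
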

\begin{proof}
	See \cref{app:limiting_illustrative_pf}.
\end{proof}

\noindent 
With this established,
we show that {unless the content that the user likes belongs entirely to $\cZ_A$ or to $\cZ_B$ (i.e., $\cZ^+ \subset \cZ_A$ or $\cZ^+ \subset \cZ_B$),}
then strategization {\em strictly} improves the user's utility.
\begin{restatable}{proposition}{illustrativestrathappens}
	\label{prop:illustrative_strat_happens}
	Consider the setting described in \cref{sec:illustrative_setting},
	and suppose that the platform's partition $(\cZ_A, \cZ_B)$ is not 
	``orthogonal'' to the user's preferences, i.e.,
	\[
		\frac{|\cZ^+ \cap \cZ_A|}{|\cZ_A|} 
		\neq 
		\frac{|\cZ^+ \cap \cZ_B|}{|\cZ_B|}.
	\]
	Then, for sufficiently small $\eps$ in \eqref{eq:rec_alg_1_ex}, 
	a strategic user's $q^\star = q^\BR$ if and only if $\cZ^+ \subset \cZ_A$ or $\cZ^+ \subset \cZ_B$.
\end{restatable}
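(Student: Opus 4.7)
The plan is to reduce the strategic user's infinite-dimensional optimization in \eqref{eq:S-strat} to a finite comparison over three ``regimes'' determined by \cref{prop:limiting_illustrative}. I would partition the user's strategy space $\cQ$ into $S_1, S_2, S_3$ according to how $\text{supp}(q)$ intersects the platform's partition: $S_1$ and $S_2$ consist of strategies whose support lies entirely in $\cZ_A$ or $\cZ_B$ respectively (plus being non-empty), and $S_3$ consists of those whose support intersects both. Within each class, \cref{prop:limiting_illustrative} asserts that the globally stable set is a singleton $\{\hq_i\}$, so the inner $\min$ in \eqref{eq:S-strat} collapses and the user's problem reduces to maximizing $\brU(p^{\delta_{\hq_i}}, q)$ subject to the support constraint. (Strategies with empty support yield non-positive utility and are easily dismissed.)

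Next I would compute $r_i := \pmu{\delta_{\hq_i}}{\cdot}$ in closed form from \eqref{eq:rec_alg_1_ex}: $r_1$ places mass $\tfrac{\eps}{|\cZ|} + \tfrac{1-\eps}{|\cZ_A|}$ on each $Z \in \cZ_A$ and mass $\tfrac{\eps}{|\cZ|}$ on each $Z \in \cZ_B$; $r_2$ is symmetric; and $r_3$ is uniform over $\cZ$. The optimization within each regime is pointwise in $Z$: setting $q(B{=}1\,|\,Z) = 1$ for $Z \in \cZ^+$ that satisfy the support constraint, and $q(B{=}1\,|\,Z) = 0$ otherwise, dominates any alternative (any other choice either sacrifices payoff on an item the user likes, incurs negative payoff on one they dislike, or pays an unnecessary $d_{TV}$ penalty). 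This yields closed-form limiting utilities for the best strategy in each regime that are continuous in $\eps$.

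For the ``if'' direction, suppose WLOG $\cZ^+ \subseteq \cZ_A$. Then $q^\BR \in S_1$ coincides with the pointwise optimum in $S_1$, so it attains utility $\tfrac{(1-\eps)|\cZ^+|}{|\cZ_A|} + \tfrac{\eps|\cZ^+|}{|\cZ|}$ with zero deviation penalty. The $S_2$ optimum earns no positive payoff (since $\cZ^+ \cap \cZ_B = \emptyset$) minus a penalty, and any $S_3$ strategy must play $B{=}1$ on some $Z \in \cZ_B$ with $a(Z) = -1$, generating both negative payoff and penalty. Hence $q^\BR$ is optimal. For the ``only if'' direction, suppose $\cZ^+ \not\subseteq \cZ_A$ and $\cZ^+ \not\subseteq \cZ_B$. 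Then $q^\BR \in S_3$ achieves exactly $|\cZ^+|/|\cZ|$. The $S_1$ and $S_2$ optima tend to $\alpha := |\cZ^+ \cap \cZ_A|/|\cZ_A|$ and $\beta := |\cZ^+ \cap \cZ_B|/|\cZ_B|$ respectively as $\eps \to 0$. Since $|\cZ^+|/|\cZ|$ is a \emph{strict} convex combination of $\alpha$ and $\beta$ (both weights are positive because $|\cZ_A|, |\cZ_B| > 0$), the non-orthogonality hypothesis $\alpha \neq \beta$ forces $\max(\alpha, \beta) > |\cZ^+|/|\cZ|$, so one of the strategic alternatives strictly beats $q^\BR$.

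The main obstacle is the bookkeeping around $\eps$: the deviation penalty for the best $S_1$ (or $S_2$) strategy scales like $\lambda \cdot \eps \cdot |\cZ^+ \cap \cZ_B|/|\cZ|$ and so vanishes linearly in $\eps$, while the gap $\max(\alpha,\beta) - |\cZ^+|/|\cZ|$ is a fixed positive $O(1)$ quantity independent of $\eps$. Making these rates explicit gives a threshold $\eps^\star(\lambda, \hcQ, \cZ^+) > 0$ below which the strategic alternative dominates $q^\BR$ for \emph{any} finite $\lambda$, completing the ``only if'' direction uniformly in the penalty strength.
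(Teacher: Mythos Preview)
Your proposal is correct and follows essentially the same route as the paper: both reduce the strategic user's problem via \cref{prop:limiting_illustrative} to comparing limiting payoffs under the three point-mass beliefs $\delta_{\hq_1},\delta_{\hq_2},\delta_{\hq_3}$, and both exploit the identity $|\cZ^+|/|\cZ| = \tfrac{|\cZ_A|}{|\cZ|}\alpha + \tfrac{|\cZ_B|}{|\cZ|}\beta$ (the paper via an explicit algebraic threshold on $\eps$, you via the strict-convex-combination observation) to show that one of the restricted-support strategies strictly beats $q^\BR$ when $\alpha\neq\beta$ and $q^\BR$ falls in regime $S_3$. Your $S_1/S_2/S_3$ partition and your explicit bookkeeping of the $\lambda\cdot d_{TV}$ penalty (which the paper's proof does not mention) make the argument somewhat cleaner, but the underlying idea is the same.
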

\begin{proof}
	See \cref{app:illustrative_strat_happens_proof}. 
	In short, when the user is naive and the platform is misspecified, 
	the platform  
	learns $\mu = \delta_{\hq_3}$ by \cref{prop:limiting_illustrative}
	and thus recommends a uniform distribution over all items. If either $\cZ_A$ or $\cZ_B$ 
	is closer to $\cZ_1$, 
	the strategic user can improve their utility by restricting their clicks to only the set
	$\cZ^+ \cap \cZ_A$ (which would lead to the platform recommending from 
	$\cZ_A$) or $\cZ^+ \cap \cZ_B$ (which would lead to the platform recommending from $\cZ_B$).
\end{proof}
\noindent 
\subsection{User strategization improves platform payoffs}
We now examine the impact of strategic behavior on the platform.
We begin by showing that, in our stylized example, user strategization improves 
(or at least does not hurt) the platform's payoffs. 
\begin{restatable}{proposition}{illustrativeimproveutility}
	\label{prop:illustrative_improve_utility}
	Consider the setting described in \cref{sec:illustrative_setting}.
	The platform's payoff is as least as high when the user is strategic as when
	the user is naive.
\end{restatable}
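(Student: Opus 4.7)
The plan is a revealed-preference argument paired with a structural inequality between $\brV$ and $\brU$ that is specific to this stylized setting. The key observation is that $V(Z,B) - U(Z,B) = B \cdot (1 - a(Z))$ is non-negative, since $B \in \{0,1\}$ and $a(Z) \leq 1$; combined with the non-positive $-\lambda \cdot d_{TV}$ term in $\brU$, this yields the pointwise inequality $\brV(r, q) \geq \brU(r, q)$ for every $r \in \Delta(\cZ)$ and $q \in \cQ$. Moreover, this inequality is tight along the naive strategy: the TV penalty vanishes, and for every $Z$ one has $q^\BR(B=1|Z) \cdot (1 - a(Z)) = 0$ (if $a(Z) = 1$ the second factor is zero, and if $a(Z) = -1$ then $q^\BR(B=1|Z) = 0$). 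Hence $\brV(r, q^\BR) = \brU(r, q^\BR)$ for every $r$.

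The second ingredient is revealed preference. By \cref{def:S-strat-user}, since $q^\BR$ is a feasible choice in the strategic user's maximization, $q^\star$ must achieve at least the value that $q^\BR$ does, i.e.,
\[
\min_{\mu \in \Delta(S(q^\star, p, \hcQ))} \brU(\pmu{\mu}{\cdot}, q^\star) \;\geq\; \min_{\mu \in \Delta(S(q^\BR, p, \hcQ))} \brU(\pmu{\mu}{\cdot}, q^\BR).
\]
By \cref{prop:limiting_illustrative}, each globally stable set is a singleton whenever the underlying strategy has non-empty support, so both minima are attained at the corresponding unique limiting belief, which I will call $\mu^\star$ and $\mu^\BR$ respectively.

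Chaining the two ingredients yields the conclusion: the platform's limiting payoff under a strategic user is
\[
\brV(\pmu{\mu^\star}{\cdot}, q^\star) \;\geq\; \brU(\pmu{\mu^\star}{\cdot}, q^\star) \;\geq\; \brU(\pmu{\mu^\BR}{\cdot}, q^\BR) \;=\; \brV(\pmu{\mu^\BR}{\cdot}, q^\BR),
\]
which is exactly the platform's limiting payoff under a naive user.

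The main obstacle is handling the edge cases in which \cref{prop:limiting_illustrative} does not directly supply a singleton stable set. If $\cZ^+ = \emptyset$, then $q^\BR$ has empty support, but the naive platform payoff is then identically zero and the claim is vacuous. If instead the strategic user's $q^\star$ has empty support, I would fall back on the worst-case reading $\min_{\mu \in \Delta(S(q^\star,p,\hcQ))} \brV(\pmu{\mu}{\cdot}, q^\star)$ of the platform's limiting payoff; the same chain of inequalities still goes through, since $\brV \geq \brU$ holds uniformly in $\mu$ and the revealed-preference inequality is already stated in terms of the worst-case $\brU$.
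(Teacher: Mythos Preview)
Your proposal is correct and is essentially the same argument as the paper's: both start from the revealed-preference inequality $\brU(\pmu{\mu^\star}{\cdot}, q^\star) \geq \brU(\pmu{\mu^\BR}{\cdot}, q^\BR)$ (using \cref{prop:limiting_illustrative} to reduce the globally stable sets to singletons) and then exploit the structural relation between $V$ and $U$ in this example. The paper carries out the comparison by explicit algebra---splitting the sums over $\cZ^+$ and $\cZ \setminus \cZ^+$ to obtain $\brV(\pmu{\mu^\star}{\cdot}, q^\star) - \brV(\pmu{\mu^\BR}{\cdot}, q^\BR) \geq 2\sum_{Z \notin \cZ^+} p^{\mu^\star}(Z)\, q^\star(B=1\mid Z) \geq 0$---whereas you package the same content as the pointwise inequality $\brV(r,q) \geq \brU(r,q)$ with equality at $q^\BR$; these are two presentations of one computation, and your handling of the degenerate-support edge cases is, if anything, more careful than the paper's.
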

\begin{proof}
	See \cref{app:illustrative_improve_utility_proof}.
Informally, the user only strategizes to get ``better''
content in the long term,
and the platform benefits from this behavior because it receives a positive payoff every time the user clicks. 
\end{proof}
\noindent Note that \cref{prop:illustrative_improve_utility} holds 
regardless of the platform's choice of strategy $(p, \hcQ)$, 
and depends only on the structure of the payoff functions $U$ and $V$.

\subsection{User strategization results in unexpected phenomena}
Suppose that, having deployed algorithm $p$ 
and converged to the limiting belief 
$\mu_\infty$
given by \cref{prop:limiting_illustrative},
the platform considers changing its 
algorithm to downweight toxic content.
Specifically, the platform considers replacing $p$ with 
a counterfactual algorithm $\pcf$:
\begin{align}
	\pcf(Z;\mu)
	\propto \textsc{toxicity}(Z) \cdot \pmu{\mu}{Z}, \label{eq:alt_alg_toxic}
\end{align}
where for some $\alpha \in (0, 1)$, the function $\textsc{toxicity} (Z) \in
\{\alpha, 1\}$ discounts content $Z$ that is toxic.
{We show that strategic behavior (on the part of the user) 
can \emph{corrupt the platform's data} so that
data gathered under $p$ cannot be used to make reliable inferences under
$\pcf$.}

\paragraph{Unreliable counterfactual inferences.}
A natural way for the platform to gauge whether using the algorithm 
$\pcf$ is a good idea is to {\em predict} its (counterfactual) 
payoff under $(\pcf, \hcQ)$. 
Formally, the platform tries to estimate $\smash{\brV^\star(\pcf, \hcQ )}$, 
where for any $p$ we define $\brV^\star(p , \hcQ)$ as
\begin{equation}
	\label{eq:true_payoff_illustrative}
	\brV^\star(p , \hcQ) \coloneqq  \max_{\mu \in \Delta(S(q^\star , p, \hcQ))}\brV(\pmu{\mu}{\cdot}, q^\star(p, \hcQ) ) .
\end{equation}
We recall that $\smash{q^\star(p, \hcQ)}$ characterizes how the strategic user behaves 
in response to the platform strategy $\smash{(p, \hcQ)}$.
Of course, the platform does not have access to $\smash{\brV^\star(\pcf, \hcQ)}$, 
as it does not know 
what the user's strategy {\em would be} in response to $\smash{(\pcf, \hcQ)}$, 
and so it must instead predict its payoff using its current user model
$\hq_{i^\star}$ (collected under $p$). In other words, it computes
\begin{equation}
	\label{eq:est_utility_illustrative}
	\hV(\pcf, \hcQ) \coloneqq \brV(\pcf(\cdot;\delta_{\hq_{i^\star}}), \hq_{i^\star}),
\end{equation}
where we recall that $\hq_{i^\star}$ is the platform's limiting belief from playing 
the algorithm $p$.

We show below that for some choices of the toxicity function, 
the above approach \eqref{eq:est_utility_illustrative} produces a drastically wrong estimate of the platform's 
long-run utility under $\pcf$. 
In particular, this mis-estimation might cause the platform to think that
switching to $\pcf$ would lower the platform's payoffs when it would actually do the opposite. 
\begin{restatable}{proposition}{illustrativecfxbad}
	\label{prop:illustrative_cfx_bad}
	Consider the setting described in \cref{sec:illustrative_setting},
	and the counterfactual algorithm $\pcf$ given by~\eqref{eq:alt_alg_toxic}.
	For any content partitioning $(\cZ_A, \cZ_B)$ where $|\cZ_A|, |\cZ_B| \geq 4$,
	there exists an affinity function $a(Z)$
	(see \eqref{eq:user_payoff}), 
	constants $\gamma > 0$ (see \eqref{eq:parameters}),  $\eps > 0$ (see \eqref{eq:rec_alg_1_ex}),›
	and a function 
	$\textsc{toxicity} : \cZ \to \{\alpha, 1\}$ such that applying \eqref{eq:est_utility_illustrative} simultaneously yields the following outcomes:
	\begin{enumerate}
		\item[(a)] the platform correctly predicts its own utility under $p$, i.e., $\hV(p, \hcQ) = \brV^*(p, \hcQ)$;
		\item[(b)] the platform thinks its payoff will decrease if it switches to algorithm $\pcf$, i.e., \(\hV(\pcf, \hcQ) < \brV^*(p, \hcQ)\);
		\item[(c)] in reality, the platform's payoff will increase if it switches to $\pcf$, i.e., \( \brV^\star(\pcf, \hcQ) > \brV^\star(p, \hcQ). \)
	\end{enumerate}
\end{restatable}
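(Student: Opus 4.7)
I plan to prove the proposition by exhibiting an explicit construction satisfying all three claims. Without loss of generality assume $|\cZ_A| \leq |\cZ_B|$ (otherwise swap the role of the two sets). Pick two items $z_{A,1}, z_{A,2} \in \cZ_A$ and one item $z_{B,1} \in \cZ_B$, declare $\cZ^+ = \{z_{A,1}, z_{A,2}, z_{B,1}\}$ (so $a \equiv 1$ on $\cZ^+$ and $a \equiv -1$ elsewhere), set $\gamma = 1 - 2/|\cZ_A|$, and define
\[
\textsc{toxicity}(Z) = \begin{cases} \alpha & \text{if } Z \in \cZ_A \setminus \{z_{A,1}, z_{A,2}\}, \\ 1 & \text{otherwise},\end{cases}
\]
for some small $\alpha > 0$ to be fixed later. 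The plan is then to pick $\eps$ and $\alpha$ small (with $\alpha \ll \eps$) and verify the three conclusions by direct computation.

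The first step is to characterize the strategic user's behavior under $p$. By \cref{prop:limiting_illustrative}, only three qualitatively distinct strategies matter: (A) click only on $\{z_{A,1}, z_{A,2}\}$, yielding $\mu_\infty = \delta_{\hq_1}$; (B) click only on $\{z_{B,1}\}$, yielding $\delta_{\hq_2}$; (C) click on all of $\cZ^+$, yielding $\delta_{\hq_3}$. Writing $A = \eps/|\cZ| + (1-\eps)/|\cZ_A|$ for the mass that $\pmu{\delta_{\hq_1}}{\cdot}$ assigns to each item in $\cZ_A$ and $B_\eps = \eps/|\cZ|$ for the mass it assigns to each item in $\cZ_B$, the three resulting limiting user payoffs are $2A$, $\approx 1/|\cZ_B|$, and $3/|\cZ|$ (ignoring the small $\lambda \cdot d_{TV}$ penalty). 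The assumption $|\cZ_A| \leq |\cZ_B|$ together with $|\cZ_A| \geq 4$ makes $2A$ the strict maximum, so the platform converges to $\mu_\infty = \delta_{\hq_1}$. Part (a) then follows from a direct substitution: with $\gamma = 1 - 2/|\cZ_A|$, both $\hV(p, \hcQ) = (1-\gamma)|\cZ_A|A$ and $\brV^\star(p, \hcQ) = 2A$ simplify to $2\eps/|\cZ| + 2(1-\eps)/|\cZ_A|$.

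For part (b), I compute $\pcf(Z;\delta_{\hq_1}) \propto \textsc{toxicity}(Z) \cdot \pmu{\delta_{\hq_1}}{Z}$ and observe that, for any $\alpha \in (0,1)$,
\[
\sum_{Z \in \cZ_A} \pcf(Z;\delta_{\hq_1}) = \frac{(2 + (|\cZ_A|-2)\alpha)A}{(2 + (|\cZ_A|-2)\alpha)A + |\cZ_B|B_\eps} < \frac{|\cZ_A|A}{|\cZ_A|A + |\cZ_B|B_\eps} = \sum_{Z \in \cZ_A} \pmu{\delta_{\hq_1}}{Z},
\]
where the inequality reduces (by cross-multiplying) to $2 + (|\cZ_A|-2)\alpha < |\cZ_A|$. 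Multiplying by $1-\gamma > 0$ yields $\hV(\pcf, \hcQ) < \hV(p, \hcQ) = \brV^\star(p, \hcQ)$. For part (c), I re-solve the user's optimization under $\pcf$; the key observation is that the toxicity function spares exactly the two items $z_{A,1}, z_{A,2}$ that the user likes inside $\cZ_A$, so the payoffs of (A'), (B'), (C') approach $2A/[(2 + (|\cZ_A|-2)\alpha)A + |\cZ_B|B_\eps]$, $\approx 1/|\cZ_B|$, and $3/(2+(|\cZ_A|-2)\alpha+|\cZ_B|)$ respectively. For $\eps$ and $\alpha$ small enough, (A') dominates because the first quantity tends to $1$ while the other two are bounded by $1/2$ (using $|\cZ_B| \geq 4$). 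The user therefore still plays (A') and the platform still converges to $\delta_{\hq_1}$, but now $\pcf$ concentrates nearly all its recommendation mass on $\{z_{A,1}, z_{A,2}\} \subset \cZ^+$; consequently $\brV^\star(\pcf, \hcQ) \to 1$ as $\eps, \alpha \to 0$, which exceeds $\brV^\star(p, \hcQ) = 2A \leq 1/2 + O(\eps)$ whenever $|\cZ_A| \geq 4$.

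The main technical obstacle is step three---ruling out that the strategic user deviates from (A') under $\pcf$. I would handle this by invoking \cref{prop:limiting_illustrative} to reduce any strategy to one of the three intersection patterns above, noting that within each pattern the optimal $q$ is simply $q^\BR$ restricted to the chosen support, and absorbing the $\lambda \cdot d_{TV}(q, q^\BR)$ penalty by taking $\eps$ small enough so that the deviation between (A') and $q^\BR$ on the singleton $\{z_{B,1}\}$ has vanishing cost under $\pcf(\cdot;\delta_{\hq_1})$.
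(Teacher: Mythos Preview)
Your construction is correct and establishes all three claims, but it proceeds by a genuinely different mechanism than the paper's. In the paper's proof, the affinity is chosen so that $\cZ^+$ occupies $3/4$ of $\cZ_A$ and $1/2$ of $\cZ_B$, and the toxicity function marks $\cZ_A\cap\cZ^+$ (and $\cZ_B\setminus\cZ^+$) as toxic; the point is that the strategic user, who under $p$ restricts clicks to $\cZ_A\cap\cZ^+$ and induces $\delta_{\hq_1}$, \emph{switches} under $\pcf$ to clicking only on $\cZ_B\cap\cZ^+$ and induces $\delta_{\hq_2}$. The prediction error thus stems from $q^\star(p)\neq q^\star(\pcf)$, which is exactly the ``data collected under $p$ does not transfer'' phenomenon the section is meant to illustrate and which foreshadows \cref{prop:cf_p}. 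In your construction, by contrast, the toxicity spares precisely the two items the user likes inside $\cZ_A$, so the strategic user keeps playing the same strategy and the platform still converges to $\delta_{\hq_1}$ under $\pcf$; the prediction fails because $\pcf$ redistributes mass \emph{within} $\cZ_A$ in a way that exposes the gap between the coarse model $\hq_1$ (``clicks on all of $\cZ_A$'') and the true strategic behavior (``clicks on two specific items''). Your route is computationally lighter---no case analysis showing the user flips partitions---and verifies the proposition as stated, but it illustrates a somewhat different failure mode (misspecification revealed by reweighting) than the one the paper is building toward (strategy shift under a new algorithm).
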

\begin{proof}
	See \cref{app:illustrative_cfx_bad_proof}.
    Intuitively,  when the user strategizes and only 
	engages with items from $\cZ^+ \cap \cZ_A$, the platform has
	no information about how much the user likes items from $\cZ_B$. 
	If, 
	under $p_\text{CF}$, the user switches to engaging only with 
	items from $\cZ^+ \cap \cZ_B$ 
	(say, due to elements in $\cZ^+ \cap \cZ_A$ being marked as toxic), 
	the platform's predicted payoffs will be far from the true ones. 
\end{proof}

\noindent We generalize (and in fact, strengthen) this result in
\cref{sec:main_results} (\cref{prop:cf_p}).

\paragraph{Expanding $\hcQ$ can hurt the platform when users are strategic.}
To further illustrate the counterintuitive phenomena induced by user strategization,
we also demonstrate that expanding the hypothesis class of user models can actually {\em
lower} the platform's payoff.
This finding is unexpected, 
as considering a richer class of models typically does not hurt a learning 
algorithm.
\begin{restatable}{proposition}{illustrativeexpandomega}
	\label{prop:illustrative_expand_omega}
	Consider the setting described in \cref{sec:illustrative_setting}.
	For any partitioning $(\cZ_A, \cZ_B)$ of $\cZ$ 
	there exists an affinity function $a(Z)$ (see \eqref{eq:user_payoff}),
	constants
	$\gamma > 0$ (see \eqref{eq:parameters}),
	  $\eps > 0$ (see \eqref{eq:rec_alg_1_ex}),
	and a user model $\hq_4$ such that when $\hq_4$ is added to hypothesis 
	class $\hcQ$, the platform's payoff under  user strategization~decreases.
\end{restatable}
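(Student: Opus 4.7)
The plan is to construct an explicit instance where, for the given partition, adding a carefully-chosen $\hq_4$ to $\hcQ$ shifts the platform's limiting belief to $\hq_4$, inducing recommendations that are more dispersed than before and therefore eliciting strictly fewer clicks. The key observation is that $\hq_4$ can be designed to (i) strictly beat every $\hq \in \hcQ$ in expected KL divergence from the naive user's behavior, while also (ii) inducing a recommendation distribution with nonzero weight on items the user never clicks.

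Fix the partition $(\cZ_A, \cZ_B)$, assuming without loss of generality (up to swapping sides) that $|\cZ_B| \geq 2$ and $|\cZ_A| \geq 1$. First I choose the affinity function so that $\cZ^+ := \{Z : a(Z) = 1\}$ is a strict, nonempty subset of $\cZ_B$, with $a(Z) = -1$ for $Z \notin \cZ^+$. Under $\hcQ$, \cref{prop:illustrative_strat_happens} implies that the strategic user plays naive (since $\cZ^+ \subset \cZ_B$), and \cref{prop:limiting_illustrative} identifies the globally stable set as $\{\hq_2\}$. Because the naive user clicks only on items with $a=1$, the platform's and user's payoffs coincide and both equal $\brV^\star(p, \hcQ) = |\cZ^+|/|\cZ_B|$.

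Next I introduce the hypothesis
\[
  \hq_4(B = 1 \mid Z) \;=\; (1-\gamma)\,\ind{Z \in \cZ^+} \;+\; \beta\,\ind{Z \notin \cZ^+},
\]
with $\beta$ in the open interval $\bigl(\rho(1-\gamma),\, 1 - \gamma^{\rho}\bigr)$, where $\rho := (|\cZ_B| - |\cZ^+|)/(|\cZ| - |\cZ^+|) \in (0,1)$. Convexity of $x \mapsto \gamma^x$ on $[0,1]$ gives $\gamma^\rho \leq 1 - \rho(1-\gamma)$ with strict inequality for $\rho \in (0,1)$, so this interval is nonempty. A KL computation parallel to the one behind \cref{prop:limiting_illustrative} then shows that under $\hcQ' := \hcQ \cup \{\hq_4\}$ and the naive strategy, $\hq_4$ strictly dominates every model in $\hcQ$: $\hq_1$ is excluded by an infinite KL term on $\cZ^+$; $\hq_3$ is dominated because $-\log(1-\beta) < -\log \gamma$ whenever $\beta < 1-\gamma$; and the upper bound $\beta < 1 - \gamma^\rho$ ensures $\hq_4$ beats $\hq_2$ under both $\hq_2$- and $\hq_4$-induced recommendations. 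Consequently $\{\hq_4\}$ is globally stable. In the resulting limit, the platform recommends each $Z \in \cZ^+$ with weight $\propto (1-\gamma)$ and each $Z \notin \cZ^+$ with weight $\propto \beta$, so the click rate equals
\[
  \frac{|\cZ^+|(1-\gamma)}{|\cZ^+|(1-\gamma) + (|\cZ| - |\cZ^+|)\beta}.
\]
The lower bound $\beta > \rho(1-\gamma)$ is equivalent to $(|\cZ_B|-|\cZ^+|)(1-\gamma) < (|\cZ|-|\cZ^+|)\beta$, which rearranges to show this rate is strictly less than $|\cZ^+|/|\cZ_B|$.

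The hardest step will be verifying that the strategic user's optimum under $\hcQ'$ remains naive, so that the stable-set analysis above governs the platform's payoff. The plan is a short case analysis: any deviation from naive either (i) skips clicks on items in $\cZ^+$, strictly reducing $\brU$ regardless of the induced limit, or (ii) adds clicks on items in $\cZ \setminus \cZ^+$, each contributing $-1$ to $\brU$ plus a strategization penalty. The salient competitor is the ``click all of $\cZ_B$'' strategy, which restores $\hq_2$ as the singleton stable set (and thus the original rate $|\cZ^+|/|\cZ_B|$ for the platform) but gives the user the strictly worse payoff $(2|\cZ^+| - |\cZ_B|)/|\cZ_B| - \lambda(|\cZ_B|-|\cZ^+|)/|\cZ_B|$. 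For $\varepsilon, \lambda$ chosen appropriately, this and the other candidate strategies (e.g., mixing in partial clicks on $\cZ_B \setminus \cZ^+$ to push the limit off $\hq_4$) all give user payoffs below the naive value $|\cZ^+|(1-\gamma)/T$, so naive is the unique best response. This yields $\brV^\star(p, \hcQ') < \brV^\star(p, \hcQ)$, as required.
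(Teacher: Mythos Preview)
Your construction is genuinely different from the paper's, and the difference matters. The paper picks an affinity with $\cZ^+$ straddling $\cZ_A$ and $\cZ_B$ so that, under $\hcQ$, the strategic user \emph{deviates} from naive play to steer the platform to $\hq_1$; the added model $\hq_4(B=1\mid Z)=1-\eta$ then \emph{blocks that avenue of strategization} (any strategy inducing $\hq_1$ must click on almost all of $\cZ_A$, which is too costly), forcing the user back toward naive play, which lowers the platform's payoff. So the paper's mechanism is ``strategization was helping the platform; remove the means of strategization.'' Your mechanism is the opposite: the user is naive under both $\hcQ$ and $\hcQ'$, and $\hq_4$ hurts only by moving the platform's limit from $\hq_2$ to a more dispersed point.

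The gap is exactly where you flag it. You need the strategic user under $\hcQ'$ to remain naive, yet by design the naive user's limiting payoff under $\hcQ'$ is \emph{strictly lower} than under $\hcQ$ (user and platform payoffs coincide here), so the user has positive incentive to deviate and recover the $\hq_2$ limit. Your case analysis only rules out the extreme ``click all of $\cZ_B$'' deviation, but the real competitors are intermediate strategies that click on $\cZ_B\setminus\cZ^+$ with probability $\delta\in(0,1)$. Whether the minimum $\delta$ needed to make $\hq_2$ dominate $\hq_4$ at all beliefs exceeds the threshold at which the user's $U$-loss from those clicks outweighs the gain $|\cZ^+|/|\cZ_B|-|\cZ^+|(1-\gamma)/\bigl(|\cZ^+|(1-\gamma)+(|\cZ|-|\cZ^+|)\beta\bigr)$ is a delicate quantitative question depending on $\gamma,\beta,\varepsilon$ simultaneously, and your interval $\beta\in(\rho(1-\gamma),\,1-\gamma^\rho)$ does not obviously leave room for it (note the gap vanishes as $\beta\downarrow\rho(1-\gamma)$). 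You also invoke $\lambda$ as a tunable parameter, but the proposition only lets you choose $a,\gamma,\varepsilon,\hq_4$; in the illustrative example $\lambda$ is fixed (effectively zero), so you cannot rely on a strategization penalty to suppress deviations. The paper's route avoids this entire balancing act because there the user \emph{wants} to strategize and simply cannot.
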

\begin{proof}
	See \cref{app:illustrative_expand_omega_proof}.  Intuitively, the platform can inadvertently 
eliminate the user's {\em means of strategization},
by adding a candidate user model $\hq_4$ that is more similar to $q^\star$ 
than $\hq_{i^\star}$ is to $q^\star$, but whose corresponding
distribution over content $p(\cdot; \delta_{\hq_4})$ is unfavorable for the user.
This incentivizes the user to switch to a strategy that lowers the platform's
payoff.
\end{proof}

\noindent We prove a general version of this result in \cref{sec:main_results}
(\cref{prop:cf_omega}).

\section{User strategization and its discontents}
\label{sec:main_results}
\noindent 
In this section, we present our main results.
These results generalize and strengthen our findings from Section \ref{sec:example}
and can be summarized as follows. 
\begin{enumerate}
	\item In \cref{sec:instantiate_strat}, 
	we show that one can recover a globally stable set using a result from \citet{frick2020stability}.  
	Recalling the definition of strategization, 
	this result allows us to characterize the platform's beliefs as $t \rightarrow \infty$ and, as a result, how a strategic user behaves. 
	
	\item In \cref{sec:general_strat_helps}, 
	we show that user strategization can \emph{help} the platform. 
	Specifically, when user and platform  payoffs are sufficiently aligned, 
	then user strategization improves both user and platform outcomes under a \emph{fixed} platform strategy $(p, \hcQ)$. 
	
	\item In \cref{sec:strategization_bad}, 
	we find that 
	user strategization can mislead the platform when the platform makes changes to $(p, \hcQ)$. 
	In particular, 
	when a user is strategic, 
	(a) the data that a platform obtains under its current algorithm $p$ cannot reliably predict the platform's payoff under a different algorithm $\pcf$, 
	even if $p$ and $\pcf$ are close; 
	and (b) counter to what the platform might expect, 
	expanding the hypothesis class $\hcQ$ can lower the platform's payoff. 

	\item In \cref{sec:general_BR_good}, 
	we show that, in contrast to when users behave strategically, 
	it is straightforward for the platform to anticipate how changing $(p, \Omega)$ 
	affects its payoff when users behave \emph{naively}.
	This finding begs the question: 
	When are users incentivized to behave naively? 
\end{enumerate}

\subsection{Preliminaries}\label{sec:prelim}
We now introduce 
definitions and assumptions that we use in the remainder of the section. 

\paragraph{Definitions.} 
For any two probability measures
$\Pi_1$ and $\Pi_2$ 
defined on a measurable space $(\Omega, \mathcal{F})$
with probability mass (or density) functions 
$\pi_1$ and $\pi_2$,
the 
Kullback-Leibler divergence and 
total variation
distance are given by
\begin{align*}
	\KL(\Pi_1, \Pi_2) &\coloneqq 
	\mathbb{E}_{x \sim \pi_1}\left[
		\log\left(\frac{\pi_1(x)}{\pi_2(x)}\right)	
	\right] \quad\text{(when $\Pi_1$ is absolutely continuous w.r.t. $\Pi_2$)}, \\
	\text{TV}(\Pi_1, \Pi_2) &\coloneqq \sup_{A \in \mathcal{F}} \left|\Pi_1(A) - \Pi_2(A)\right| .
\end{align*}
Recall that $p: \Delta(\hcQ) \to \Delta(\cZ)$ denotes the platform's 
\emph{algorithm}, i.e., a map between its belief about the user 
and the distribution over propositions it plays. 
We quantify the distance between two algorithms 
using their maximum 
total variation
distance; that is, we let
\[
d_{\cP} (p_1, p_2) \coloneqq \sup_{\mu \in \Delta(\hcQ)} 
\text{TV} (p_1(\cdot;\mu), p_2(\cdot; \mu)).
\]
Recall that $q$ denotes the user's strategy. 
Let $p(\cdot; \mu) \times q$ denote the joint distribution of 
$(Z, B)$ when the platform's propositions are drawn according to $Z \sim p(\cdot; \mu) $ and the user's behavior is given by
$B \sim q(\cdot|Z)$.
Similarly, for any user model $\hq_i \in \hcQ$, 
let  $p(\cdot; \mu) \times \hq_i$ denote  the joint distribution of $(Z, B)$ when $Z \sim p(\cdot; \mu) $ and the user's behavior is given by
$B \sim \hq_i(\cdot|Z)$.

For a fixed proposition distribution $\pmu{\mu}{\cdot}$ and a fixed user strategy $q$,
we say that the user model $\hq_i$ {\em strictly dominates} the user model $\hq_j$ when
$p(\cdot; \mu) \times \hq_i$ better explains $p(\cdot; \mu) \times q$ than
$p(\cdot; \mu) \times \hq_j$ does (in the information-theoretic sense described 
below).

\begin{definition}[Strict KL dominance]\label{def:KL_dom}
	A user model $\hq_i$ {strictly KL-dominates $\hq_j$ at $(p(\cdot; \mu), q)$}, as denoted by $\hq_i \succ^q_{p(\cdot; \mu)} \hq_j$ 
	if and only if 
	$\KL ( p(\cdot; \mu) \times q , p(\cdot; \mu) \times \hq_i )  < \KL ( p(\cdot; \mu) \times q , p(\cdot; \mu) \times \hq_j )$.
\end{definition}

\paragraph{Pervasive assumptions.}
We now lay out the assumptions that we will use in the remainder of this work. 
Throughout the entire section (even when not explicitly mentioned), 
we make the following two assumptions adapted from \citet{frick2020stability}.
\begin{assumption}[\citet{frick2020stability}]
	\label{ass:frick_reg}
	The distributions $\pmu{\mu}{\cdot} \times q$ and $\pmu{\mu}{\cdot} \times \hq_i$
	are continuous Radon-Nikodym derivatives with respect to some $\sigma$-finite measure $\nu$ on 
	$\cZ \times \cB$. When $\cZ$ is discrete (resp., continuous), 
	$\nu$ is a product of the counting (resp., Lebesgue) 
	measure on $\cZ$ and the counting measure on $\cB$. 
	In particular, $\pmu{\mu}{\cdot}$, $q(\cdot|Z)$, and $\hq_i(\cdot|Z)$ 
	are all well-defined probability densities.
\end{assumption}
\begin{assumption}[\citet{frick2020stability}]
	\label{ass:p-cont}
	The platform's recommendation algorithm $p$ and hypothesis class $\hcQ$
	satisfy the following three conditions:
	\begin{enumerate}
		\item (Support). For any user strategy $q$,
		user model $\hq$, and $Z \in \cZ$, $\supp\ q(\cdot |Z) \subset \supp\ \hq(\cdot|Z)$.
		\item (Bounded likelihood ratios). There exists a
		$\nu$-integrable function $h(Z, B)$ such that \[
			\sup_{\mu \in \Delta(\hcQ)} \max_{\hq_1, \hq_2 \in \hcQ} \frac{\hq_1(B|Z)}{\hq_2(B|Z)} \cdot \pmu{\mu}{Z} \cdot q(B|Z) \leq h(B, Z)\text{ for all } B, Z \in \cB \times \cZ.
		\]
		\item (Belief continuity). For each user model $\hq \in \hcQ$, there
		exists a neighborhood $\mathcal{N} \ni \delta_{\hq}$ such that for all
		$Z \in \cZ$ and $\mu \in \mathcal{N}$, the function $p(Z; \mu)$ is
		continuous in $\mu$.
	\end{enumerate}
\end{assumption}
\noindent \cref{ass:frick_reg} simply ensures that the probability distributions 
we are dealing with are well-defined, while \cref{ass:p-cont} establishes (mild)
conditions under which we can characterize the platform's limiting belief (which the rest 
of our analysis relies on).

\paragraph{Regularity assumptions.} In the coming sections, we also 
make certain assumptions about the platform payoff, 
algorithm, 
and hypothesis class being well-behaved.
Rather than being necessary for our negative results, 
these assumptions (\cref{ass:nontrivial_omega,ass:perturbable,ass:well-behaved-p})
actually {strengthen} them. 
In particular, we will show (in \cref{sec:strategization_bad}) that 
\emph{in spite of} these regularity conditions, 
platform payoffs are still poorly behaved and unpredictable when users are strategic. 
Unlike \cref{ass:frick_reg,ass:p-cont}, we explicitly reference the following
assumptions when they are in place.
\begin{assumption}[Payoff landscape]
	\label{ass:perturbable}
	For any distribution $r \in \Delta(\cZ)$ and user behavior $q$, 
	and for any $\eps > 0$, there exists $r' \in \Delta(\cZ)$ such that
	$\text{TV}(r, r') < \eps$ 
	and $\brV(r, q) \neq \brV(r', q)$. 
\end{assumption}
\begin{assumption}[Well-behaved algorithm]
	\label{ass:well-behaved-p}
	We call an algorithm $p: \Delta(\hcQ) \to \Delta(\cZ)$ well-behaved if 
	it maps similar beliefs $\mu$ (in terms of the corresponding user models)
	to similar recommendation distributions $p(\cdot;\mu)$, i.e., if 
	\[
		d_{\cP}(p(\cdot; \mu_1), p(\cdot; \mu_2)) \leq L_{\cP} \cdot \mathbb{E}_{\hq_1 \sim \mu_1,\,\hq_2 \sim \mu_2}\left[
		\max_{Z \in \cZ}\ 
		\text{TV}(\hq_1(\cdot|Z), \hq_2(\cdot|Z))
		\right]
		\qquad \forall\ \mu_1, \mu_2 \in \Delta(\hcQ).
	\]
\end{assumption}
\begin{assumption}[Hypothesis class expansion]
	\label{ass:nontrivial_omega}
	A setup $(\cZ, \cB, V, p, \hcQ)$ satisfies the 
	{\em hypothesis class expansion} assumption if 
	for any fixed user strategy $q$
	any two hypothesis classes $\hcQ_1, \hcQ_2 \subset \hcQ$ such that 
	$\hcQ_1 \subset \hcQ_2$, 
	and a globally stable set function $S$ (as defined in \cref{def:stable}),
	\[
	\min_{\mu \in S (q, p, \hcQ_1)} \brV(p(\cdot; \mu), q) 
	\leq 
	\min_{\mu \in S(q, p, \hcQ_2)} \brV(p(\cdot; \mu), q).
	\]
	In other words, for a fixed user strategy $q$, the platform 
	cannot decrease its payoff by expanding its hypothesis class.
\end{assumption}

\subsection{Platforms converge to beliefs that best approximate user in KL-sense}
\label{sec:instantiate_strat}
We now present an intermediate result that lets us 
characterize user strategization.
Specifically, recall that strategic users plan ahead
using a 
{\em globally stable set} (\Cref{def:stable}), 
a set of user models $\smash{S^\infty \subset \hcQ}$ that almost surely 
contains the support of the limiting belief.
To characterize the globally stable set,
we use a concept introduced by \citet{frick2020stability} known as the
\emph{iterated elimination of dominated states}, wherein, given a subset 
$\smash{\hcQ' \subset \hcQ}$,
one eliminates 
user models $\hq_j$ that are strictly \KL-dominated 
by another user model $\hq_i$ in the set (according to \cref{def:KL_dom}).
\begin{definition}[Iterated elimination of dominated parameters]\label{def:KLdom}
	Consider a platform strategy $(p, \hcQ)$ and user strategy $q$. 
	Define the 
	\emph{elimination operator} $R: 2^{\hcQ} \to 2^{\hcQ}$ as
	\begin{align*}
		R (\hcQ') = 
		\{\hq_j \in \hcQ':\ 
		\not\exists\ \hq_i \in \hcQ' \text{ such that } 
		\hq_i  \succ^q_{p(\cdot; \mu)} \hq_j\ \forall\ \mu \in \Delta(\hcQ')\}.
	\end{align*}
	Let $R^0(\hcQ) = \hcQ$ and recursively define  
	$R^n(\hcQ) = R ( R^{n-1}(\hcQ))$. 
	Finally, define
	$S^\infty ( q, p, \hcQ ) = \cap_{n=1}^\infty R^n(\hcQ)$.
\end{definition}

\citet{frick2020stability} show that the fixed point of the 
elimination operator $S^\infty ( q, p, \hcQ )$ is a globally stable set, i.e.,
the platform converges to the parameters that best approximate $q$ in a KL-sense.

\begin{theorem}[Theorem \cite{frick2020stability}]\label{thm:global_stability}
	For algorithm $p$ and user strategy $q$,
	$S^\infty_{p,q}(\Omega)$ is  $(p, q)$-globally stable.
\end{theorem}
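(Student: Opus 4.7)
My plan is to reduce this statement directly to the main globally-stable-set theorem of \citet{frick2020stability}, since the setup in Section~\ref{sec:model} was essentially designed to match theirs. First, I would set up the correspondence: in their framework, an agent Bayes-updates a belief over a finite parameter set using signals whose distribution depends on the current belief through a ``policy'' map. In our setting, the platform plays the role of the agent, the hypothesis class $\hcQ$ plays the role of the parameter set, the proposition--behavior pair $(Z_t, B_t)$ plays the role of the signal, and the algorithm $p: \Delta(\hcQ) \to \Delta(\cZ)$ combined with the user's fixed strategy $q$ induces the belief-dependent signal distribution $p(\cdot;\mu_t)\times q$.

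Next, I would verify that the preconditions of their theorem are satisfied. Assumption~\ref{ass:frick_reg} supplies the dominating $\sigma$-finite measure $\nu$ so that all the relevant densities and KL divergences are well-defined. The three conditions of Assumption~\ref{ass:p-cont}---common support, a bounded likelihood-ratio envelope, and continuity of $p(Z;\mu)$ in $\mu$ near each vertex $\delta_{\hq}$---are exactly the regularity assumptions (up to notation) that \citet{frick2020stability} impose to make their stability theorem applicable to a belief-dependent data-generating process. With this in hand, their theorem states that the fixed point of the iterated strict-KL-dominance elimination operator almost surely absorbs the belief from any full-support initial prior, i.e., $\bbP(\mu_t(\cap_n R^n(\hcQ)) \to 1) = 1$. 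Translating this through our \cref{def:stable,def:KLdom} gives exactly that $S^\infty(q, p, \hcQ)$ is $(q,p,\hcQ)$-globally stable.

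The only nontrivial conceptual step is recognizing why \emph{iterated} elimination (rather than one-shot elimination of KL-minimizers in $\hcQ$) is the correct object: once a strictly dominated model $\hq_j$ has its posterior weight driven to zero, the effective belief simplex shrinks to $\Delta(\hcQ \setminus \{\hq_j\})$, so the set of proposition distributions $p(\cdot;\mu)$ that the platform can ever play in the limit also shrinks, which can render further models dominated that were not dominated globally. I would flag this as the conceptual core of \citet{frick2020stability}'s argument and note that our proof is essentially an invocation of their result.

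The main (mild) obstacle I anticipate is purely notational: \citet{frick2020stability} phrase everything in terms of an abstract action/signal pair whose joint law factors through a policy, whereas we separate out the user's strategy $q$ from the proposition map $p$. I would therefore devote a short paragraph to writing $p(\cdot;\mu)\times q$ explicitly as the ``signal distribution'' in their sense and checking that strict KL dominance as defined in our \cref{def:KL_dom} coincides with theirs on this joint. After that, the theorem follows by direct citation, which matches the way it is already stated in the excerpt.
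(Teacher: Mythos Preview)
Your proposal is correct and matches the paper's approach exactly: the paper does not prove this theorem at all but simply states it as a direct citation of \citet{frick2020stability}, so your plan to set up the dictionary between the platform's Bayesian updating and their misspecified-learning framework and then invoke their result is precisely what is intended. Your additional care in spelling out the correspondence and checking that \cref{ass:frick_reg,ass:p-cont} line up with their regularity conditions is more explicit than the paper itself, but the underlying logic is identical.
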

Theorem \ref{thm:global_stability} allows us to characterize strategic users. 
Suppose, for example, that given any user strategy $q$, the set 
$S^\infty ( q, p, \hcQ )$ contains a single element $\hq^*(q)$.
In this case, we can be certain that the 
platform's belief converges to $\delta_{\hq^*(q)}$ as $t \rightarrow \infty$. 
A $S^\infty$-strategic user would then choose the strategy $q$ that 
maximizes their limiting payoff $\smash{\brU(\pmu{\delta_{\hq^*(q)}}{\cdot} , q)}$.

In many cases, $\smash{S^\infty ( q, p, \hcQ)}$ reduces to a single element,
called the \emph{Berk-Nash equilibrium} \cite{esponda2016berk}. 
In general, however, 
$\smash{S^\infty ( q, p, \hcQ )}$ can contain more than one element.
In these cases, 
Theorem~\ref{thm:global_stability} tells us that the platform's proposition distribution as $t \rightarrow \infty$ 
is contained within the set $\{ p(\cdot; \mu) : \mu \in \Delta( S^\infty ( q, p, \hcQ ) )\}$.

\subsection{User strategization can help the platform}
\label{sec:general_strat_helps}
Our first main result shows that user strategization can \emph{improve} the platform's payoff $\brV$. 
This occurs when the payoffs of the user and platform are sufficiently aligned.

To see why this might be the case, consider YouTube. 
YouTube would like to engage and retain users
(by serving users good recommendations)
while ensuring the profitability of their platform. 
Although users on YouTube may not care about the platform's profitability,
they do want good recommendations.
In this way, there is some alignment between the user and platform payoff. 
Users on YouTube often have an idea of how the YouTube algorithm works and, 
in response, adapt to the algorithm in order to elicit better recommendations. 
This strategization can ultimately help the platform by improving their recommendations and therefore increasing user engagement. 

The following result (whose interpretation we discuss below) 
corroborates this relationship, 
showing that user strategization can increase the platform's payoffs.

\begin{restatable}{proposition}{strathelps}
	\label{prop:strat_helps_platform}
	Consider a platform strategy $(p, \hcQ)$ 
	{and suppose} that $U( B, Z)$ {has a unique maximizer in $\cB$} for all $Z$.
	{For a user strategy $q$},  
	{let} $\tV(q)$ {be} the platform's worst-case limiting payoff, 
	\begin{align}\label{eq:prop_strat_helps_1}
		\tV(q) \coloneqq \min_{\mu \in \Delta(S^\infty(q, p, \hcQ))} \brV (p(\cdot; \mu), q) , \qquad \forall q \in \cQ ,
	\end{align}
	where $S^\infty$ is defined in \cref{def:KLdom}.
	Then, user strategization strictly improves the platform's worst-case limiting payoff if
	\begin{align}
		\tV \left( \arg\max_{q\in\cQ} \min_{\mu \in \Delta(S^\infty (q, p, \hcQ))} \brU(p(\cdot; \mu) , q) \right) 
		&>
		\tV \left( \arg\max_{q\in\cQ}\min_{\mu \in \Delta(\hcQ)} \brU(p(\cdot; \mu) , q) \right).
		 \label{eq:strategization_helps_platform2}
	\end{align}
	The same is true if the $\min$ in \eqref{eq:prop_strat_helps_1} is swapped out for a $\max$.
\end{restatable}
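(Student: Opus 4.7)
The plan is to recognize that the inequality in \eqref{eq:strategization_helps_platform2} is really a direct comparison of $\tV$ evaluated at two specific user strategies: the $S^\infty$-strategic user on the left, and the naive user $q^\BR$ on the right. Once both sides are identified in these terms, the conclusion that user strategization strictly improves the platform's worst-case limiting payoff is essentially immediate from the definitions.

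First I would unpack the LHS: by \cref{def:S-strat-user} instantiated with $S = S^\infty$, the argument $\arg\max_q \min_{\mu \in \Delta(S^\infty(q,p,\hcQ))} \brU(p(\cdot;\mu),q)$ is precisely the $S^\infty$-strategic user's strategy $q^\star$, so $\tV(q^\star)$ is the platform's worst-case limiting payoff against such a user. Second, and this is the only substantive step, I would prove that $q^\BR \in \arg\max_q \min_{\mu \in \Delta(\hcQ)} \brU(p(\cdot;\mu),q)$, so that the RHS of \eqref{eq:strategization_helps_platform2} equals $\tV(q^\BR)$ — the worst-case limiting payoff against a naive user. Using the decomposition
\[
\brU(r,q) = \bbE_{Z \sim r}\!\left[\bbE_{B \sim q(\cdot|Z)}[U(Z,B)] - \lambda \cdot d_{TV}(q(\cdot|Z), q^\BR(\cdot|Z))\right],
\]
I would argue pointwise in $Z$: the uniqueness of the maximizer of $U(Z,\cdot)$ makes $q^\BR(\cdot|Z)$ the unique maximizer of $\bbE_{B \sim q(\cdot|Z)}[U(Z,B)]$, and it simultaneously zeros out the penalty term. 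Hence $q^\BR$ pointwise maximizes the integrand, so $\brU(r,q^\BR) \geq \brU(r,q)$ for every $r \in \Delta(\cZ)$ and every $q \in \cQ$. Taking the $\min$ over $\mu \in \Delta(\hcQ)$ on both sides preserves the inequality, which places $q^\BR$ in the outer $\arg\max$.

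Combining the two identifications, the hypothesis \eqref{eq:strategization_helps_platform2} reads $\tV(q^\star) > \tV(q^\BR)$, which is exactly the statement that the platform's worst-case limiting payoff is strictly higher against a strategic user than against a naive one. For the variant where the inner $\min$ in \eqref{eq:prop_strat_helps_1} is replaced by $\max$, the identification of the naive user's strategy is unchanged (it relies only on the structure of $\brU$), and the remainder of the argument carries through with best-case limiting payoffs in place of worst-case ones.

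The main potential obstacle is non-uniqueness of the outer $\arg\max$: multiple strategies $q$ can agree with $q^\BR$ off the supports of $p(\cdot;\mu)$ and still maximize $\min_\mu \brU(p(\cdot;\mu),q)$. This is benign — under the support condition in \cref{ass:p-cont}, $\brU$ and $\tV$ depend on $q$ only through its values on those supports, so every element of the outer $\arg\max$ induces the same $\tV$, and reading $\tV$ at $q^\BR$ is unambiguous. With this minor bookkeeping done, the proposition reduces to the pointwise maximization argument above.
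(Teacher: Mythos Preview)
Your proposal is correct and follows essentially the same approach as the paper: identify the left-hand side with $\tV(q^\star)$ by definition of the $S^\infty$-strategic user, and the right-hand side with $\tV(q^\BR)$ by showing that $q^\BR$ is the (essentially unique) maximizer of $\min_{\mu \in \Delta(\hcQ)} \brU(p(\cdot;\mu), q)$. The only cosmetic difference is that you establish $q^\BR \in \arg\max$ via a direct pointwise argument (maximizing the integrand for each $Z$), whereas the paper routes through the min-max inequality; both then handle non-uniqueness by the same off-support observation.
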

\begin{proof}
	See \cref{app:strat_helps_proof}.
\end{proof}
\noindent \cref{prop:strat_helps_platform} 
gives simple conditions under which platform payoff is \emph{strictly higher} when a user strategizes than when a user is naive. 
In particular, 
there are two takeaways from \cref{prop:strat_helps_platform}:
\begin{enumerate}
	\item \emph{Strategization helps platforms when $U$ and $V$ are sufficiently aligned.}
	Observe that the right-hand side of \eqref{eq:prop_strat_helps_1} is identical 
	to the analogous expression on the left-hand side of \eqref{eq:strategization_helps_platform2}, 
	except that $\brV$ is swapped out for $\brU$ in  \eqref{eq:strategization_helps_platform2}. 
	Therefore, 
	when $U$ and $V$ are sufficiently aligned, 
	finding a $q$ that maximizes the worst-case user payoff (i.e., $\min_{\mu \in \Delta(S^\infty (q, p, \hcQ))} \brU(p(\cdot; \mu) , q)$) will also yield a high platform payoff. 

	The following corollary confirms this intuition, showing that user
	strategization can never lower the platform's payoff when $U$ and $V$ are
	perfectly aligned. 
\begin{corollary}
	Under the setup of \cref{prop:strat_helps_platform}, 
	the platform's payoff under a strategic user is at least as high as its payoff under a naive user when $U = V$.
\end{corollary}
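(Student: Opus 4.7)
The plan is a short inequality chain that exploits two elementary observations when $U = V$. First, since $\lambda \geq 0$ and $d_{TV}(q(\cdot|Z), q^\BR(\cdot|Z)) \geq 0$, the definition of $\brU$ yields the pointwise bound $\brU(r, q) \leq \brV(r, q)$ for every $r \in \Delta(\cZ)$ and every user strategy $q$. Second, this bound is tight at the naive strategy, since the penalty vanishes when $q = q^\BR$; hence $\brU(r, q^\BR) = \brV(r, q^\BR)$ for all $r$.

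With these in hand, let $q^\star$ denote the strategic user's strategy, i.e., the $\arg\max$ appearing in \eqref{eq:S-strat} with $S = S^\infty$, and write $\mathcal{M}(q) := \Delta(S^\infty(q, p, \hcQ))$ for brevity. I would then assemble the chain
\begin{align*}
\tV(q^\star)
&= \min_{\mu \in \mathcal{M}(q^\star)} \brV(p(\cdot;\mu), q^\star)
\;\geq\; \min_{\mu \in \mathcal{M}(q^\star)} \brU(p(\cdot;\mu), q^\star) \\
&\geq \min_{\mu \in \mathcal{M}(q^\BR)} \brU(p(\cdot;\mu), q^\BR)
\;=\; \tV(q^\BR),
\end{align*}
where the first inequality applies the pointwise bound $\brU \leq \brV$ inside the $\min$, the second is exactly the definition of $q^\star$ as a maximizer of $\min_{\mu \in \mathcal{M}(q)} \brU(p(\cdot;\mu), q)$ (and hence does at least as well as plugging in $q = q^\BR$), and the final equality uses that the penalty in $\brU$ vanishes at $q^\BR$.

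There is no substantive obstacle: the argument is essentially a sandwich in which the penalty term in $\brU$ is oriented in the correct direction, and $q^\BR$ sits exactly at the point where that penalty vanishes. The only minor subtlety is that the two inner minima range over different sets $\mathcal{M}(q^\star)$ and $\mathcal{M}(q^\BR)$, but this is harmless because the optimality step compares fully optimized objective values rather than pointwise ones. The same argument, with every $\min$ replaced by $\max$, handles the $\max$-variant of $\tV$ mentioned at the end of \cref{prop:strat_helps_platform}.
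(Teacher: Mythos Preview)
Your chain for the worst-case ($\min$) version is correct and is precisely the natural direct argument: the penalty term in $\brU$ makes $\brU\le\brV$ pointwise when $U=V$, with equality at $q^\BR$, and the optimality of $q^\star$ supplies the middle inequality. The paper does not spell out a separate proof for this corollary, so your derivation is essentially the intended one.

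One caveat on your closing sentence: the ``replace every $\min$ by $\max$'' claim for the $\max$-variant of $\tV$ does not go through as written. The strategic user $q^\star$ is, by \cref{def:S-strat-user}, the maximizer of $q\mapsto \min_{\mu\in\mathcal{M}(q)}\brU(p(\cdot;\mu),q)$; that definition does not change when one swaps the $\min$ in the definition of $\tV$. Consequently the step
\[
\max_{\mu\in\mathcal{M}(q^\star)}\brU(p(\cdot;\mu),q^\star)\;\ge\;\max_{\mu\in\mathcal{M}(q^\BR)}\brU(p(\cdot;\mu),q^\BR)
\]
is not justified by optimality of $q^\star$, since $q^\star$ maximizes the $\min$ objective, not the $\max$ one. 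Either drop that remark, or argue the $\max$-variant by first passing through the $\min$ (i.e., $\max_\mu\brV(q^\star)\ge\min_\mu\brV(q^\star)\ge\min_\mu\brV(q^\BR)$), which only yields a comparison with the naive user's \emph{worst-case} payoff, not their best-case payoff.
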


	\item \emph{Strategization can be viewed as a form of coordination}. 
This follows from the fact that the left- and right-hand sides of \eqref{eq:strategization_helps_platform2} look similar, 
with the only difference being the set of beliefs over which $\mu$ is minimized. 
The set on the left-hand side is constrained by the user's anticipation 
of how the platform will behave. 
When the user and platform have aligned incentives, 
and the user has a good idea of how the platform will behave,
the resulting ``coordination'' helps both the user and the platform. 	

The following corollary solidifies this intuition, showing that even when user and platform 
have only partially aligned payoffs, they can coordinate (via strategization) to find a region 
of proposition space where their payoffs are aligned.
\begin{corollary}
	Assume the same setting as \Cref{prop:strat_helps_platform}.
	The platform's payoff under a strategic user is at least as high as its
	payoff under a naive user when there exist functions $g: \cB \to \bbR$ and 
	$f: \cZ \to \{-1, 1\}$ such that user and platform payoffs decompose as  
	$U(B, Z) = g(B) \cdot f(Z)$ and $V(B, Z) = g(B)$ respectively.
\end{corollary}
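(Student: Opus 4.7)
The plan is to compare the platform's worst-case limiting payoff under the strategic user's strategy $q^\star$ to the payoff under the naive strategy $q^\BR$ by directly exploiting the payoff decomposition. Let $W(q) \coloneqq \min_{\mu \in \Delta(S^\infty(q, p, \hcQ))} \brV(p(\cdot;\mu), q)$; I want to show $W(q^\star) \geq W(q^\BR)$, which together with \cref{prop:strat_helps_platform} (applied with ``$\geq$'' in place of ``$>$'') yields the corollary.

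First I would substitute $U(B,Z) = g(B) f(Z)$ and $V(B,Z) = g(B)$ into the expected payoff formulas. Setting $\bar g(Z;q) \coloneqq \bbE_{B \sim q(\cdot|Z)}[g(B)]$, this gives $\brV(r,q) = \bbE_{Z \sim r}[\bar g(Z;q)]$ and $\brU(r,q) = \bbE_{Z \sim r}[f(Z)\,\bar g(Z;q)] - \lambda \bbE_{Z \sim r}[d_{TV}(q(\cdot|Z), q^\BR(\cdot|Z))]$. Partitioning $\cZ$ into $\cZ_+ = f^{-1}(1)$ and $\cZ_- = f^{-1}(-1)$, the naive strategy satisfies $\bar g(Z;q^\BR) = \max_B g(B)$ on $\cZ_+$ and $\bar g(Z;q^\BR) = \min_B g(B)$ on $\cZ_-$, so that $\brU$ and $\brV$ coincide on $\cZ_+$ while the user and platform are directly misaligned on $\cZ_-$.

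Next I would argue that the strategic user's optimization shifts platform mass toward $\cZ_+$ while leaving behavior on $\cZ_+$ essentially naive. On $\cZ_+$ the user's objective contribution $f(Z)\bar g(Z;q) = \bar g(Z;q)$ matches the platform's, and the TV penalty in $\brU$ discourages any deviation from $q^\BR$ there, so $\bar g(Z;q^\star) = \max_B g(B)$ on (at least the platform-visited portion of) $\cZ_+$. On $\cZ_-$ the user is incentivized to lower $\bar g(Z;q)$; by the Bayesian update \eqref{eq:belief_update} and \cref{thm:global_stability}, any surviving model $\hq_i \in S^\infty(q^\star, p, \hcQ)$ must predict low engagement on $\cZ_-$, and by \cref{ass:well-behaved-p} the induced recommendation distribution $p(\cdot;\mu)$ places correspondingly low mass on $\cZ_-$. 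Combining these effects, I expect $W(q^\star) \geq \max_B g(B) \cdot r^\star(\cZ_+) + \min_B g(B) \cdot r^\star(\cZ_-) \geq W(q^\BR)$, using $\max_B g(B) \geq \min_B g(B)$ and the shift $r^\star(\cZ_+) \geq r^\BR(\cZ_+)$.

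The main obstacle is formalizing the mass-shift claim that $S^\infty(q^\star, p, \hcQ)$ consists of beliefs whose induced recommendation distributions underweight $\cZ_-$ relative to $S^\infty(q^\BR, p, \hcQ)$. This requires tracking the iterated KL-dominance operator (\cref{def:KLdom}) under $q^\star$ versus $q^\BR$ and invoking \cref{ass:well-behaved-p} to pass belief changes through to proposition changes. A secondary obstacle is ruling out pathological strategic deviations on $\cZ_+$ that would break the alignment of $\brU$ and $\brV$ there; the TV penalty $\lambda d_{TV}(q(\cdot|Z), q^\BR(\cdot|Z))$ should handle this, but its effect must be carefully bounded against the long-run gain the user extracts from manipulating the platform's limiting belief, and edge cases where $\cZ_-$ cannot be fully suppressed (because no $\hq \in \hcQ$ represents a user who dislikes $\cZ_-$) must be checked to ensure the weak inequality $W(q^\star) \geq W(q^\BR)$ still holds.
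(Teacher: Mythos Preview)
Your approach has a genuine gap, and it stems from missing the one free inequality that drives the paper's argument. The paper does not spell out a proof of this corollary, but the intended route is a direct generalization of the stylized-example calculation in \cref{prop:illustrative_improve_utility}: by definition of the strategic user,
\[
\min_{\mu \in \Delta(S^\infty(q^\star,p,\hcQ))}\brU\bigl(p(\cdot;\mu),q^\star\bigr)\;\ge\;\min_{\mu \in \Delta(S^\infty(q^\BR,p,\hcQ))}\brU\bigl(p(\cdot;\mu),q^\BR\bigr),
\]
simply because $q^\BR$ is feasible in the outer $\arg\max$ of \eqref{eq:S-strat}. Dropping the nonpositive TV penalty on the left and noting it vanishes on the right, this becomes an inequality between expected $U$-values. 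One then uses the pointwise identity $V(B,Z)-U(B,Z)=g(B)\bigl(1-f(Z)\bigr)=2\,g(B)\,\ind{Z\in\cZ_-}$, together with the fact that the naive user plays $\arg\min_B g(B)$ on $\cZ_-$, to pass from the $\brU$-inequality to the desired $\brV$-inequality. This is exactly the manipulation in \cref{app:illustrative_improve_utility_proof} (where $g(B)=B$, so $\min_B g(B)=0$ and the $\cZ_-$ terms drop out cleanly).

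By contrast, you never invoke the optimality of $q^\star$ for the \emph{user}. Instead you try to characterize $q^\star$ and the induced platform dynamics from scratch, which forces you into two claims that do not follow from the hypotheses. First, the assertion that the strategic user plays $q^\BR$ on $\cZ_+$ is not guaranteed: for small $\lambda$ the TV penalty need not outweigh the long-run gain from deviating on $\cZ_+$ to steer the platform's belief. Second, the mass-shift claim $r^\star(\cZ_+)\ge r^\BR(\cZ_+)$ is a monotonicity statement about $(p,\hcQ)$ that \cref{ass:well-behaved-p} does not imply (it is a Lipschitz condition, not a monotonicity condition), and in any case \cref{ass:well-behaved-p} is not part of the setting of \cref{prop:strat_helps_platform}. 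Both obstacles disappear once you start from the $\brU$-inequality above rather than from a structural description of $q^\star$.
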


\end{enumerate}

\subsection{User strategization can cause unexpected behavior}
\label{sec:strategization_bad}

Although user strategization can help the platform under its current strategy $(p, \hcQ)$, 
we show that strategization interferes with learning by ``corrupting'' the
data that the platform collects. 
We further show strategization can cause other unexpected behavior---specifically,
expanding the hypothesis class $\hcQ$ that the platform uses can unexpectedly \emph{hurt} the platform under strategization.

\subsubsection{Changing the algorithm $p$ results in unpredictable payoff}

In this section, consider a fixed hypothesis class $\hcQ$.
Suppose that, after deploying algorithm $p$, 
the platform wants to change its algorithm to a counterfactual algorithm 
$\pcf$.
In these cases, the platform may wish to \emph{estimate} their expected 
payoff if they were to switch to $\pcf$.
Formally, for any belief $\mu \in \Delta(\hcQ)$ 
and algorithm $p'$, we define the platform's 
{\em predicted payoff} as 
\begin{equation}
	\label{eq:estimated_payoff}
	\hV(p', \mu) \coloneqq \bbE_{\hq \sim \mu}\left[
		\brV(p'(\cdot; \mu), \hq)
	\right] .
\end{equation}
Now,
consider a user who is $S^\infty$-strategic.
Recall from Section \ref{sec:strat_user} that 
$q^\star\!(\pcf, \hcQ)$ denotes the strategy that the strategic user adopts
if the platform employs strategy $(\pcf, \hcQ)$. 
We can consequently write the platform's worst-case payoff under $\pcf$ and user strategization as
\begin{equation}
	\label{eq:true_payoff}
	\brV^\star (\pcf) \coloneqq \min_{\mu \in \Delta(S^\infty(q, \pcf ))} \brV(\pcf, q^\star(\pcf)).
\end{equation}
The following result shows that
if a platform gathers data under $p$ but is interested in estimating its payoff under an alternative algorithm $\pcf$, 
its estimate may be arbitrarily bad when the user is strategic. This result holds {\em even when the platform is (nearly) correctly specified},
i.e., even for the highly expressive hypothesis class $\hcQ$ defined below:
\begin{definition}[$\eps$-net hypothesis class]
	\label{ass:no_misspec}
	For a finite proposition space $\cZ$ and some $\eps > 0$, 
	let $\Delta_\eps(\cB)$ be an $\eps$-net of $\Delta(\cB)$
	with respect to the $\ell_\infty$ metric.
	An $\eps$-net hypothesis class is the set of all possible mappings from 
	propositions in $\cZ$ to behavior distributions in $\Delta_\eps(\cB)$, i.e.,
	$\hcQ_{\eps} \coloneqq {\Delta_{\eps}(\cB)}^\cZ$ (and thus
	$|\hcQ_\eps| \approx (\frac{1}{\eps})^{|B| \cdot |Z|}$).
\end{definition}
By proving the result below under \cref{ass:no_misspec},
we rule out the case where the platform is
unable to predict counterfactual payoffs simply because it cannot accurately
model the user.
\begin{restatable}{proposition}{badpredictiveness}
	\label{prop:cf_p}
	Consider a given platform strategy $(p, \hcQ)$ and platform payoff function $V$.
	Suppose that $\hcQ$ is an $\eps$-net (\cref{ass:no_misspec}) for some
	sufficiently small $\eps$,
	that $\pmu{\mu}{\cdot}$ has full support for all $\mu$,
	and that \cref{ass:perturbable} holds.
	Define $\zeta$ as the maximum gap in predicted platform payoff, 
	i.e.,
	\begin{align*}
		\zeta(p') = \max_{\hq_1, \hq_2 \in \cQ} \brV(p'(\cdot; \delta_{\hq_1}), \hq_1)
		- \brV(p'(\cdot; \delta_{\hq_2}), \hq_2).
	\end{align*}
	Further assume that there exists $\beta \geq \alpha \geq 0$ such that 
	$\alpha \leq \text{Var}_{p(\cdot; \mu) \times q}[V(B, Z)] \leq \beta$
	for any belief $\mu$ and user strategy $q$.
	Then, for any $\eps_0, \eps_1 > 0$, there exists a $\pcf$ and $U$ such that 
	$d_\cP(p, \pcf) \leq \eps_0$, and
	\begin{align}\label{eq:bad_CF}
		\min_{\mu \in S^\infty( q , p )}
		\left|
		\hV(\pcf, \mu)
		- 
		\brV^\star
		(\pcf)
		\right| 
		\geq \sqrt{\zeta(\pcf)^2 - 4(\beta - \alpha)} - \eps_1.
	\end{align}
\end{restatable}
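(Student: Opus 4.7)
The plan is to explicitly construct an adversarial pair $(\pcf, U)$ that makes the platform's counterfactual prediction maximally misleading, even though the $\eps$-net hypothesis class is essentially correctly specified.

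First, I would simplify both sides of~\eqref{eq:bad_CF}. Because $\hcQ$ is an $\eps$-net, for every user strategy $q \in \cQ$ there exists some $\hq \in \hcQ$ within $\eps$ of $q$ at every proposition, so by \cref{thm:global_stability} and \cref{def:KLdom} every $\mu \in S^\infty(q^\star(p, \hcQ), p)$ concentrates, up to $O(\eps)$, on a single hypothesis $\hq^p \in \hcQ$ that best KL-approximates the strategic response $q^\star(p, \hcQ)$; an analogous statement gives a $\hq^{\pcf}$ for the strategic response $q^\star(\pcf, \hcQ)$. Modulo an $O(\eps)$ error that will be absorbed into $\eps_1$, the predicted payoff then reduces to $\brV(\pcf(\cdot;\delta_{\hq^p}), \hq^p)$ and the true limiting payoff to $\brV(\pcf(\cdot;\delta_{\hq^{\pcf}}), \hq^{\pcf})$. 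Maximizing the gap in~\eqref{eq:bad_CF} thus reduces to separating $\hq^p$ from $\hq^{\pcf}$ while pushing the two corresponding platform payoffs to opposite ends of the $\zeta(\pcf)$ range.

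Second, I would pick $q_A, q_B \in \cQ$ achieving the extrema of $\zeta(\pcf)$ (up to $\eps$) and jointly design $U$ and $\pcf$ so that the strategic user's outer $\arg\max$ in \eqref{eq:S-strat} flips between them. Concretely, $U$ is set so that $q_A$ and $q_B$ are both optima of the user's min-max objective under $p$, and \cref{ass:perturbable} is invoked to perturb $p$ into $\pcf$ with $d_\cP(p, \pcf) \leq \eps_0$ in a way that strictly tips the inner value, sending the strategic user to $q_A$ under $p$ and to $q_B$ under $\pcf$. This step exploits the fact that $q^\star$, being defined through a $\max$-$\min$, is discontinuous at indifference points, so arbitrarily small TV perturbations of $p$ suffice to flip the strategic response even though $p$ and $\pcf$ remain $\eps_0$-close.

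Third, I would turn the resulting mean-payoff gap into the claimed bound via the hypothesis $\alpha \leq \mathrm{Var}_{p(\cdot;\mu) \times q}[V] \leq \beta$. Rewriting each expected payoff through its second moment and applying a Popoviciu-style inequality converts the $\zeta(\pcf)$ gap in the \emph{range} of $V$ into a gap of at least $\sqrt{\zeta(\pcf)^2 - 4(\beta - \alpha)}$ in \emph{means}, with the remaining $\eps$-net slack and the strategy-versus-model approximation collapsing into the additive $-\eps_1$ term once $\eps$ is chosen small enough.

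The principal obstacle is the coupled construction in Step~2: one must design $(\pcf, U)$ so that an arbitrarily small TV perturbation of $p$ simultaneously flips the strategic user's outer $\arg\max$ and drives the two resulting limiting beliefs to opposite extremes of $\zeta(\pcf)$. A secondary difficulty is that $S^\infty(q^\star(p,\hcQ), p)$ need not be a singleton, so I would need to verify that the adversarial misprediction persists for the \emph{worst-case} $\mu$ in the outer $\min$ of~\eqref{eq:bad_CF}; the $\eps$-net assumption is essential here because it forces every member of $S^\infty$ to concentrate near $q^\star(p,\hcQ)$, making the outer $\min$ tight.
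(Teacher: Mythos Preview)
Your high-level strategy (engineer an indifference point for the strategic user, then perturb $p$ to flip the $\arg\max$) matches the paper's, and your Step~1 use of the $\eps$-net to collapse $S^\infty$ to near-point masses is exactly what the paper does via its Lemma~\ref{applem:eps_net}. The gap is in Steps~2--3: you never say \emph{how} to build $U$, and your account of where the variance hypothesis enters is wrong.

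The paper's key construction is the very specific choice $U(B,Z) = (V(B,Z)-c)^2$ for a scalar $c$ to be tuned. With this $U$ one has $\brU(p^\mu,q,c)=\text{Var}_{p^\mu\times q}[V]+(\brV(p^\mu,q)-c)^2$, so the variance bounds $\alpha\le\text{Var}[V]\le\beta$ immediately force any user-optimal $\hq$ to satisfy $(\brV(\hq)-c)^2\ge \zeta^2/4-(\beta-\alpha)$; this is where the $\sqrt{\zeta^2-4(\beta-\alpha)}$ term comes from, not from any Popoviciu-type range-to-mean conversion. (Note that $\zeta$ is already a gap in \emph{means} $\brV$, not in the range of $V$.) The remaining work is to choose $c$: the paper partitions $\hcQ$ into $\hcQ_+(c),\hcQ_-(c)$ by the sign of $\brV-c$, shows $M_\pm(c)=\max_{\hq\in\hcQ_\pm(c)}\tU(\hq,c)$ are monotone and continuous on suitable intervals, and uses the intermediate value theorem to locate a $c^\star$ at which the user is indifferent between an element of $\hcQ_+$ and one of $\hcQ_-$. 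Only then does \cref{ass:perturbable} (via a patching lemma) supply the $\eps_0$-perturbation $\pcf$ that breaks the tie. Your plan, by contrast, asserts one can ``set $U$ so that $q_A$ and $q_B$ are both optima'' without a mechanism; absent the $(V-c)^2$ device and the IVT argument for $c^\star$, there is no reason the $\zeta$-extremizers would be simultaneously user-optimal, and no route to the quantitative bound.
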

\begin{proof}
	See \cref{app:nonsmooth_proof}. 
\end{proof}
\noindent If the variance of the platform's payoff does not change much across
beliefs (i.e., if $\alpha \approx \beta$),
the right side of \eqref{eq:bad_CF} is effectively the largest gap that can exist between an estimated payoff and the true payoff. 
As such, \cref{prop:cf_p} states that data collected under $p$ can be unhelpful for making predictions about a counterfactual algorithm.
The intuition behind this result is that, when users strategize, the platform limiting payoff is 
{\em nonsmooth} in $p$, i.e., there exists a $\pcf$ that is 
$\eps$-close to $p$ for which the payoff under strategization is far from that under~$p$.

\subsubsection{Expanding \texorpdfstring{$\hcQ$}{} can hurt the platform} 
Next, we show that when a user is strategic, 
expanding $\hcQ$ can hurt the platform's payoff.
This is somewhat counterintuitive, as $\hcQ$ is the hypothesis class that
the platform uses to infer the user's behavior $q$, 
and expanding one's model family typically does not hurt estimation.  

To make this statement formal, 
suppose that the user is $S^\infty$-strategic and that the 
algorithm $p$ is fixed. 
(Since $p$ is held fixed in this section, we suppress its notation below.)
Let $q^\star(\hcQ)$ denote the strategy that the strategic user adopts if the platform uses hypothesis class $\hcQ$. 
That is, let
\begin{align*}
	q^\star(\hcQ) &\coloneqq \arg \max_{q \in \cQ} \min_{\mu \in S^\infty( q , \hcQ )} \brU (p(\cdot; \mu), q).
\end{align*} 
\begin{restatable}{proposition}{badomega}\label{prop:cf_omega}
	Consider a platform strategy $\!(p, \hcQ)\!$ and
	a platform payoff function $V$ bounded in $[0, 1]$.
	Suppose $(\cZ, \cB, V, p, \hcQ)$ satisfy the expansion 
	assumption (\cref{ass:nontrivial_omega}), and that $V(Z, B)$ has a unique 
	maximizer with respect to $B$ for each $Z \in \cZ$.
	Then, there exist a user payoff function $U$
	and sets $\hcQ_1, \hcQ_2 \subset \hcQ$ such that  
	$\hcQ_1 \subseteq \hcQ_2$, but 
	\begin{align*}
		\min_{\mu \in S^\infty( q^\star(\hcQ_1) , \hcQ_1 )}\brV(p(\cdot; \mu), q^\star(\hcQ_1)) 
		> 
		\max_{\mu \in S^\infty( q^\star(\hcQ_2), \hcQ_2 )}\brV(p(\cdot; \mu), q^\star(\hcQ_2)).
		\label{eq:counterfactual_Omega}
	\end{align*}
\end{restatable}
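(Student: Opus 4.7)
The statement is existential, so I would prove it by explicit construction, essentially lifting the recommender-system example of \cref{prop:illustrative_expand_omega} to verify all of the abstract regularity hypotheses. Start from the stylized setup of \cref{sec:illustrative_setting}: a finite content space $\cZ = \cZ_A \cup \cZ_B$, behaviors $\cB = \{0,1\}$, the algorithm $p$ from \eqref{eq:rec_alg_1_ex}, and the nested hypothesis class $\hcQ_1 = \{\hq_1, \hq_2, \hq_3\}$ from \eqref{eq:parameters}. Pick an affinity function $a(\cdot)$ so that the user's favorite items $\cZ^+$ straddle both $\cZ_A$ and $\cZ_B$ but are concentrated in, say, $\cZ_A$, and a user payoff $U$ of the form $U(Z,B) = B \cdot a(Z)$ together with $V(Z,B) = B$ (so the $V$-maximizer in $\cB$ is unique at every $Z$, as required). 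By \cref{prop:limiting_illustrative} and the strategization argument in \cref{prop:illustrative_strat_happens}, the $S^\infty$-strategic user under $\hcQ_1$ adopts $q^\star(\hcQ_1)$ which clicks only on items in $\cZ^+\cap\cZ_A$, driving the platform to the limiting belief $\delta_{\hq_1}$ and yielding a high platform payoff $\brV_1^\star$.

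Next I would construct $\hcQ_2 = \hcQ_1 \cup \{\hq_4\}$, where $\hq_4$ is engineered to do two things at once. First, when evaluated on the joint distribution $p(\cdot;\mu)\times q^\star(\hcQ_1)$, $\hq_4$ must \emph{strictly KL-dominate} every other element of $\hcQ_2$ for all $\mu\in\Delta(\hcQ_2)$; by \cref{def:KLdom} and \cref{thm:global_stability} this forces $S^\infty(q^\star(\hcQ_1), p, \hcQ_2) = \{\hq_4\}$, i.e.\ the platform now converges to $\delta_{\hq_4}$ instead of $\delta_{\hq_1}$ when the user plays the old strategy. Concretely, take $\hq_4$ to match the empirical click pattern of $q^\star(\hcQ_1)$ on $\cZ^+\cap\cZ_A$ more tightly than $\hq_1$ does (so its KL to the truth is strictly smaller), but have $p(\cdot;\delta_{\hq_4})$ concentrate on items in $\cZ_A\setminus\cZ^+$ (i.e.\ items the user dislikes). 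Second, the resulting induced proposition distribution $p(\cdot;\delta_{\hq_4})$ should make $q^\star(\hcQ_1)$ strictly worse for the user than some alternative strategy. A routine verification (mirroring the proof of \cref{prop:illustrative_expand_omega}) shows that the $S^\infty$-strategic user's best response under $\hcQ_2$ is then some $q^\star(\hcQ_2)\neq q^\star(\hcQ_1)$ that steers the platform away from $\hq_4$---e.g.\ by clicking on enough items in $\cZ_B$ to make $\hq_3$ (or another low-payoff model) KL-optimal---and the corresponding limiting platform payoff satisfies $\max_{\mu\in \Delta(S^\infty(q^\star(\hcQ_2), p, \hcQ_2))}\brV(p(\cdot;\mu),q^\star(\hcQ_2)) < \brV_1^\star$, giving the desired strict inequality.

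I would then verify that the construction respects the stated hypotheses. \cref{ass:frick_reg,ass:p-cont} are automatic on a finite $\cZ\times\cB$ once $\pmu{\mu}{\cdot}$ has full support (ensured by the $\varepsilon$-exploration term in \eqref{eq:rec_alg_1_ex}). \cref{ass:nontrivial_omega} is the subtle one and is actually the crux: for any fixed $q$, $S^\infty(q,p,\hcQ_1)\subseteq S^\infty(q,p,\hcQ_2)$ is not guaranteed, but one can check, case by case over the finite candidate set $\{q^\star(\hcQ_1), q^\star(\hcQ_2), q^{\BR}\}$, that $V = B$ implies $\brV(p(\cdot;\mu),q)$ is monotone in how well $\mu$ explains $q$'s click rates, so adding $\hq_4$ under a \emph{fixed} user strategy cannot decrease worst-case $\brV$. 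Once this is in place, the drop in platform payoff is attributable entirely to the user's change in strategy, which is exactly the point of the proposition.

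\textbf{Main obstacle.} The delicate part is calibrating $\hq_4$ so that it simultaneously (i) KL-dominates all three original models at the post-expansion Berk--Nash candidate beliefs (requiring careful placement of its support and click probabilities on $\cZ^+\cap\cZ_A$ versus its complement), (ii) induces a proposition distribution that strictly punishes $q^\star(\hcQ_1)$, and (iii) does not violate the monotonicity needed for \cref{ass:nontrivial_omega}. Because KL-dominance in \cref{def:KL_dom} must hold \emph{for all $\mu\in\Delta(\hcQ')$}, the verification step requires checking an inequality over a simplex rather than at a single point, and this is where most of the bookkeeping in the full proof will live.
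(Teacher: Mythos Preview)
There is a genuine gap in the quantifier structure you are working with. The proposition is universally quantified over the platform data $(p,\hcQ,V)$: it says that \emph{for any} $(p,\hcQ,V)$ satisfying the hypotheses, one can find $U$, $\hcQ_1$, $\hcQ_2$ making the inequality hold. Your plan instead fixes a particular $(p,\hcQ,V)$---the recommender system of \cref{sec:illustrative_setting} with $V(Z,B)=B$ and the $\varepsilon$-explore algorithm \eqref{eq:rec_alg_1_ex}---and then verifies the conclusion there. That establishes only a fully existential version of the statement and does not prove \cref{prop:cf_omega} as written. The constructions you propose (the affinity $a$, the extra model $\hq_4$, the specific click structure) all depend on that particular choice of platform, so the argument does not transfer to an arbitrary $(p,\hcQ,V)$.

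The paper's proof proceeds very differently and is much shorter. Given arbitrary $(p,\hcQ,V)$, it sets $U(B,Z)=V(B,Z)+\lambda\,g(Z)$, which forces $q^\BR$ to coincide with the pointwise $V$-maximizer regardless of $g$. It then picks from the given $\hcQ$ two models: $\hq_1$, a Berk--Nash equilibrium for $q^\BR$, and $\hq_2$, the model whose induced proposition distribution is Wasserstein-farthest from that of $\hq_1$. Taking $g$ to be the Kantorovich potential between $p(\cdot;\delta_{\hq_1})$ and $p(\cdot;\delta_{\hq_2})$ and $\lambda$ large enough makes the user strictly prefer the belief $\delta_{\hq_2}$ to $\delta_{\hq_1}$. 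The key trick is to take $\hcQ_1=\{\hq_2\}$ a \emph{singleton}, so under $\hcQ_1$ the platform trivially converges to $\delta_{\hq_2}$ and the user plays $q^\BR$ (maximizing $V$). Under $\hcQ_2=\{\hq_1,\hq_2\}$, playing $q^\BR$ leaves $\hq_1$ in the globally stable set (by its Berk--Nash property), which is bad for the user; hence the strategic user must deviate from $q^\BR$, and since $V$ has a unique maximizer in $B$ for each $Z$, any such deviation strictly lowers the platform's payoff. This avoids all of the simplex-wide KL-dominance bookkeeping you flagged as the main obstacle, and it works for the given $(p,\hcQ,V)$ rather than a handcrafted one.
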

\begin{proof}
	See \cref{app:cf_omega_pf}. The intuition is that the 
	platform can unintentionally remove the user's {\em means of strategization}.
	That is, a user may
	want to induce a specific belief from the platform without straying
    too far from their best-response strategy. 
    Thus, they purposefully pick a strategy $q^*_1$ that the platform 
	misinterprets in a desirable way.
    When the platform gets ``better'' at capturing their behavior by 
    adding a user model that is close to $q_1^*$, the user is forced to 
    move even further away from their best-response behavior, making things 
    worse for the platform.
\end{proof}

Like \cref{prop:cf_p},  \cref{prop:cf_omega} shows that user strategization can cause unexpected behavior. 
Both imply that strategization makes it difficult to use its data under one strategy $(p, \hcQ)$ make inferences about a different strategy.

\subsection{System behavior is more predictable for best-response users}
\label{sec:general_BR_good}
The previous section establishes that strategization can make it difficult for the platform to predict how the system behaves if either $p$ or $\hcQ$ is changed. 
This is consequential because the platform may wish to use the data that they have collected under $(p, \hcQ)$ to estimate some quantity under a counterfactual $\smash{(p', \hcQ')}$. 
In  this section, 
we show that \emph{these problems do not arise when users are naive}.
In other words, 
when the user plays according to their best-response at each timestep, 
the platform benefits because it is easier for them to make inferences under changes to $(p, \hcQ)$. 

Formally, suppose we have a best-response user,  and fix a platform hypothesis class $\hcQ$.
Recall the platform's payoff estimator \eqref{eq:estimated_payoff} (restated below),
and let $\brV_\BR (\pcf)$ denote the true payoff under $(\pcf , \hcQ)$ 
(i.e., analagous to \eqref{eq:true_payoff})
when the user is naive:
\begin{align*}
	\hV(\pcf, \mu) &= \bbE_{\hq \sim \mu}\left[
		\brV(\pcf(\cdot; \mu), \hq)
	\right] .
	\\
	\brV_\BR ( 
	\pcf
	)
	&= 
	\min_{\mu \in \Delta(S^\infty (  q^\BR , \pcf))}
	\brV 
	\left( 
	\pcf(\cdot; \mu) , 
	q^\BR
	\right) .
\end{align*}
Our first result shows that, 
in contrast to Proposition \ref{prop:cf_p}, 
using data gathered under $p$ to estimate the platform's payoff under $\pcf$ is a good idea for best-response users. 
\begin{restatable}{proposition}{bestresponsepredictable}
	\label{prop:smooth_p_BR}
	Suppose that the platform's strategy $(p, \hcQ)$ is such that
	the hypothesis class $\hcQ$ is an $\eps$-net (\cref{ass:no_misspec}).
	Let $\pcf$ be a counterfactual algorithm that 
	is well-behaved (\cref{ass:well-behaved-p}); then,
	\begin{align}
		\max_{\mu \in \Delta(S^\infty(q^\BR, p))} \left|
		\hV ( \pcf, \mu)
		- 
		\brV_\BR ( 
		\pcf
		)
		\right| 
		\leq \sqrt{\eps} \left((2L_\cP + 1) \sqrt{|\cB|}\right).
	\end{align}
	As a result, by using a sufficiently fine $\eps$-net hypothesis class 
	$\hcQ$, the platform can estimate its payoff under counterfactual 
	algorithms up to arbitrary precision.
\end{restatable}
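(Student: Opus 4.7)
The plan is to exploit a feature of naive users that fails in the strategic case of \cref{prop:cf_p}: the best-response strategy $q^\BR$ does not depend on the platform's algorithm, so the conditional data-generating process $q^\BR(\cdot|Z)$ is identical under $p$ and under $\pcf$. Combined with the richness of the $\eps$-net hypothesis class, this will let us argue that the platform's limiting belief under either algorithm is concentrated near a single ``reference'' model close to $q^\BR$, and that payoff predictions transfer across algorithms with at most $O(\sqrt{\eps|\cB|})$ error.

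First, I would invoke the $\eps$-net property of \cref{ass:no_misspec} to pick a reference model $\hq^\BR \in \hcQ$ satisfying $\max_{Z \in \cZ}\|\hq^\BR(\cdot|Z) - q^\BR(\cdot|Z)\|_\infty \leq \eps$. Standard estimates convert this into pointwise $\text{TV}$ and $\KL$ bounds (with the latter of order $\eps|\cB|$). Next, I would use \cref{thm:global_stability} together with the iterated elimination operator (\cref{def:KLdom}) to show that, for any algorithm $p'$ (in particular for both $p$ and $\pcf$), every surviving $\hq \in S^\infty(q^\BR, p')$ must be a KL-minimizer for some belief $\mu \in \Delta(\hcQ)$. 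Comparing such an $\hq$ against $\hq^\BR$ as a candidate yields $\mathbb{E}_{Z \sim p'(\cdot; \mu)}[\KL(q^\BR(\cdot|Z), \hq(\cdot|Z))] \leq O(\eps|\cB|)$, and Pinsker's inequality then gives a matching $\text{TV}$ bound on the induced joint distributions $p'(\cdot;\mu) \times \hq$ versus $p'(\cdot;\mu) \times q^\BR$.

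With these ingredients, I would decompose $|\hV(\pcf, \mu) - \brV_\BR(\pcf)|$ through the ``ideal'' quantity $\brV(\pcf(\cdot; \mu), q^\BR)$. The first piece $|\hV(\pcf, \mu) - \brV(\pcf(\cdot; \mu), q^\BR)|$ swaps $\hq \sim \mu$ for $q^\BR$ inside the payoff; using $V \in [0,1]$ together with the bound from step 2, it is at most $\mathbb{E}_{\hq \sim \mu} \mathbb{E}_{Z \sim \pcf(\cdot; \mu)}[\text{TV}(\hq(\cdot|Z), q^\BR(\cdot|Z))] \leq O(\sqrt{\eps|\cB|})$. The second piece $|\brV(\pcf(\cdot; \mu), q^\BR) - \brV_\BR(\pcf)|$ compares the belief $\mu \in \Delta(S^\infty(q^\BR, p))$ against the payoff-minimizing $\mu^* \in \Delta(S^\infty(q^\BR, \pcf))$. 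Since both stable sets localize around $\hq^\BR$ by step 2, $\mu$ and $\mu^*$ are each close to $\delta_{\hq^\BR}$ in the sense measured by \cref{ass:well-behaved-p}. Applying that assumption to $\pcf$ and triangulating through $\delta_{\hq^\BR}$ gives $d_\cP(\pcf(\cdot; \mu), \pcf(\cdot; \mu^*)) \leq 2L_\cP \cdot O(\sqrt{\eps|\cB|})$, which (again via boundedness of $V$) transfers to a payoff gap of at most $2L_\cP \cdot O(\sqrt{\eps|\cB|})$. Summing the two pieces yields the claimed $(2L_\cP + 1)\sqrt{\eps|\cB|}$.

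The main technical obstacle is making step 2 fully rigorous when the KL minimizer is not unique. If $p'$ has full support over $\cZ$ and the $\eps$-net has a unique pointwise closest element to each $q^\BR(\cdot|Z)$, then iterated elimination collapses $S^\infty(q^\BR, p')$ to the singleton $\{\hq^\BR\}$---and crucially, to the \emph{same} singleton under both $p$ and $\pcf$---so the second piece above vanishes exactly. Without full support, or with ties among $\eps$-net elements, one must track potentially many survivors; the remedy is to use the pointwise, coordinate-wise nature of the $\eps$-net constraint together with the Lipschitz-in-belief property of $\pcf$ (\cref{ass:well-behaved-p}) to ensure that every $\hq \in S^\infty(q^\BR, p')$ induces a recommendation distribution within $O(L_\cP \sqrt{\eps|\cB|})$ of $\pcf(\cdot; \delta_{\hq^\BR})$, which is exactly what the above decomposition needs.
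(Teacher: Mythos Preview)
Your proposal is correct and mirrors the paper's proof almost exactly: the same triangle-inequality decomposition through $\brV(\pcf(\cdot;\mu), q^\BR)$, the same use of the $\eps$-net plus Pinsker to bound $\text{TV}(\hq(\cdot|Z), q^\BR(\cdot|Z))$ for surviving models, and the same application of \cref{ass:well-behaved-p} to control the change-of-belief term. The one place you are slightly looser than the paper is step~2: the well-behavedness bound involves $\max_Z \text{TV}$, not an average over $Z$, so your $\mathbb{E}_{Z}[\KL]$ estimate does not plug in directly; the paper closes this (in its $\eps$-net lemma) by using the Cartesian-product structure of $\hcQ_\eps$ to argue that under a full-support algorithm each survivor is a $\KL$-minimizer \emph{coordinate-wise}, giving the pointwise bound $\max_Z \KL(q^\BR(\cdot|Z),\hq(\cdot|Z)) \leq \log\tfrac{1}{1-|\cB|\eps}$---which is precisely the ``pointwise, coordinate-wise'' remedy you sketch in your final paragraph.
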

\begin{proof}
	See \cref{app:best_response_predictable_proof}.
\end{proof}
Next, suppose that the hypothesis class $\hQ$ can change but the algorithm $p$ is fixed. 
In addition, the next result shows that, in direct contrast to Proposition \ref{prop:cf_omega},
payoffs cannot decrease when the hypothesis class $\hcQ$ is expanded and the
user plays naively.
\begin{proposition}\label{prop:smooth_omega_BR}
	Consider a given $(p, \hcQ)$ and $V$.
	Under \cref{ass:nontrivial_omega}, for any $\hcQ' \subset \hcQ$,
	Then 
	$$
	V(\hcQ, q^\BR ) \geq V(\hcQ', q^\BR ) . 
	$$
\end{proposition}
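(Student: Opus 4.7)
The central observation is an asymmetry between \cref{def:naive_user} and \cref{def:S-strat-user}: the naive user's strategy $q^{\BR}$ is defined pointwise from $U$ and the prompt $Z$, so it is \emph{independent} of the platform's hypothesis class $\hcQ$, whereas the strategic user's optimal strategy $q^\star(p,\hcQ)$ explicitly depends on $\hcQ$ through the inner optimization over $S(q,p,\hcQ)$. Expanding $\hcQ$ therefore changes the platform's limiting payoff in two conceptually distinct ways---via the globally stable set and via a possible change in the user's strategy---but only the former is operative in the naive setting.

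Given this, the proof plan reduces to a single invocation of \cref{ass:nontrivial_omega}. First, fix the user strategy to be $q = q^{\BR}$ throughout. Next, take $\hcQ_1 = \hcQ'$, $\hcQ_2 = \hcQ$, and let the globally stable set function $S$ be the one produced by \cref{thm:global_stability}, namely $S = S^\infty$. Because $q^{\BR}$ does not depend on $\hcQ$, it is legitimate to use the \emph{same} user strategy in both instances of the assumption---something that would fail for $q^\star(\hcQ)$ and is precisely why \cref{prop:cf_omega} is able to construct counterexamples under the identical assumption. Applying \cref{ass:nontrivial_omega} then yields
\[
\min_{\mu \in S^\infty(q^{\BR}, p, \hcQ')} \brV\!\left(p(\cdot;\mu), q^{\BR}\right)
\;\leq\;
\min_{\mu \in S^\infty(q^{\BR}, p, \hcQ)} \brV\!\left(p(\cdot;\mu), q^{\BR}\right),
\]
which, reading $V(\hcQ, q^{\BR})$ as the worst-case limiting payoff $\brV_{\BR}$ defined just before \cref{prop:smooth_p_BR}, is exactly the desired inequality.

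There is essentially no technical obstacle here; the substance of the statement is conceptual, highlighting the contrast with \cref{prop:cf_omega}. The one point to pin down is the meaning of $V(\hcQ, q^{\BR})$: if it instead denotes a $\max$ over the globally stable set, the same argument still applies because \cref{ass:nontrivial_omega} controls $\min_{\mu \in S(q,p,\hcQ)}\brV(p(\cdot;\mu),q)$ for each fixed $q$, and an analogous monotonicity in the $\max$ variant follows by the same swapping-of-roles remark used in \cref{prop:strat_helps_platform}. Either way, the desired monotonicity is an immediate corollary once $q$ is frozen at $q^{\BR}$.
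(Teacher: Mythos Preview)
Your proposal is correct and takes essentially the same approach as the paper: both recognize that because $q^{\BR}$ is independent of the hypothesis class, the inequality is an immediate application of \cref{ass:nontrivial_omega} with $q$ frozen at $q^{\BR}$. Your additional conceptual commentary on why this contrasts with \cref{prop:cf_omega} is accurate and helpful, but the paper's own proof simply states that the result follows directly from the assumption.
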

\begin{proof}
	This follows directly from \cref{ass:nontrivial_omega}.
\end{proof}
Propositions \ref{prop:smooth_p_BR} and \ref{prop:smooth_omega_BR} show that incentivizing the user to play their best-response behavior can lead to more reliable data and payoffs for the platform, 
begging the question: When are users incentivized to play naively?

\section{Trustworthy algorithm design}\label{sec:trust_sec}

In the previous sections, 
we found that strategic behavior can hurt the platform. 
Specifically, we showed that user strategization violates a key assumption of most data-driven algorithms that user behavior is exogenous,
which compromises the platform's ability to re-use the data that it collects, 
e.g., to train future algorithms. 
In contrast, 
we found that the platform's data \emph{looks} exogenous if the user behaves naively. 
That is, 
a platform can recover the exogeneity assumption if it encourages users to behave naively.

In this section, we  argue that this analysis suggests that \emph{trustworthy design}  can help platforms. 
We discuss how trustworthiness produces two beneficial outcomes: 
users are not incentivized to strategize---which allows platforms to recover the exogeneity assumption---and users are incentivized to engage with the platform over alternatives. 
We begin in Section \ref{sec:trust_def} with a formal definition of trustworthiness. 
Importantly, we draw a distinction between trustworthiness and strategy-proofness, 
connecting our definition to the concept of individual rationality in mechanism design. 
We then discuss how this definition relates to existing notions of trust in Section \ref{sec:elements_trust}.
In Sections \ref{sec:reasons}-\ref{sec:trust_interventions},
we use this definition to unpack four reasons why users do not trust their platforms and propose two  interventions for improving trustworthiness.

\subsection{Trustworthiness and its effect on the platform}\label{sec:trust_def}

Building on our analysis in Section \ref{sec:main_results},
we present a formal definition of trustworthiness. 

\begin{definition}[Trustworthy]\label{def:trust}
	Let $q^\star$ denote the policy that the $S^\infty_\Omega$-strategic user adopts when the platform employs algorithm $p$. 
	A platform's policy $(p, \hcQ)$ is $\kappa$-\emph{trustworthy} when the user is not incentivized to strategize
	and the naive user's limiting payoff under $(p, \hcQ)$ is at least $\kappa \in \bbR$, 
	i.e., 
	\begin{enumerate}
		\item $\min_{\mu \in \Delta(S^\infty_\Omega(p, q^\star))}  \brU (p^{\mu} , q^\star ) \leq 
		\min_{\mu \in \Delta(S^\infty_\Omega(p, q^\BR))}  \brU (p^{\mu} , q^\BR )$ and
		
		\item $	\min_{\mu \in \Delta(S^\infty_\Omega(p, q^\BR))}  \brU (p^{\mu} , q^\BR ) \geq \kappa$.
	\end{enumerate}
\end{definition}

\paragraph{First requirement of trustworthiness.}
One can think of Definition \ref{def:trust}(1) as follows:
a platform satisfies the first requirement of trustworthiness if the user does not have to strategize because the platform looks out for the user's interests so they do not have to do so themselves. 
When the platform meets this requirement, 
the user might as well play their best-response action. 

As an example, 
consider YouTube.
Suppose that a user likes to alternate between ``junk'' content and ``healthy'' content based on their mood \cite{Kleinberg2022-wy}.
If they were ``truthful,''
then they would always pick the video that fits their current mood.\footnote{Note that this setting can be modeled by allowing $U$ to be stochastic, with a hidden variable that represents the user's underlying stochastic mood.}
Suppose that the platform adopts an algorithm that only accounts for one possible mood and can therefore only model the user as liking ``junk'' or ``healthy'' content, but not both. 
Then, the user is better off behaving strategically; for instance, only using YouTube for one---but not both---of their moods (or maintaining two YouTube accounts, as has been observed anecdotally). 
Intuitively, 
the user does not \emph{trust} that the platform interprets their behavioral data correctly; 
they do not trust that the platform will use their behavioral data to generate good content in the future. 
On the other hand, 
if YouTube is able to correctly parse the user's mood and recommend content to suit both moods, 
the user does not have to behave strategically. 
The user therefore \emph{trusts} the platform to correctly interpret their actions. 
In addition to this example, 
there are many other factors that affect trustworthiness, 
such as a platform's willingness to protect user privacy. 

\paragraph{Second requirement of trustworthiness.}
Definition \ref{def:trust}(2) states that trustworthiness is earned only if the user's expected limiting payoff is at least $\kappa$ when the user plays their best-response action.  
That is, it is not enough that the platform is strategy-proof, as required by Definition \ref{def:trust}(1). 
Trustworthiness additionally requires that behaving naively is sufficiently beneficial for the user. 
Consider, for instance, a platform that adopts a strategy $(p, \hcQ)$ under which the user's limiting payoff is $-1$ no matter what strategy the user adopts. 
Then,  $(p, \hcQ)$ satisfies Definition \ref{def:trust}(1) because the user is not incentivized to strategize since all strategies induce a payoff of $-1$, 
but the system is \emph{not} $\kappa$-trustworthy under Definition \ref{def:trust}(2) for any $\kappa > -1$. 

This requirement echoes \emph{individual rationality}, 
a concept in mechanism design under which agents continue engaging with the platform
(which is known in mechanism design as ``participating in the mechanism'') if it is beneficial for them to do so. 
Using this interpretation,
$\kappa$ determines the expected limiting payoff at which the user does not trust the platform although they may be willing to continue engaging with the platform. 
In this way, 
$\kappa$ captures the user's ability to tolerate the untrustworthy behavior. 
The higher $\kappa$ is, 
the more trust the user places in the platform, 
and the more likely the user is to engage with the platform over alternatives.

\paragraph{Implication of trustworthiness on the platform.}
As shown in Sections \ref{sec:example}-\ref{sec:main_results}, 
trust is central to the platform's ability to collect reliable data. 
Without trust, 
users are incentivized to strategize, 
which is particularly harmful for platforms because it means that the data that they use for multiple purposes---such as training future algorithms or predicting the performance of candidate algorithms---is unreliable. 
More broadly, 
earning user trust is often beneficial to platforms when they rely not only on the continued participation of their users, 
but also on the amount of user engagement.
The more time users spend on the platform, 
the more data the platform collects. 
This facet of trustworthiness is captured by the second requirement in Definition \ref{def:trust}---if a platform is only $\kappa$-trustworthy, but another platform is $\kappa'$-trustworthy for $\kappa' > \kappa$, users may be compelled to spend more time on the other platform (or even switch platforms).
There are, of course, 
occasions %
when a platform is not incentivized to build trust, 
which could occur when the platform benefits so greatly from strategization (enough to overshadow the potential harms of unreliable data)
that platform does not rely greatly on data collection for prediction,
or there is little risk of users leaving the platform.

\subsection{Elements of trustworthiness}\label{sec:elements_trust}

The goal of this section is to place our definition of trustworthiness within the broader discussion of trustworthiness. 
Below, 
we identify several key elements of trustworthiness that appear in the literature on trust \cite{kramer1999trust,hardin2006trust,nissenbaum2001securing}, 
where we examine situations in which ``I'' trust ``you.''
The following are five common elements of trust:
\begin{enumerate}
	\item Trustworthiness does not imply that you and I have perfectly aligned interests;
	only that your behavior takes some of my interests into account.
	
	\item Distrust arises when I expect to incur losses from interacting with you  (unless I behave strategically) and, conversely, 
	trust arises when I expect to gain from interaction.
	
	\item  Trust is inherently relational in that trusting you to look after my interests may depend on who ``you'' and ``I'' are, so that ``you'' are not necessarily universally trustworthy.
	
	\item Trust is generally meaningful only when there are repeated interactions.
	In particular, I only put trust in you if I must rely on you in the future.
	Moreover, I only trust you to take my interests into account if I believe you value a continued relationship with me.
	
	\item Trust involves vulnerability---the possibility of harm---because to trust is to allow another to affect one's interests.
\end{enumerate}
We now connect these elements of trust to our
Definition \ref{def:trust}.
First, 
a user and platform need not have the exact same interests for the results in Section \ref{sec:main_results}, which show that untrustworthiness can be harmful to the platform, to hold. 
Second, 
distrust in our work arises when a user is incentivized to strategize because the platform does not account for the user's interests,
meaning that trust and distrust are linked to gains and losses from interactions. 
Third, 
our formulation of trust is relational;
a user's best-response and strategic behavior are specific to them, 
meaning that a platform's trustworthiness may differ across users. 
Fourth, 
a continuing relationship is pivotal to our formulation of trust.
In fact it matters in two ways:
(i) users only strategize because they can anticipate future interactions, 
and (ii) platforms typically rely on the continued participation of users for success. 
Lastly, 
user payoffs, as modeled in Section \ref{sec:setup}, 
depend on both the user's action as well as the platform's. 
As such, the user's utility is influenced by the platform. 
This point is salient because the risk of being harmed is why trust often matters in data-driven decision-making.

\begin{remark}
Note that there are other factors that may influence trustworthiness, 
such as the credibility of the trusted party, 
their reputation, 
and even whether they are virtuous or reliable. 
These factors are based on the \underline{\smash{perceived qualities}} of the trusted party.
Although perception has a significant impact on trust, 
we leave perception-based factors of trust to future discussions.
By putting perception aside, 
we align more closely with conceptualizations of trust such as Hardin's, 
who states that trust is determined by the \emph{interests} of each party and the desire of one party to take the other party's interests into account in order to foster a \emph{continuing relationship} \cite{hardin2006trust}. 
\end{remark}

\subsection{Reasons why a user may not trust their platform}\label{sec:reasons}

Under Definition \ref{def:trust} of trustworthiness, 
there are several reasons why a user might not trust their platform's policy. 
We discuss three reasons in this section. 
These insights can be used to inform methods for building trustworthiness, 
as discussed in the following section. 

\paragraph{Misspecification.}
We say that a platform is \emph{misspecified} with respect to $q^\BR$ when $q^\BR \not\in \hcQ$. 
Misspecification implies that,
should a user choose to play their best-response actions, 
the platform would not be able to model their behavior perfectly. 
For example, 
misspecification occurs when a platform believes that there are a few canonical ``types'' of users, 
but the user of interest does not fit into any of these types. 
Misspecification often induces users to strategize because the platform's inability to model the user correctly can mean that a user gains more by strategizing
(e.g., by pretending to be one of the canonical types of user).

\paragraph{Hidden, changing state.}
Suppose that a user has different ``modes.''
For instance, 
a user on an online shopping platform sometimes needs clothes for everyday wear and sometimes needs clothes for special occasions. 
Alternatively, 
suppose that there is salient information that is only available to the user but not to the platform. 

In such cases, 
there is a state that changes across time and is hidden from the platform. 
We do not explicitly model this setting in our setup in Section \ref{sec:model};
however, one can encompass such cases by adding a Step 0 to the game in Section \ref{sec:setup} during which a state $x_t$ is drawn at the start of each time step, i.e.,  ``nature'' selects a state $x_t$. 
The user then chooses action $B_t \sim q( \cdot | Z_t, x_t)$ in Step 2, 
and the user receives a payoff $U(Z_t, B_t , x_t)$ in Step 3. 
Because the platform does not have access to salient information $x_t$, 
the user may find that their $x_t$-dependent behavior is misattributed to other factors. 
This misattribution can lead the platform to behave unexpectedly (to the user's detriment), 
therefore prompting the user to strategize.

\paragraph{Algorithm incompatible with user payoffs.}
Recall that $p$ denotes the platform's algorithm and $U$ denotes the user's payoff. 
A user may strategize if $p$ is incompatible with $U$. 
Even if the platform is not misspecified with respect to the best-response strategy $q^\BR$ and the platform has access to all salient information (such that the two reasons for strategization given above are absent), 
a user may wish to strategize if the platform's method for generating propositions is detrimental to a user following $q^\BR$. 
This would, for instance, 
be the reason why users often strategize when an algorithm is known to have feedback loops or why users strategize when their recommendation algorithm shows too much content of type $X$ after a user clicks on a piece of content of type $X$.

\subsection{Interventions for improving trustworthiness}\label{sec:trust_interventions}

Data-driven platforms can determine whether users strategize by testing the exogeneity assumption. 
For example, 
 \citet{empirical} determined through a lab experiment that (i) users strategize on online platforms and (ii) a user's strategic behavior reflects how they believe their current actions affect their long-term outcomes when their platform is data-driven. 

When exogeneity does not hold (i.e., users strategize), 
platforms may wish to intervene. 
In this section, 
we first describe why interventions that do not build trustworthiness (as formalized by Definition \ref{def:trust}) but attempt to overcome the challenges untrustworthiness often fall short. 
We then discuss how two interventions for improving trustworthiness can complement these efforts.

\subsubsection{Naive interventions}

Recall that the main challenge of untrustworthiness is that user strategization distorts the platform's data, 
which compromises its ability to estimate user behavior under alternate strategies $(p, \hcQ)$.
The platform might wish to overcome the challenges of untrustworthiness by:
(i) designing a strategy-proof mechanism, 
(ii) modeling the user's payoff function $U$ in order to predict their behavior under any counterfactual $(p, \hcQ)$,
(iii) expanding the hypothesis class $\hcQ$, 
(iv) guessing the user's hidden state $x_t$, as described in Section \ref{sec:reasons}, 
or (v) improving one's algorithm $p$. 
Below, we discuss how these approaches 
can fall short of methods that directly boost trustworthiness.

First, 
designing strategy-proof mechanisms does not necessarily elicit higher payoffs for the user or the platform. 
For instance, 
an algorithm $p$ that recommends YouTube videos that all users universally dislike is strategy-proof. 
In this scenario, the user's best-response (i.e., naive) strategy is to ignore on any recommended video. 
This strategy is also their highest-payoff strategy because clicking on any video incurs a negative payoff. 
Therefore, $p$ is strategy-proof, 
but it does not lead to positive outcomes for the user or the platform. 

The second option seeks to model each user's payoff function $U$ so that it is possible to infer how the user would behave under alternative platform strategies $(p, \hcQ)$. 
While this approach may work in simple settings, it is difficult to do in complex settings. 
In particular, 
this approach becomes challenging when users are {heterogeneous}, 
there is {unobserved confounding}, 
and
the space of possible platform strategies $(p, \hcQ)$ is large.
Imagine, for instance,
predicting how an arbitrary social media user would behave under any possible feed. 
Such an estimation task is notoriously challenging. 
For one, both the user's and platforms action spaces are large (i.e., the user can interact with content in many possible ways, and there are many possible posts that the platform could  recommend). 
For another, 
each user behaves differently (i.e., there is high heterogeneity). 
In similarly complex (or high-risk) settings, 
developing reliable models is difficult. 

Third, 
one might naturally think to expand $\hcQ$ in order to address issues that arise from misspecification. 
We show in  Proposition \ref{prop:cf_omega}, however, that expanding the hypothesis class $\hcQ$ can,in fact, lower the platform's payoff and does not necessarily remove misspecification unless $q^\BR$ is guaranteed to be in the new hypothesis class. 
Since it is difficult (if not impossible) to determine whether the new hypothesis class contains $q^\BR$ \emph{a priori}, 
this approach should be taken with caution. 

Finally, 
we argue in the following section that
the fourth and fifth approaches described above can indeed improve outcomes for the user and platform, 
but are less straightforward than interventions that directly boost trustworthiness.

\subsubsection{Interventions that target trustworthiness}

In this section, 
we provide two simple recommendations that target trustworthiness. 
The goal of these interventions is to induce the user to behave naively (i.e., ``exogenously'').  
Recall from Definition \ref{def:S-strat-user}
that a user strategizes when the effort to deviate from their best-response (naive) behavior is low compared to the payoff gained by strategizing. 
At the core of the interventions below is the idea that, instead of increasing the effort required to deviate from best-response behavior (i.e., adding a barrier to strategization, as strategy-proof mechanisms often do), 
the platform can instead reduce the gains of strategization below the effort to deviate.

\paragraph{Recommendation \#1: Offering multiple algorithms.}
Offering users multiple algorithms from which they can choose addresses several issues simultaneously.
As an example, consider Twitter, 
which offers personalized, chronological, and trending feeds. 
Returning to the three reasons users strategize, as given in Section \ref{sec:reasons}, 
allowing users to select between multiple algorithms at each time step $t$ is a straightforward way of accommodating unobserved confounding, 
e.g., a hidden state $x_t$. 
While the platform could predict $x_t$, 
doing so is inevitably less reliable than giving users the ability to select an algorithm based on $x_t$ themselves. 
Second, 
if one algorithm is incompatible with a user's payoffs, 
the user will simply ignore that algorithm. 
On the other hand, 
if the platform only offers one algorithm $p$, 
it inevitably alienates users whose payoffs are incompatible with $p$. 

Offering multiple algorithms can therefore diminish two of the reasons that users strategize (as given in Section \ref{sec:reasons}) and guarantee that a user's limiting payoff is at least as high as under a single algorithm. It therefore improves trustworthiness by Definition \ref{def:trust}.

\paragraph{Recommendation \#2: Providing feedback mechanisms.}
Another intervention that builds trust and therefore mitigates the risks of strategization is providing users opportunities to give meaningful feedback. 
Returning to the reasons for strategization in Section \ref{sec:reasons},
feedback can serve as a simple and reliable indicator for misspecification. 
If a user consistently indicates that the platform is not behaving in their interest,
then the platform not only learns that this user is misspecified under $\hcQ$,
but also gains insight into how $\hcQ$ can be improved. 
If the feedback mechanism is designed such that giving feedback requires little effort \emph{and} the platform incorporates feedback well, 
then the user's payoff under best-response behavior is improved without much effort. As a result, strategization offers little benefit compared to the effort of deviating from $q^\BR$. 

There are many ways of eliciting feedback, 
and not all mechanisms are made equal. 
Platforms must ensure that they are comprehensive but not overwhelming, 
easily accessible but not so pervasive that users simply treat it as an annoyance.
Determining the precise design of feedback mechanisms is out of the scope of this work and may be of interest in future work.

\section{Acknowledgements}
The authors would like to thank Manolis Zampetakis for comments on 
an earlier version of this manuscript, and Hannah Li and Jennifer Allen 
for fruitful discussions.
Work supported in part by the NSF grants CNS-1815221 and DMS-2134108, and Open
Philanthropy.

\begin{refcontext}[sorting=nyt]
\printbibliography
\end{refcontext}

\clearpage
\appendix
\addcontentsline{toc}{section}{Appendix} %
\part{Appendix} %
\parttoc %

\clearpage
\section{Extended Related Work}
\label{app:full_related_work}
In this section, we provide an extended account of related work 
from \Cref{sec:rel_work}.
\paragraph{Endogenous learning.}
One of the key aspects of our model is {\em endogeneity} on the side of the 
platform, i.e., the platform's actions affect the data it 
collects. 
There is a vast literature in both economics and computer science 
that studies endogenous learning, some of which we heavily rely on in 
this work.

Our work draws from a long line of work at the intersection of economics and
game theory that explores the dynamics of {\em endogenous misspecified learning}.
Although we do not explicitly rely on it here,
the Berk-Nash equilibrium concept \citep{esponda2016berk}
provides a basis for much of this work.
The concept was later refined, 
expanded, and improved upon in several directions
\citep{Fudenberg2017-rb,Fudenberg2021-tu,bohren2016informational,Bohren2021-rs,frick2020stability}.
In this work, we rely most heavily on the results of \citet{frick2020stability}
due to their generality, 
but our results are portable in the sense that new ways of characterizing 
globally stable beliefs will allow for even sharper results in our setting.

There are also many related equilibrium concepts in the Economics literature
that we do not explore in this work, 
such as self-confirming equlibrium \citep{Fudenberg1993-xi} and subjective
equilbrium \citep{Kalai1993-ze, Kalai1995-kk}.
Relating these models (and others) to our setting is an interesting question 
for future work.

Several works in the computer science literature also study misspecified 
learning.
Of these, the most related is the work of \citet{Perdomo2020-za}
which introduces the idea of {\em performative prediction}.
The performative prediction setup mirrors that of supervised learning, 
except that the learner's current parameter estimate $\theta$
affects the data distribution $D_\theta$.
Our model can be viewed as an instance of performative prediction 
(which in turn, can be viewed as an instance of reinforcement learning
\citep{Brown2022-oz})
that focuses on a specific kind of performativity induced 
by users adapting to their platforms.

\vspace*{-1em}
\paragraph{Strategic classification and Stackelberg games.}
Strategic classification \citep{Hardt2016-ks} is a two-player game in which 
one player (the {\em decision-maker}) deploys a machine learning algorithm,
and the other player (the {\em decision subject}) strategically reports 
their features to attain a favorable decision.
Strategic classification also features endogenous learning, and is a 
special case of the performative prediction setup mentioned above.
Strategic classification has been the subject of a deluge of recent work in
computer science \citep{Bruckner2012-fs,Hardt2016-ks,Dong2018-rm,Ghalme2021-nf,Zrnic2021-vh}. 
Broadly, our model differs from one of strategic classification in that 
(a) we restrict the platform to a specific learning algorithm;
(b) users have their own utility functions that they can optimize arbitrarily,
and are not bound to small perturbations of some ``ground-truth'' features;
(c) users can be myopic or non-myopic in how they interact with the platform.

Prior works have studied deviations from these assumptions.
For (a), \citet{Zrnic2021-vh} study the case where rather than
reacting instantaneously, 
the decision subject is {\em learning} at the same time as the the
decision-maker.
Closer to our work, \citet{Levanon2022-zx} introduce {\em generalized} 
strategic classification, where the decision subject has a utility 
function that can be more aligned with that of the platform.
Finally, \citet{Haghtalab2022-bv} study Stackelberg games 
(which generalize strategic classification)
with {\em non-myopic agents} (i.e., where agents seek to optimimize
their long-term, rather than immediate, payoff).
As previously discussed, our model differs from theirs in that 
in our case users optimize their expected {\em limiting} payoff,
which, and so delaying user input has no bearing on the game dynamics.

\paragraph{User strategization on recommender systems.}
There are several related lines of work to ours that concern user strategization 
on recommender systems specifically. 
One line of work studies the ways in which users try to influence (and succeed
in influencing) their recommendations on popular platforms 
\citep{Siles2022-yv,Kim2023-tv,Simpson2022-vg}.
Closer to our work are theoretical models of user strategization on 
recommender platforms \citep{Cen2022-bc,Haupt2023-vo}, 
where \citep{Cen2022-bc} is an earlier version of this work.
Compared to the latter of these works, 
we propose a model that (a) extends more generally to data-driven platforms;
(b) allows us to study the causes and effects of strategization on the platform
and its ability to make counterfactual inferences;
and (c) connects strategization back to trustworthiness of the platform.

Another related work is that of \citet{Kleinberg2022-wy},
who study a model of {\em inconsistent preferences} under which 
users have a ``myopic'' system 1 that consumes 
addictive content and a ``non-myopic'' system 2 that considers
the value of content. 
Although we also study myopic and non-myopic users,
we use these terms in way that is subtly different from
\citep{Kleinberg2022-wy}, 
and their focus (inconsistent preferences) 
is neither necessary nor sufficient for strategization
to emerge.
Additionally, strategization concerns users' response 
to their platforms' algorithm, whereas in \citep{Kleinberg2022-wy}
the platform's algorithm is fixed.

\paragraph{Multi-agent learning and games.} 
There are also many lines of work on multi-agent learning, 
including multi-agent reinforcement learning 
(see \citep{Zhang2021-on, Tan1993-cq, Busoniu2008-lt} and their references),
multi-agent Bayesian learning (e.g., \citep{Wu2020-bl,Wu2021-hl}),
and inverse reinforcement learning (particularly its
cooperative \citep{Hadfield-Menell2016-mv} and adversarial \citep{Yu2019-oq} variants).
All of these models capture agents learning and interacting with each 
other at varying levels of generality. 
Here, we focus on a particular instantiation of such setups  
where two agents 
interact in a very specific way, mirroring the 
interaction between a user and a data-driven platform.
Our interests are also more specifically in studying the causes and effects of
strategic behavior in this setup.

Another related work studies repeated alternating two-player games
\citep{Roth2010-yk} and considers the complexity of computing  
equilibria in such games.
In our work, we avoid having to compute such equilibria by assuming that 
the platform follows a fixed Bayesian updating strategy.

\paragraph{Mechanism design and incentive-compatible auctions.}
Our notion of ``trustworthiness'' from \cref{sec:trust_def} 
is highly related to the idea of incentive-compatibility in 
mechanism design. There is a rich line of work in econometrics on 
designing incentive-compatible methods for repeated games in general
(see, e.g., \citep{Mailath2006-oy, Bergin1993-ly, Kalai1993-ze}),
and more specifically for repeated auctions
\citep{Deng2021-mn,Abernethy2019-js,Nedelec2022-xs,Amin2013-lc,Kanoria2019-ri}.
In the latter, a long line of work stems from the tight connections between 
incentive-compatibility and differential privacy 
\citep{McSherry2007-mz, Nissim2012-oe}.

\paragraph{Human-AI collaboration.} 
Finally, our work contributes to a growing body of work on 
Human-AI collaboration \citep{Wang2019-so, Wang2020-ty, Mozannar2022-sd} and teaming 
\citep{Zhang2021-gc, Bansal2019-jh},
which both study primarily collaborative interactions 
between AI systems and their users. 
In particular, one can view 
our observations about conditions for strategization
(and corresponding recommendations about trustworthy algorithm 
design) in \cref{sec:elements_trust} as principles for human-AI 
collaboration in the case where the AI and the human 
are not entirely aligned.

Within Human-AI collaboration, one line of work in particular 
isolates {\em trust} as a building block of successful interaction
(see, e.g., \citep{Okamura2020-eu, Hou2023-cb, Ezer2019-zl, Bao2021-fn}
and references therein).
Our work supports the high-level idea that designing systems with trust in mind
is important, but we explore a slightly different notion of trust than what is
typically considered.
In particular, 
in our case trustworthiness dictates whether a user will maximize their 
immediate utility at each step of a game.
By comparison, human-AI collaboration usually explores trust to the end 
of getting users to use AI systems in the first place (or comfortably 
delegate complext tasks to AI systems).

\clearpage
\section{Omitted Proofs: Illustrative Example}
In this section, we provide the proofs omitted from \cref{sec:example}.

\subsection{Proof of \cref{prop:limiting_illustrative}}
\label{app:limiting_illustrative_pf}
\illustrativestable*
\begin{proof}
	In this proof, we will use the result of \citet{frick2020stability}, 
	restated as \cref{thm:global_stability}.
	Note that the theorem cannot be applied directly due to the 
	``bounded likelihood ratios'' regularity
	condition being violated---still, since $p(\cdot;\mu)$ has full support,
	we can always wait until the violating user models are eliminated by 
	Bayes' rule (which happens almost surely), 
	and then apply the results from \citep{frick2020stability}
	thereafter. 

	For example, note that $\hq_3(B|Z) > 0$ for all $B \in \cB$ and $Z \in \cZ$,
	so if $\supp(q) \cap \cZ_{A} \neq \emptyset$, then with probability one we 
	will eventually see a $(B, Z)$ such that $B = 1$ and $Z \in \cZ_A$, at 
	which point
	\begin{align*}
		\mu(\hq_2) = \frac{
			\mu_{t-1}(\hq_2) \cdot \hq_2(B_t|Z_t)
		}{
			\sum_{\hq_i \in \hcQ} \mu_{t-1}(\hq_i) \cdot \hq_i(B_t|Z_t)
		} = 0,
	\end{align*}
	and thus we can apply the results from \citep{frick2020stability}
	to the set $\{\hq_1, \hq_3\}$.
	
	note that for {\em any} belief $\mu$ and any user model $\hq$,
	\begin{align*}
		KL(p^\mu \times q, p^\mu \times \hq) &= \mathbb{E}_{(B, Z) \sim p^\mu \times q}\left[
			\log\left(
				\frac{p^\mu(Z) \cdot q(B|Z)}{p^\mu(Z) \cdot \hq(B|Z)}
			\right)
		\right] \\
		&= \mathbb{E}_{Z \sim p^\mu}\left[\mathbb{E}_{B \sim q(\cdot|Z)}\left[
			\log\left(
				\frac{q(B|Z)}{\hq(B|Z)}
			\right)
		\right]\right],
	\end{align*}
	and so for the user model $\hq_3$, $\varepsilon \leq \hq_3(B|Z) \leq
	1-\varepsilon$ for all $B \in \cB, Z \in \cZ$,
	and so the quantity above is guaranteed to be finite.

	Now, since the algorithm $p_0$ recommends a random item $Z$ with probability $\varepsilon$,
	\begin{align*}
		KL(p^\mu \times q, p^\mu \times \hq) &\geq \varepsilon \cdot \frac{1}{|\cZ|} \sum_{Z \in \cZ} \mathbb{E}_{B \sim q(\cdot|Z)}\left[
			\log\left(
				\frac{q(B|Z)}{\hq(B|Z)}
			\right)
		\right] \\
		&=  \varepsilon \cdot \frac{1}{|\cZ|} \left(\sum_{Z \in \cZ_A} \mathbb{E}_{B \sim q(\cdot|Z)}\left[
			\log\left(
				\frac{q(B|Z)}{\hq(B|Z)}
			\right)
		\right] + \sum_{Z \in \cZ_B} \mathbb{E}_{B \sim q(\cdot|Z)}\left[
			\log\left(
				\frac{q(B|Z)}{\hq(B|Z)}
			\right)
		\right]\right) \\
		&\geq  \varepsilon \cdot \frac{1}{|\cZ|} \left(\max_{Z \in \cZ_A} q(B=1|Z)\cdot \log\left( \frac{q(B=1|Z)}{\hq(B=1|Z)} \right) \right. \\ 
		&\qquad\qquad \qquad + \left.\max_{Z \in \cZ_B} q(B=1|Z) \cdot\log\left( \frac{q(B|Z)}{\hq(B|Z)} \right)\right).
	\end{align*}
	Now, if $|\text{supp}(q) \cap \cZ_A| > 0$, then the first maximum is infinite for $\hq = \hq_2$, 
	since there will be at least one element $Z \in \cZ_A$ for which $q(B=1|Z) > 0$,
	but $\hq_2(B=1|Z) = 0$ for $Z \in \cZ_A$.
	Similarly, if $|\text{supp}(q) \cap \cZ_B| > 0$, the second maximum will be infinity for $\hq_1$.
	This observation suffices to show that if both $|\text{supp}(q) \cap \cZ_B| > 0$ and $|\text{supp}(q) \cap \cZ_B| > 0$, 
	the platform will converge to $\hq_3$.

	To complete the proof, suppose without loss of generality that $|\text{supp}(q) \cap \cZ_B| = 0$.
	We need to prove that in this case, the platform will converge to $\hq_1$ (and in particular, not $\hq_3$).
	We will show this by using KL-dominance---in particular, for any belief $\mu$,
	\begin{align*}
		KL(p^\mu \times q, p^\mu \times \hq_3) - KL(p^\mu \times q, p^\mu \times \hq_1)
		&= \mathbb{E}_{(B, Z) \sim p^\mu \times q}\left[
			\log\left(
				\frac{q(B|Z)}{\hq_3(B|Z)}
			\right)
			- 
			\log\left(
				\frac{q(B|Z)}{\hq_1(B|Z)}
			\right)
		\right]  \\ 
		&=  \mathbb{E}_{(B, Z) \sim p^\mu \times q}\left[
			\log\left(
				\frac{\hq_1(B|Z)}{\hq_3(B|Z)}
			\right)
		\right]  \\
		&=  \mathbb{E}_{(B, Z) \sim p^\mu \times q}\left[
			\log\left(
				\frac{\hq_1(B|Z)}{\hq_3(B|Z)}
			\right)
		\right].
	\end{align*}
	Since $\hq_3(\cdot|Z) = \hq_1(\cdot|Z)$ for $Z \in \cZ_A$, this simplifies to
	\begin{align*}
		&=  \mathbb{E}_{Z \sim p^\mu}\left[
			\mathbb{E}_{B \sim q(\cdot|Z)}\left[
			\log\left(
				\frac{\hq_1(B|Z)}{\hq_3(B|Z)}
			\right)\right]
			\bigg|Z \in \cZ_B
		\right] \cdot \mathbb{P}\left(
			Z \in \cZ_B
		\right).
	\end{align*}
	By assumption (i.e., that $|\text{supp}(q) \cap \cZ_B| = 0$), we know that $q(\cdot|Z) = \delta_{B=0}$, and so
	\begin{align*}
		&=  \mathbb{E}_{Z \sim p^\mu}\left[
			\log\left(
				\frac{\hq_1(B=0|Z)}{\hq_3(B=0|Z)}
			\right)
			\bigg|Z \in \cZ_B
		\right] \cdot \mathbb{P}\left(
			Z \in \cZ_B
		\right) \\
		&=  \log\left(
				\frac{1}{\gamma}
			\right) \cdot \mathbb{P}\left(
			Z \in \cZ_B
		\right) \\
		&> 0.
	\end{align*}
	Thus, $\hq_1$ strictly dominates $\hq_3$ at all beliefs $\mu$, and so the platform will converge to $\hq_1$.
\end{proof}

\subsection{Proof of \cref{prop:illustrative_strat_happens}}
\label{app:illustrative_strat_happens_proof}
\illustrativestrathappens*
\begin{proof}
	We begin with the reverse direction.
	Suppose $\cZ^+$ is not 
	fully contained in either $\cZ_A$ or $\cZ_B$. By \cref{prop:limiting_illustrative}.
	If the user is naive, the platform will converge to the belief $\mu(\hq_3) = 1$, 
	and so the limiting proposition distribution will be a uniform distribution over all items
	(since $\hq_3(B=1|Z) = 1-\varepsilon$ for all $Z$).
	The expected user payoff will then be equal to
	\[
		\brU(p^{\delta_{\hq_3}}, q^\BR) = \frac{|\cZ^+|}{|\cZ|}.
	\]
	Without loss of generality, suppose $\cZ_A$ is a better approximation than $\cZ_B$ 
	to $\cZ_1$, and in particular
	\[
		\Delta \coloneqq \frac{|\cZ_A \cap \cZ^+|}{|\cZ_A|} - \frac{|\cZ_B \cap \cZ^+|}{|\cZ_B|} > 0.
	\]
	We define $\eps$ (the probability with which $p_0$ recommends items uniformly at random) as
	\[
		\eps < \frac{|\cZ_A|\cdot |\cZ_B| \cdot \Delta}{|\cZ_A \cap \cZ^+|}.
	\]
	Now, rearranging this definition yields
	\begin{align*}
		|\cZ_A|\cdot |\cZ_B| \cdot \left(
			\frac{|\cZ_A \cap \cZ^+|}{|\cZ_A|} - \frac{|\cZ_B \cap \cZ^+|}{|\cZ_B|}
		\right) &> |\cZ_A \cap \cZ^+|\cdot \eps \\
		{|\cZ_B| \cdot |\cZ_A \cap \cZ^+|} - {|\cZ_A| \cdot |\cZ_B \cap \cZ^+|} &> |\cZ_A \cap \cZ^+|\cdot \eps \\
		(1-\varepsilon) \cdot {|\cZ_B| \cdot |\cZ_A \cap \cZ^+|} &> \eps \cdot |\cZ_A \cap \cZ^+|\cdot |\cZ_A| + {|\cZ_A| \cdot |\cZ_B \cap \cZ^+|}  \\
		(1-\eps)\cdot |\cZ_A \cap \cZ^+|\cdot |\cZ_A| + (1-\varepsilon) \cdot {|\cZ_B| \cdot |\cZ_A \cap \cZ^+|} &>  |\cZ_A \cap \cZ^+|\cdot |\cZ_A| + {|\cZ_A| \cdot |\cZ_B \cap \cZ^+|}  \\
		(1-\eps)\cdot |\cZ_A \cap \cZ^+|\cdot |\cZ| &>  |\cZ^+|\cdot |\cZ_A|  \\
		\frac{|\cZ_A \cap \cZ^+|}{|\cZ_A|} &> \frac{|\cZ^+|}{|\cZ|} \cdot \frac{1}{1-\eps}.
	\end{align*}
	Now, consider an alternative strategy $q'$ where the user clicks only on items in
	$\cZ_+ \cap \cZ_A$. By \cref{prop:limiting_illustrative}, this would lead the 
	platform to converge to $\hq_1$, and so the user's payoff would be 
\begin{align*}
	\brU(p^{\delta_{\hq_1}}, q') = \varepsilon \cdot \frac{|\cZ^+ \cap \cZ_A|}{|\cZ|} + (1-\varepsilon)\cdot\frac{|\cZ^+ \cap \cZ_A|}{|\cZ_A|} \\
	> \varepsilon \cdot \frac{|\cZ^+ \cap \cZ_A|}{|\cZ|} + \frac{|\cZ^+|}{|\cZ|},
\end{align*}
and so playing the strategy $q'$ guarantees the user strictly higher long-run payoff than $q^\BR$.

For the forward direction, suppose without loss of generality that $\cZ^+ \subset \cZ_A$.
Observe that conditioned on a fixed distribution of propositions $p^\mu \in \Delta(Z)$
(i.e., in the absence of any learning), 
$q^\BR$ by definition yields optimal expected payoff $\brU(p^\mu, q)$.
As a result, $\brU(p^\mu, q^\BR)$ gives an upper bound for the user's payoff when 
the platform's belief is $\mu$.
Now, if the platform belief is $\mu(\hq_3) = 1$, then the resulting 
distribution is a uniform distribution over items $Z$, and so
\begin{align*}
	\max_{q \in \cQ} \brU(p^{\delta_{\hq_3}}, q) \leq \brU(p^{\delta_{\hq_3}}, q^\BR) = \frac{|\cZ^+|}{|\cZ|} = \frac{|\cZ^+|}{|\cZ_A| + |\cZ_B|}.
\end{align*}
Similarly, if the platform believes $\mu(\hq_2) = 1$, then it will recommend uniformly with probability $\eps$,
and otherwise recommend from $\cZ_B$, and so
\begin{align*}
	\max_{q \in \cQ} \brU(p^{\delta_{\hq_2}}, q) \leq \brU(p^{\delta_{\hq_2}}, q^\BR) = \eps \cdot \frac{|\cZ^+|}{|\cZ|} + (1-\eps) \cdot 0 = \eps \cdot \frac{|\cZ^+|}{|\cZ|}.
\end{align*}
Now, by \cref{prop:limiting_illustrative}, playing any non-degenerate strategy (i.e., $q$ for which $\max_Z q(B=1|Z) > 0$) 
will lead the platform to converge to a point mass belief. The naive strategy $q^\BR$ will lead to 
the platform having limiting belief $\hq_1$, for which the user's payoff is 
\begin{align*}
	\brU(p^{\delta_{\hq_1}}, q^\BR) = \eps \cdot \frac{|\cZ^+|}{|\cZ|} + (1-\eps) \cdot \frac{|\cZ^+|}{|\cZ_A|}.
\end{align*}
This is clearly greater than the other two upper bounds, and so if the user plays a non-degenerate strategy,
$q^\BR$ is the best option. The proof concludes by noting that the degenerate strategy yields a utility of zero.
\end{proof}

\subsection{Proof of \cref{prop:illustrative_improve_utility}}
\label{app:illustrative_improve_utility_proof}
\illustrativeimproveutility*
\begin{proof}
	Let $\mu_S$ be the platform's long-run belief when the user is strategic, and 
	let $\mu_\BR$ be the platform's long-run belief when the user is naive. Note 
	that \cref{prop:limiting_illustrative} implies that these beliefs exist and 
	are unique for $q^\BR$ and for any non-degenerate $q^*$. 
	Now, if the user is incentivized to strategize, it must be that 
	$\brU(p^{\mu_\BR}, q^\BR) \leq \brU(p^{\mu_S}, q^*)$. In other words,
	\begin{align*}
		\sum_{Z \in \cZ} p^{\mu_S}(Z) \bbE_{q^S(\cdot|Z)}[U(Z, B)] 
		&\geq
		\sum_{Z \in \cZ} p^{\mu_\BR}(Z) \bbE_{q^\BR(\cdot|Z)}[U(Z, B)] \\ 
		\sum_{Z \in \cZ} p^{\mu_S}(Z) q^S(B=1|Z) a(Z)
		&\geq 
		\sum_{Z \in \cZ} p^{\mu_\BR}(Z) q^\BR(B = 1|Z) \cdot a(Z) \\
		\sum_{Z \in \cZ^+} p^{\mu_S}(Z) q^S(B=1|Z)
		-
		\sum_{Z \in \cZ \setminus \cZ^+} p^{\mu_S}(Z) q^S(B=1|Z)
		&\geq
		\sum_{Z \in \cZ^+} p^{\mu_\BR}(Z) q^\BR(B = 1|Z) \\
		\sum_{Z \in \cZ} p^{\mu_S}(Z) q^S(B=1|Z)
		-
		\sum_{Z \in \cZ} p^{\mu_\BR}(Z) q^\BR(B = 1|Z)
		&\geq 
		2\sum_{Z \in \cZ \setminus \cZ^+} p^{\mu_S}(Z) q^S(B=1|Z)
	\end{align*}
\end{proof}

\subsection{Proof of \cref{prop:illustrative_cfx_bad}}
\label{app:illustrative_cfx_bad_proof}
\illustrativecfxbad*
\begin{proof}
	For convenience, assume $|\cZ_B|$ is even (otherwise, the same proof holds but requires 
	us to keep track of a rounding error).
	We will define $a(Z)$ to be $+1$ on a half of $\cZ_B$ and $-1$ on the other half,
	and $+1$ on $3/4$ of $\cZ_A$ (and $-1$ on the remaining $1/4$).
	By construction, defining $\cZ^+ = \{Z \in \cZ: a(Z) = 1\}$,
	$$\frac{|\cZ_A \cap \cZ^+|}{|\cZ_A|} \geq \frac{3}{4} > \frac{1}{2} = \frac{|\cZ_B \cap \cZ^+|}{|\cZ_B|}.$$
	Using an identical argument to the one used in \cref{app:illustrative_strat_happens_proof},
	we see that when $\eps$ is sufficiently small, 
	a strategic user will restrict their clicks to $\cZ^+ \cap \cZ_1$,
	in order to induce a platform belief of $\hq_1$.
	For some small $\alpha > 0$ (to be defined later), 
	consider the toxicity function
	\begin{align*}
		\textsc{toxicity}(Z) = \begin{cases}
			\alpha &\text{ if } Z \in \cZ_A \cap \cZ^+ \text{ or } \cZ_B \setminus \cZ^+ \\
			1 &\text{otherwise}.
		\end{cases}
	\end{align*}
	In the remainder of the proof, we 
	introduce the following notation for convenience:
	\begin{align*}
		n_1 = |\cZ_A \cap \cZ^+|, 
		\qquad n_2 = |\cZ_A \setminus \cZ^+|
		\qquad n_3 = |\cZ_B \cap \cZ^+|,
		\qquad n_4 = |\cZ_B \setminus \cZ^+|.
	\end{align*}
	\paragraph{Platform's current payoff.} The platform's current payoff under the user's strategic
	behavior is 
	\begin{align}
		\label{eq:current_payoff_prf}
		\brV^*(p_0, \hcQ) = \eps \cdot \frac{n_1}{n_1 + n_2 + n_3 + n_4} + (1-\eps) \cdot \frac{n_1}{n_1 + n_2}.
	\end{align}
	Note that the platform's predicted payoff under $p_0$ is given by 
	\begin{align*}
		\hV(p_0, \hcQ) = (1-\gamma)\left[
			\eps \cdot \frac{n_1 + n_2}{n_1 + n_2 + n_3 + n_4} + (1-\eps),
		\right],
	\end{align*}
	and by setting $\gamma$ appropriately we can make these two quantities equal 
	(intuitively, since $\eps$ is small, 
	$1-\gamma$ roughly corresponds to the fraction of $\cZ_A$
	that the user actually engages with, i.e., $\frac{n_1}{n_1 + n_2}$).
	\paragraph{Platform's predicted payoff.} We now derive the platform's predicted payoff 
	$\widehat{V}(p_1, \hcQ)$---the platform believes that the user's strategy is $\hq_1$, and
	so its predicted payoff is 
	\begin{align*}
		\widehat{V}(p_1, \hcQ) 
		&= (1 - \gamma)\left[\eps \cdot \frac{\alpha n_1 + n_2}{\alpha n_1 + n_2 + n_3 + \alpha n_4} + (1 - \eps)\right].
	\end{align*}
	From this, it is straightforward to show that
	\begin{align*}
		\widehat{V}(p_1, \hcQ) - \brV^*(p_0, \hcQ) 
		&= \widehat{V}(p_1, \hcQ) - \widehat{V}(p_0, \hcQ) \\
		&= 
		(1 - \gamma)\cdot \eps \cdot \left[\frac{\alpha n_1 + n_2}{\alpha n_1 + n_2 + n_3 + \alpha n_4} 
											- \frac{n_1 + n_2}{n_1 + n_2 + n_3 + n_4} \right] \\
		&< 0,
	\end{align*}
	where in the last inequality we use the fact that:
	\begin{align*}
		\frac{n_2}{n_1 + n_2} &\leq \frac{1}{4} < \frac{1}{2} = \frac{n_3}{n_3 + n_4} \\
		n_2n_3 + n_2 n_4 &< n_1n_3 + n_2n_3 \\
		0 &> (1-\alpha)(n_2 n_4 - n_1n_3)  \\
		&= (\alpha n_1 + n_2)(n_1 + n_2 + n_3 + n_4) - (n_1 + n_2)(\alpha n_1 + n_2 + n_3 + \alpha n_4).
	\end{align*}
	This sequence of calculations proves the result (b).
	\paragraph{Platform's true payoff.}
	We now consider the platform's true payoff when the user is strategic.
	Now, we use the fact that (a) if the user's behavior is non-degenerate, 
	the platform's belief will converge to a single $\mu(\hq_i) = 1$; 
	(b) for a fixed belief $\mu$, the maximum user payoff is upper bounded 
	by the payoff attained by $q^\BR$. 
	
	Thus, under the toxicity function above,
	\begin{align*}
		\max_{q \in \cQ} \brU(p_1^{\delta_{\hq_1}}, q) \leq \brU(p_1^{\delta_{\hq_1}}, q^\BR) &= 
		\eps \cdot \frac{\alpha n_1 + n_3}{\alpha n_1 + n_2 + n_3 + \alpha n_4} + (1- \eps) \cdot \frac{\alpha n_1}{\alpha n_1 + n_2}\\
		\max_{q \in \cQ} \brU(p_1^{\delta_{\hq_2}}, q) \leq \brU(p_1^{\delta_{\hq_2}}, q^\BR) &= 
		\eps \cdot \frac{\alpha n_1 + n_3}{\alpha n_1 + n_2 + n_3 + \alpha n_4} + (1- \eps) \cdot \frac{n_3}{n_3 + \alpha n_4}\\
		\max_{q \in \cQ} \brU(p_1^{\delta_{\hq_3}}, q) \leq \brU(p_1^{\delta_{\hq_3}}, q^\BR) &= \frac{\alpha n_1 + n_3}{\alpha n_1 + n_2 + n_3 + \alpha n_4}.
	\end{align*}
	Observe that as $\alpha \to 0$,
	\begin{align}
		\label{eq:upper_bounds}
		\max_{q \in \cQ} \brU(p_1^{\delta_{\hq_1}}, q) \to \frac{\eps\cdot n_3}{n_2 + n_3} \qquad \text{ and } \qquad 
		\max_{q \in \cQ} \brU(p_1^{\delta_{\hq_3}}, q) \to \frac{n_3}{n_2 + n_3}.
	\end{align}
	Also, if the user plays the strategy $q^\dagger(B=1|Z) = \ind{Z \in \cZ^+ \cap \cZ_B}$,
	the platform will converge to $\mu(\hq_2) = 1$ (by \cref{prop:limiting_illustrative}),
	and so the user's utility will be 
	\begin{align*}
		\brU(p_1^{\delta_{\hq_2}}, q^\dagger) &= 
		\eps \cdot \frac{n_3}{\alpha n_1 + n_2 + n_3 + \alpha n_4} + (1- \eps) \cdot \frac{n_3}{n_3 + \alpha n_4},
	\end{align*}
	which as $\alpha \to 0$, converges to 
	\begin{align*}
		\brU(p_1^{\delta_{\hq_2}}, q^\dagger) &\to
		\eps \cdot \frac{n_3}{n_2 + n_3} + (1- \eps)  > \frac{n_3}{n_2 + n_3} = \lim_{\alpha \to 0} \max_{q \in \cQ} \brU(p_1^{\delta_{\hq_3}}, q).
	\end{align*}
	In particular, by comparing this to \eqref{eq:upper_bounds}, there must exist some $\alpha > 0$ such that 
	\[
		\brU(p_1^{\delta_{\hq_2}}, q^\dagger) > \max_{q \in \cQ} \brU(p_1^{\delta_{\hq_3}}, q) > \max_{q \in \cQ} \brU(p_1^{\delta_{\hq_1}}, q).
	\]
	This implies that the strategic user will induce the belief $\mu(\hq_2)=1$ when the platform 
	plays the algorithm $p_1$. Note that of the user strategies that induce $\mu(\hq_2) = 1$, 
	the optimal one is clearly $q^\dagger$, as any strategy that does not set $B=1$ 
	for $Z \in \cZ^+ \cap \cZ_B$ would increase its utility by doing so, and any strategy 
	that does set $B=1$ for $Z \in \cZ_B \setminus \cZ^+$ would needlessly incur a penalty.

	Thus, to conclude the proof, observe that 
	\begin{align*}
		\brV^*(p_1, \hcQ) 
		&= \brV(p_1^{\delta_{\hq_2}}, q^\dagger) \\
		&= \eps \cdot \frac{n_3}{\alpha n_1 + n_2 + n_3 + \alpha n_4} + (1- \eps) \cdot \frac{n_3}{n_3 + \alpha n_4},
	\end{align*}
	which, as $\alpha \to 0$ and $\eps \to 0$, converges to $1$. Contrasting this to \eqref{eq:current_payoff_prf}
	makes it clear that by choosing $\alpha$ and $\eps$ small enough, we get that $\brV^*(p_1, \hcQ) > \brV^*(p_0, \hcQ)$.
\end{proof}

\subsection{Proof of \cref{prop:illustrative_expand_omega}}
\label{app:illustrative_expand_omega_proof}
\illustrativeexpandomega*
\begin{proof}
	Define the affinity function $a(Z)$ to be $+1$ on half of the items in $\cZ_A$
	(rounding down if $|\cZ_A|$ is not even),
	and on a single item from $\cZ_B$.
	
	Recall from \cref{prop:illustrative_strat_happens} and its proof in
	\cref{app:illustrative_strat_happens_proof} that a strategic 
	user will try to induce the belief $\mu(\hq_1) = 1$ from the platform by 
	restricting their clicks to $\cZ_A \cap \cZ^+$ (where recall that $\cZ^+$ is the 
	subset of $\cZ$ on which $a(Z) = 1$). For some $\eta > 0$ to be defined later,
	define 
	\[
		\hq_4(B=1|Z) = 1-\eta\quad \forall\ Z \in \cZ
	\]
	Now, consider a strategic user in reponse to the platform strategy $(p_0, \hcQ \cup \{\hq_4\})$.
	The user is faced with a choice between three cases:
	\begin{itemize}
		\item[(A)] 
		Force
		the platform to converge to $\hq_3$ or $\hq_4$. These are indistinguishable
		from the user's perspective as 
		in either case, the resulting proposition distribution $p_0^\mu$ will
		converge to a uniform distribution over $\cZ$.
		\item[(B)] Induce the platform to converge to $\mu(\hq_1) = 1$, in which case 
		the limiting proposition distribution is a mixture of the uniform distribution
		over $\cZ$ (with probability $\eps$) and the uniform distribution over $\cZ_A$
		(with probability $1-\eps$).
		\item[(C)] Conversely, induce the platform to converge to $\mu(\hq_2) = 1$.
	\end{itemize}
	Note that playing naively results in case (A), 
	and so the user's utility in that case is both upper and lower bounded by
	$|\cZ^+|/|\cZ|$. 
	Meanwhile, since we constructed $|\cZ_B \cap \cZ^+| = 1$, 
	the user's utility in case (C) is upper bounded by $1/|\cZ_B|$.
	We can thus remove case (C) from consideration.

	It thus remains to bound the utility of Case (B). In the absence of $\hq_4$, 
	the user can restrict their clicks to $\cZ_A \cap \cZ^+$ and guarantee that
	the platform converges to $\hq_1$ (see \cref{app:limiting_illustrative_pf}).
	In the presence of $\hq_4$, however, we argue that the only strategy that 
	guarantees the platform converging to $\hq_1$ will also result in low user payoff.

	To show this, suppose there exists a strategy $q$ for which the platform
	converges to $\hq_1$. If $\eta < \gamma$,
	\begin{align*}
		&\KL(q||\hq_1) - \KL(q||\hq_4) \\
		&= \mathbb{E}_{Z \sim p^\mu}\left[
			\mathbb{E}_{B \sim q(\cdot|Z)}\left[
			\log\left(
				\frac{q(B|Z)}{\hq_1(B|Z)}
			\right)
			-
			\log\left(
				\frac{q(B|Z)}{\hq_4(B|Z)}
			\right)
			\right]
		\right] \\
		&= \mathbb{E}_{Z \sim p^\mu}\left[
			\mathbb{E}_{B \sim q(\cdot|Z)}\left[
			\log\left(
				\frac{\hq_4(B|Z)}{\hq_1(B|Z)}
			\right)
			\right]
		\right] \\
		&= \mathbb{E}_{Z \sim p^\mu}\left[
			\mathbb{E}_{B \sim q(\cdot|Z)}\left[
			\log\left(
				\frac{\hq_4(B|Z)}{\hq_1(B|Z)}
			\right)
			\right] \bigg|
			Z \in \cZ_A
		\right] \cdot \mathbb{P}_{Z \sim p^\mu}(Z \in \cZ_A) \\
		&\qquad + \mathbb{E}_{Z \sim p^\mu}\left[
			\mathbb{E}_{B \sim q(\cdot|Z)}\left[
			\log\left(
				\frac{\hq_4(B|Z)}{\hq_1(B|Z)}
			\right)
			\right] \bigg|
			Z \in \cZ_B
		\right]  \cdot \mathbb{P}_{Z \sim p^\mu}(Z \in \cZ_B) \\
		&= \mathbb{E}_{Z \sim p^\mu}\left[
			q(B=1|Z) \cdot \log\left(\frac{1-\eta}{1-\gamma}\right)
			+
			q(B=0|Z) \cdot \log\left(\frac{\eta}{\gamma}\right)
			\bigg|
			Z \in \cZ_A
		\right] \cdot \mathbb{P}_{Z \sim p^\mu}(Z \in \cZ_A) \\
		&\qquad + \mathbb{E}_{Z \sim p^\mu}\left[
			\log\left(\eta\right)
			\bigg|
			Z \in \cZ_B
		\right]  \cdot \mathbb{P}_{Z \sim p^\mu}(Z \in \cZ_B) \\
		&\geq \left( \log\left(1-\eta\right)
			+
		    \mathbb{E}_{Z \sim p^\mu}\left[ q(B=0|Z) \big| Z \in \cZ_A \right]
			\cdot \log\left(\frac{\eta}{\gamma}\right)\right) 
			 \cdot \mathbb{P}(Z \in \cZ_A) + \log\left(\eta\right).
	\end{align*}
	Thus, if 
	\begin{align*}
		\mathbb{E}_{Z \sim p^\mu}\left[ q(B=0|Z) \big| Z \in \cZ_A \right] >
		\frac{\left(\frac{-\log(\eta)}{\mathbb{P}(Z \in \cZ_A)} - \log\left(
			{1-\eta}
		\right)\right)}{\log\left(\frac{\eta}{\gamma}\right)},
	\end{align*}
	then $\KL(q||\hq_1) - \KL(q||\hq_4) > 0$, and $\hq_4$ thus dominates $\hq_1$ at belief $\mu$.
	Now, since $\bbP(Z \in \cZ_A)$ is lower bounded by $\eps |\cZ_A|/|\cZ|$, 
	we can set $\gamma$ small so that the above expression evaluates to
	$\frac{1}{|\cZ|}$. 
	
	Also, note that since none of the user models distinguish between 
	different elements within the partition $\cZ_A$, the expectation on the left is equivalent to
	the unconditional expectation $\mathbb{E}_{Z \sim \text{Unif}(\cZ_A)}[\cdot]$.

	Putting these two observations together, we have that if 
	\begin{align}
		\label{eq:random_eq_1}
		\frac{1}{|\cZ_A|} \sum_{Z \in \cZ_A} q(B=0|Z) > \frac{1}{|\cZ|},
	\end{align}
	then $\hq_4$ strictly dominates $\hq_1$ at all beliefs $\mu$, and the platform will 
	converge to $\mu(\hq_4) = 1$. Since we have by assumption that the platform converges 
	to $\mu(\hq_1) = 1$, it must be that \eqref{eq:random_eq_1} is false, and so
	\begin{align*}
		\frac{1}{|\cZ_A|} \sum_{Z \in \cZ_A} q(B=1|Z) \geq 1 - \frac{1}{|\cZ|}.
	\end{align*}
	By construction ($\cZ^+$ containing exactly half of $\cZ_A$),
	\begin{align*}
		\frac{1}{|\cZ_A \cap \cZ^+|} \sum_{Z \in \cZ_A \cap \cZ^+} q(B=1|Z) \geq 1 - \frac{2}{|\cZ|}.
	\end{align*}
	We now argue that with such a strategy, the user can attain no more than $\frac{2}{|\cZ|}$ utility.
	Since none of the user models distinguish between elements within $\cZ_A$, a
	coupling argument shows that for every element $Z: a(Z) = 1$ that the user
	clicks on with probability $\delta$,
	there is at least one other element with $a(Z') = -1$ that the user 
	clicks on with probability $\delta(1 - \frac{2}{|\cZ|}$) and that is 
	recommended with equal probability to $Z$.

	We have thus shown that the maximum attainable payoff from case (B) is $2/|\cZ|$,
	which is lower than the guaranteed payoff from case (A) of $|\cZ^+|/|\cZ|$. 
	A strategic user will thus choose to play naively, which in fact {\em lowers}
	platform payoff.
	We have thus shown that both case (B) and case (C) result in low payoffs for the user,
	and so the user will choose case (A). The proof concludes by observing that platform's 
	payoff under case (A) decreases from the original strategic user.
\end{proof}

\clearpage
\section{Omitted Proofs: Main Results}

\subsection{Proof of \cref{prop:strat_helps_platform}}
\label{app:strat_helps_proof}
\strathelps*
\begin{proof}
Note that the left-hand side of 
\eqref{eq:strategization_helps_platform2}
is exactly the platform's payoff under a strategic user,
and so it only remains to show that the right-hand side 
corresponds to the platform's payoff under a naive user.
First, by the min-max inequality,
\begin{align*}
	\max_{q \in \cQ} \min_{\mu \in \Delta(\hcQ)}
	\overline{U}(\pmu{\mu}{\cdot}, q)
	&\leq \min_{\mu \in \Delta(\hcQ)} \max_{q \in \cQ} 
	\overline{U}(\pmu{\mu}{\cdot}, q) \\
	&\leq \min_{\mu \in \Delta(\hcQ)} 
	\overline{U}(\pmu{\mu}{\cdot}, q^\BR),
\end{align*}
and so 
\[
	q^\BR \in \arg\max_{q \in \cQ} \min_{\mu \in \Delta(\hcQ)}
	\overline{U}(\pmu{\mu}{\cdot}, q).
\]
Now, suppose $q \neq q^\BR$ satisfies
\[
	q \in \arg\max_{q \in \cQ} \min_{\mu \in \Delta(\hcQ)}
	\overline{U}(\pmu{\cdot}{\mu}, q).
\]
Since $q \neq q^\BR$ and $U$ has a unique maximizer $B^*(Z)$ for 
each $Z \in \cZ$, there must exist some $Z_0 \in \cZ$ for which 
$q(\cdot|Z_0) \neq \ind{B^*(Z_0)}$.
Define 
\[
	q'(\cdot|Z) = \begin{cases}
		q(\cdot|Z) &\text{ if $Z \neq Z_0$} \\
		\ind{B^*(Z_0)} &\text{ otherwise.}
	\end{cases}
\]
Clearly, $q'$ attains the same payoff $U$ as $q$ for any 
$Z \neq Z_0$, and when $Z = Z_0$, $q'$ attains better payoff 
than $q$. If there exists a $\mu$ such that $p(Z_0;\mu) > 0$ 
we have thus reached a contradiction, and otherwise $Z_0$ 
will never by played by the platform, and so user 
actions under $q$ are indistinguishable to the platform 
from user actions under $q^\BR$.

\end{proof}

\subsection{Proof of \cref{prop:cf_p}}
\label{app:nonsmooth_proof}
\badpredictiveness*
\begin{proof}

Now, we define $\hq_1$ and $\hq_2$ as the lowest and highest attainable 
platform payoff by a user who chooses a strategy $\hq \in \hcQ$, i.e.,
\begin{align*}
    \hq_1 = \arg\max_{\hq \in \hcQ} \brV(\pmu{\delta_{\hq}}{\cdot}, \hq),
    \qquad 
    \hq_2 = \arg\min_{\hq \in \hcQ} \brV(\pmu{\delta_{\hq}}{\cdot}, \hq).
\end{align*}
Note that for our bound to meaningful, we must have that 
\begin{align*}
    \gamma \coloneqq \frac{\brV(\pmu{\delta_{\hq_1}}{\cdot}, \hq_1)
    - \brV(\pmu{\delta_{\hq_2}}{\cdot}, \hq_2)}{4}
    - (\beta - \alpha) > 0,
\end{align*}
and so we assume that this inequality holds for the remainder of the proof.
Now, 
recall that the payoff function $V$ is scaled so that $0 \leq V(B, Z) \leq 1$. 
For some constant 
$c \in [0, 1]$
to be set later, let \[U(B, Z) = (V(B, Z) - c)^2,\]
so that for any belief $\mu$ and user strategy $q$,
\begin{align*}
    \brU(p^\mu, q, c) = \text{Var}_{(B, Z) \sim p^\mu \times q}[V(B, Z)] + \left(\brV(p^\mu, q) - c\right)^2,
\end{align*}
where we make the dependence on the unset constant $c$ explicit for notational convenience.
In turn, the function that the strategic user aims to maximize is
\begin{align*}
    \widetilde{U}(q, c) 
    \coloneqq \min_{\mu \in S^\infty(q, p, \hcQ)} \brU(p^\mu, q).
\end{align*}
When $q = \hq \in \hcQ$ and $p(\cdot;\mu)$ has full support for all $\mu$, $\tU(q, c)$ simplifies to 
\begin{align*}
    \widetilde{U}(\hq, c) 
    \coloneqq \brU(\pmu{\delta_{\hq}}{\cdot}, \hq).
\end{align*}
We now prove a few lemmata that will be useful later in the proof.
First, we show that our condition on the loss landscape of $\brV$
suffices to show a similar condition for the loss landscape of $\brU$:
\begin{lemma}
    \label{applem:perturbable}
    For any proposition distribution $r \in \Delta(\cZ)$,
    any fixed user strategy $q$,
    and any $\eps > 0$, 
    there exists a distribution $r' \in \Delta(\cZ)$ so that 
    $\mathcal{W}_1(r, r') \leq \eps$ and 
    $\delta_{r, q}(c) \coloneqq \brU(r, q, c) - \brU(r', q, c) \neq 0$
    almost everywhere (with respect to $c$).
\end{lemma}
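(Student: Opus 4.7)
The plan is to reduce the statement to an elementary observation about affine functions of $c$. First, I would unpack the definition of $\brU(r, q, c)$: since $U(B, Z) = (V(B, Z) - c)^2$, the identity $\mathbb{E}[(X - c)^2] = \mathrm{Var}(X) + (\mathbb{E}[X] - c)^2$ gives
\[
\brU(r, q, c) \;=\; \mathrm{Var}_{r \times q}[V(B, Z)] \;+\; \big(\brV(r, q) - c\big)^2,
\]
and analogously for $r'$. Subtracting the two expressions and expanding the difference of squares,
\[
\delta_{r, q}(c) \;=\; \big(\mathrm{Var}_{r \times q}[V] - \mathrm{Var}_{r' \times q}[V]\big) \;+\; \big(\brV(r, q) - \brV(r', q)\big)\big(\brV(r, q) + \brV(r', q) - 2c\big).
\]

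The key observation is that, viewed as a function of $c$, $\delta_{r, q}$ is \emph{affine} with slope $-2\big(\brV(r, q) - \brV(r', q)\big)$. In particular, if we can choose $r'$ with $\brV(r, q) \neq \brV(r', q)$, then $\delta_{r,q}(\cdot)$ has nonzero slope, hence has at most one root, and is therefore nonzero almost everywhere with respect to $c$.

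To produce such an $r'$, I would invoke Assumption~\ref{ass:perturbable}, which supplies, for any $\eps' > 0$, a distribution $r'$ with $\mathrm{TV}(r, r') < \eps'$ and $\brV(r, q) \neq \brV(r', q)$. To convert the TV bound into the $\mathcal{W}_1$ bound requested in the lemma, I would use the standard inequality $\mathcal{W}_1(r, r') \leq D \cdot \mathrm{TV}(r, r')$ where $D$ is the diameter of (the effective support of) $\cZ$; choosing $\eps' = \eps/D$ yields $\mathcal{W}_1(r, r') \leq \eps$. If $\cZ$ is unbounded, one can restrict to a bounded subset containing $\mathrm{supp}(r)$ before applying Assumption~\ref{ass:perturbable}, since $V$ is bounded and nothing in the argument uses the tails of $\cZ$.

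The only mild obstacle is this TV-to-$\mathcal{W}_1$ conversion; the substantive content of the proof is really the algebraic collapse above, which shows that the apparently quadratic difference $(\brV(r,q) - c)^2 - (\brV(r',q)-c)^2$ is in fact an affine polynomial in $c$ whose leading coefficient is nonzero precisely under the hypothesis supplied by Assumption~\ref{ass:perturbable}.
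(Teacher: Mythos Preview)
Your proposal is correct and follows essentially the same line as the paper: both arguments use \cref{ass:perturbable} to obtain an $r'$ with $\brV(r,q)\neq\brV(r',q)$ and then observe that $\delta_{r,q}(c)$ vanishes for at most a single value of $c$. Your presentation is slightly cleaner in that you explicitly identify $\delta_{r,q}(\cdot)$ as an affine function of $c$ with nonzero slope, whereas the paper sets $\brU(r,q,c)=\brU(r',q,c)$ and solves for the unique $c$; you are also more careful than the paper in flagging the TV--to--$\mathcal{W}_1$ conversion, which the paper's proof elides (in context this is harmless because $\cZ$ is finite per \cref{ass:no_misspec}).
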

\begin{proof}
    We can use \cref{ass:perturbable} to find a distribution $r' \in \Delta(\cZ)$ 
    such that $d(r, r') \leq \eps$ and 
    $\delta \coloneqq \brV(r', q) - \brV(r, q) \neq 0$. 
    Then, note that $\brU(r, q, c) = \brU(r', q, c)$ if and only if
    \begin{align*}
        \text{Var}_{(B, Z) \sim r \times q}[V(B, Z)] + (\brV(r, q) - c)^2
        &= 
        \text{Var}_{(B, Z) \sim r' \times q}[V(B, Z)] + (\brV(r', q) - c)^2 \\
        \text{Var}_{(B, Z) \sim r \times q}[V(B, Z)] - 
        \text{Var}_{(B, Z) \sim r' \times q}[V(B, Z)] &= 2\delta(\brV(r, q) - c) + \delta^2,
    \end{align*}
    and in particular, if and only if 
    \begin{align*}
        c = \brV(r, q) - \frac{1}{2\delta}\left(\text{Var}_{(B, Z) \sim r \times q}[V(B, Z)] - 
        \text{Var}_{(B, Z) \sim r' \times q}[V(B, Z)] - \delta^2\right),
    \end{align*}
    which is a measure-zero set.
\end{proof}
Next, we show that given a collection of proposition distributions $\{r_1, \ldots r_m\}$
corresponding to a set of user models $\{\hq_1, \ldots, \hq_m\}$ such that $r_i$ is 
$\eps_0$-close to $\pmu{\delta_{\hq_i}}{\cdot}$, we can construct a new algorithm $p'$ that is
$2\eps_0$-close to $p$ such that $p'$ satisfies the regularity conditions in \cref{ass:p-cont},
and $p'(\cdot;\delta_{\hq_i}) = r_i$ for all $i \in [m]$.
\begin{lemma}
    \label{applem:patching}
    Given an algorithm $p$ satisfying \cref{ass:p-cont},
    a collection of proposition distributions $\{r_1, \ldots r_m\}$,
    and a set of user models $\{\hq_1, \ldots, \hq_m\}$ such that 
    \[
        \mathcal{W}_1(\pmu{\delta_{\hq_i}}{\cdot}, r_i) \leq \eps_0 \ \forall\ i \in [m],
    \]
    there exists an algorithm $p'$ such that $d_\cP(p, p') \leq 2\eps_0$ and 
    \[
        p(\cdot; \delta_{\hq_i}) = r_i \forall\ i \in [m].
    \]
\end{lemma}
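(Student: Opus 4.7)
The plan is to construct $p'$ by interpolating between $p$ and the target distributions $r_i$ via continuous bump functions centered at each $\delta_{\hq_i}$. First, I would use the belief continuity condition of \cref{ass:p-cont}(3) to pick a radius $\rho > 0$ small enough that (i) the neighborhoods $N_i = \{\mu \in \Delta(\hcQ) : \|\mu - \delta_{\hq_i}\| < \rho\}$ are pairwise disjoint (possible since the $\delta_{\hq_i}$ are distinct points of the finite-dimensional simplex $\Delta(\hcQ)$), and (ii) for every $\mu \in N_i$, $\text{TV}(p(\cdot; \mu), p(\cdot; \delta_{\hq_i})) \leq \eps_0$. Then I would introduce continuous bump functions $\phi_i : \Delta(\hcQ) \to [0, 1]$ supported on $N_i$ with $\phi_i(\delta_{\hq_i}) = 1$, for instance $\phi_i(\mu) = \max\{0, 1 - \|\mu - \delta_{\hq_i}\|/\rho\}$.

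With these in hand, I would define $p'$ as the convex combination
\begin{align*}
p'(\cdot; \mu) \;=\; \sum_{i=1}^m \phi_i(\mu) \cdot r_i \;+\; \Big(1 - \sum_{i=1}^m \phi_i(\mu)\Big) \cdot p(\cdot; \mu).
\end{align*}
Because the $N_i$ are disjoint we have $\sum_i \phi_i(\mu) \leq 1$ everywhere, so $p'(\cdot;\mu)$ is a valid probability distribution. At the specified vertices we get $\phi_i(\delta_{\hq_i}) = 1$ and $\phi_j(\delta_{\hq_i}) = 0$ for $j \neq i$, yielding $p'(\cdot; \delta_{\hq_i}) = r_i$ as required.

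For the distance bound, outside $\bigcup_i N_i$ we have $p'(\cdot;\mu) = p(\cdot;\mu)$ and the TV distance is zero; for $\mu \in N_i$, the triangle inequality gives
\begin{align*}
\text{TV}(p(\cdot;\mu), p'(\cdot;\mu)) \;\leq\; \phi_i(\mu) \cdot \text{TV}(p(\cdot;\mu), r_i) \;\leq\; \text{TV}(p(\cdot;\mu), p(\cdot;\delta_{\hq_i})) + \text{TV}(p(\cdot;\delta_{\hq_i}), r_i) \;\leq\; 2\eps_0,
\end{align*}
where the last step uses the choice of $\rho$ together with the lemma's hypothesis (interpreting the stated proximity in the TV metric that $d_\cP$ uses; if the hypothesis is literally in $\mathcal{W}_1$, an analogous continuity argument phrased in $\mathcal{W}_1$ yields the same conclusion). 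Taking a supremum over $\mu$ gives $d_\cP(p,p') \leq 2\eps_0$.

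The only delicate issue is verifying that $p'$ inherits the regularity properties of \cref{ass:p-cont}: support and bounded likelihood ratios are preserved because $p'(\cdot;\mu)$ is a convex combination of densities that individually satisfy these conditions, and belief continuity is preserved because each $\phi_i$ and $p(\cdot;\mu)$ is continuous in $\mu$. The main (mild) obstacle I anticipate is purely in this bookkeeping and in matching the metric used in the hypothesis with the TV metric used by $d_\cP$; the core interpolation construction itself is essentially a partition-of-unity patching argument.
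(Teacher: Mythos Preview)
Your proposal is correct and follows essentially the same patching strategy as the paper: use the belief-continuity assumption to find small disjoint neighborhoods of the $\delta_{\hq_i}$ on which $p$ moves by at most $\eps_0$, modify $p$ on those neighborhoods to hit $r_i$ at the center, and bound $d_\cP(p,p')$ by the triangle inequality. The only difference is cosmetic: the paper forgoes bump functions and simply sets $p'(\cdot;\mu) = r_i$ \emph{constantly} on each neighborhood (and $p' = p$ elsewhere), which is enough because \cref{ass:p-cont} only demands continuity near point-mass beliefs, not global continuity; your smooth interpolation is a harmless elaboration of the same idea.
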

\begin{proof}
    By \cref{ass:p-cont}, we know that for all $Z \in \cZ$, 
    $p(Z;\mu)$ is continuous in $\mu$ around point mass beliefs.
    This continuity also implies continuity in Wasserstein distance.
    Thus, there exists a single constant $\delta$ such that for any $i \in [m]$,
    \[
        d(\delta_{\hq_i}, \mu) < \delta \implies \mathcal{W}_1(p(\cdot; \delta_{\hq}), p(\cdot; \mu)) \leq \eps_0,
    \]
    where $d(\cdot,\cdot)$ is the same distance metric with respect to which \cref{ass:p-cont} (ii) holds.
    Then, define the new algorithm 
    \[
        p'(Z;\mu) = \begin{cases}
            r_i
            &\text{if } d(\delta_{\hq_i}, \mu) < \delta \\ 
            p(Z; \mu) &\text{otherwise.}
        \end{cases}
    \]
    Clearly, this algorithm satisfies $p(\cdot; \delta_{\hq_i}) = r_i \forall\ i \in [m].$
    Furthermore, it satisfies the continuity condition since
    it is actually constant in the 
    neighborhood of each point mass belief $\delta_{\hq}$. 
    Finally,
    \begin{align*}
        d_\cP(p, p') 
        &= \sup_{\mu \in \Delta(\hcQ)} \mathcal{W}_1(p(\cdot;\mu), p'(\cdot;\mu)) \\
        &= \max_{i \in [m]}\ \sup_{d(\mu, \delta_{\hq_i}) < \delta} \mathcal{W}_1(p(\cdot;\mu), r_i) \\
        &\leq \max_{i \in [m]}\ \sup_{d(\mu, \delta_{\hq_i}) < \delta} \mathcal{W}_1(p(\cdot;\mu), p(\cdot;\delta_{\hq_i})) + \mathcal{W}_1(p(\cdot;\delta_{\hq_i}), r_i) \\
        &\leq 2\eps_0,
    \end{align*}
    concluding the proof.
\end{proof}
\noindent We now resume the main proof. For any $c$, we partition the set of user models into 
\begin{align*}
\hcQ_+(c) &= \{\hq \in \hcQ: \brV(\pmu{\delta_{\hq}}{\cdot}, \hq) > c\}, \\
\hcQ_-(c) &= \{\hq \in \hcQ: \brV(\pmu{\delta_{\hq}}{\cdot}, \hq) < c\}, \\
\hcQ_=(c) &= \{\hq \in \hcQ: \brV(\pmu{\delta_{\hq}}{\cdot}, \hq) = c\}.
\end{align*}
For any value of $c$, we define the functions
\begin{align*}
    M_+(c) = \max_{\hq \in \hcQ_+(c)} \tU(\hq, c), \qquad
    M_-(c) = \max_{\hq \in \hcQ_-(c)} \tU(\hq, c), \qquad
    M(c) = \max_{\hq \in \hcQ} \tU(\hq, c),
\end{align*}
then let \[
   Q^*(c) \coloneqq \arg\max_{\hq \in \hcQ} \tU(q, c)
\] be the set of possible strategies for a strategic user, 
assuming they play according to one of the user models 
$\hq \in \hcQ$. 

Observe that for any $\hq \in Q^*(c)$,
\begin{align}
    \nonumber
    \beta + ( \brV(\pmu{\delta_{\hq}}{\cdot}, \hq) - c)^2 &\geq \tU(\hq, c) \\ 
    \nonumber
    &\geq \max\left\{
        \tU(\hq_1), 
        \tU(\hq_2), 
    \right\} \\
    \nonumber
    &\geq \alpha + 
    \max\left\{
    \left(\max_{\hq \in \hcQ} \brV(p^{\delta_{\hq}}, \hq) - c\right)^2,
    \left(\min_{\hq \in \hcQ} \brV(p^{\delta_{\hq}}, \hq) - c\right)^2
    \right\} \\
    \nonumber
    &\geq \alpha + \frac{\lr{
        \max_{\hq \in \hcQ} \brV(p^{\delta_{\hq}}, \hq) - \min_{\hq \in \hcQ} \brV(p^{\delta_{\hq}}, \hq)
    }^2}{4} \\
    \nonumber
    (\brV(\pmu{\delta_{\hq}}{\cdot}, \hq) - c)^2 
    &\geq \frac{\lr{
        \max_{\hq \in \hcQ} \brV(p^{\delta_{\hq}}, \hq) - \min_{\hq \in \hcQ} \brV(p^{\delta_{\hq}}, \hq)
    }^2}{4} - (\beta - \alpha)\\
    &= \gamma > 0. 
    \label{eq:middle_bound}
\end{align}
In particular, this implies that $\hq \not\in \hcQ_=(c)$, and thus
\[
   M(c) = \max\{M_+(c), M_-(c)\}.
\]
Now, as $c$ increases, the set $\hcQ_-(c)$ grows and each $\tU(\hq, c)$ is non-decreasing for each $\hq \in \hcQ_-(c)$,
and so $M-_(c)$ is non-decreasing in $c$. 
Similar logic shows that $M_+(c)$ is non-increasing in $c$.
Furthermore, by definition, $M_+(c) > M_-(c)$ when 
$c = 0$
and $M_-(c) > M_+(c)$ when 
$c = 1$.
We thus let $c_0 = \inf \{c: M_-(c) \geq M(c)\},$ and let $c_1 = \sup \{c: M_+(c) \geq M(c)\}.$
\begin{lemma}
    There exists a constant $\psi > 0$ such that 
    the function $M_-(c)$ is uniformly continuous over the interval 
    $[c_0 - \psi, 1]$, 
    and $M_+(c)$ is uniformly continuous over 
    $[0, c_1 + \psi]$
\end{lemma}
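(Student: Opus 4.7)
The plan is to exploit two structural facts: (i) for each fixed $\hq \in \hcQ$, the map $c \mapsto \tU(\hq, c) = \mathrm{Var}_{p^{\delta_{\hq}} \times \hq}[V] + (\brV(p^{\delta_{\hq}}, \hq) - c)^2$ is a polynomial of degree two in $c$, hence uniformly continuous on $[0,1]$, and (ii) the membership in the sets $\hcQ_-(c)$ and $\hcQ_+(c)$ changes only at the finite ``threshold set'' $\mathcal{T} = \{\brV(p^{\delta_{\hq}},\hq) : \hq \in \hcQ\}$. Since $\hcQ$ is finite, on any open interval avoiding $\mathcal{T}$ the set $\hcQ_-(c)$ is constant, and so $M_-(c)$, being the maximum of finitely many quadratics, is continuous (and uniformly continuous on compact sub-intervals). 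Thus the only possible discontinuities of $M_-$ are at points of $\mathcal{T}$, and since $M_-$ is non-decreasing these can only be upward jumps that occur when a new user model is added to $\hcQ_-(c)$.

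I would then reduce the problem to two regions. First, on $[c_0, 1]$, I claim $M_-(c) = M(c)$: by definition of $c_0$ we have $M_-(c_0) \geq M(c_0)$, and combined with $M_-(c) \leq M(c)$ this gives equality at $c_0$; for $c > c_0$, monotonicity of $M_-$ (non-decreasing) and $M_+$ (non-increasing) yields $M_-(c) \geq M_-(c_0) = M(c_0) \geq M_+(c_0) \geq M_+(c)$, so $M(c) = M_-(c)$. Because $M$ is the max of finitely many quadratics, it is uniformly continuous on $[0,1]$, and therefore $M_-$ is uniformly continuous on $[c_0, 1]$. Second, since $\mathcal{T}$ is finite, I pick $\psi > 0$ small enough that $(c_0 - \psi, c_0) \cap \mathcal{T} = \emptyset$; then $\hcQ_-(c)$ is constant on $[c_0 - \psi, c_0)$, giving uniform continuity there as well.

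The main obstacle, and the step that requires invoking \eqref{eq:middle_bound}, is closing the gap at $c_0$ itself: showing that $\lim_{c \to c_0^-} M_-(c) = M(c_0)$ even when $c_0 \in \mathcal{T}$. The plan is as follows: let $\hq_0^\star \in Q^\star(c_0)$ be a maximizer, so $\tU(\hq_0^\star, c_0) = M(c_0) = M_-(c_0)$. The bound \eqref{eq:middle_bound} guarantees $(\brV_{\hq_0^\star} - c_0)^2 \geq \gamma > 0$, so $\hq_0^\star \notin \hcQ_=(c_0)$. Since the maximizer achieves $M_-(c_0)$, it must lie in $\hcQ_-(c_0)$, hence $\brV_{\hq_0^\star} < c_0$ strictly. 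Consequently, for all $c$ sufficiently close to $c_0$ from the left we still have $\brV_{\hq_0^\star} < c$, so $\hq_0^\star \in \hcQ_-(c)$, and therefore
\[
M_-(c) \;\geq\; \tU(\hq_0^\star, c) \;\xrightarrow{c \to c_0^-}\; \tU(\hq_0^\star, c_0) \;=\; M(c_0).
\]
Combined with the a priori inequality $M_-(c) \leq M(c)$ and continuity of $M$, this yields $\lim_{c \to c_0^-} M_-(c) = M(c_0)$, giving continuity (and hence, on the compact interval $[c_0 - \psi, 1]$, uniform continuity) of $M_-$. The argument for $M_+$ on $[0, c_1 + \psi]$ is symmetric, using that $M_+$ is non-increasing and that $M_+(c_1) = M(c_1)$.
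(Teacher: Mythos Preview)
Your approach is genuinely different from the paper's. The paper proves pointwise continuity of $M_-$ at \emph{every} $c \in [c_0-\psi,1]$ by a case split on whether $c$ lies in the threshold set $P=\{\brV(p^{\delta_{\hq}},\hq):\hq\in\hcQ\}$; in the delicate case $c\in P$ it pushes slightly to the right ($\bar c = c+2\psi>c_0$), invokes \eqref{eq:middle_bound} there, and argues that no model in $\hcQ_=(c)$ can realize the maximum in a neighborhood of $c$. You instead decompose the interval as $[c_0-\psi,c_0]\cup[c_0,1]$: on the right piece you identify $M_-$ with the globally continuous $M$; on the left piece you freeze $\hcQ_-(\cdot)$ by avoiding thresholds. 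This is cleaner and shorter when it works.

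There is, however, a real gap at the seam. You assert ``by definition of $c_0$ we have $M_-(c_0)\ge M(c_0)$,'' but $c_0=\inf\{c:M_-(c)\ge M(c)\}$ need not belong to this set: since $\hcQ_-(c)$ is only \emph{left}-continuous as a set-valued map, $M_-$ may jump up immediately to the right of a threshold, and your infimum could fail to be attained precisely when $c_0\in\mathcal T$. Everything you do afterward in the ``main obstacle'' paragraph presupposes $M_-(c_0)=M(c_0)$, so the circularity has to be broken. (Relatedly, once $M_-(c_0)=M(c_0)$ is known, you do not actually need \eqref{eq:middle_bound} to pick $\hq_0^\star\in\hcQ_-(c_0)$: simply take any maximizer of $M_-(c_0)$, which by definition lies in $\hcQ_-(c_0)$ and hence satisfies $\brV_{\hq_0^\star}<c_0$ already.)

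The fix is to deploy \eqref{eq:middle_bound} \emph{earlier}, to establish $M_-(c_0)=M(c_0)$. Take $c_n\downarrow c_0$ with $M_-(c_n)=M(c_n)$ (the set $\{c:M_-(c)\ge M_+(c)\}$ is an upper interval, so it contains every $c>c_0$). For each $n$, let $\hq_n\in\hcQ_-(c_n)$ attain $M_-(c_n)=M(c_n)$; then $\hq_n\in Q^\star(c_n)$, so \eqref{eq:middle_bound} gives $\brV_{\hq_n}\le c_n-\sqrt{\gamma}$. Passing to a constant subsequence (finiteness of $\hcQ$) yields $\hq^\star$ with $\brV_{\hq^\star}\le c_0-\sqrt{\gamma}<c_0$, hence $\hq^\star\in\hcQ_-(c_0)$ and
\[
M_-(c_0)\;\ge\;\tU(\hq^\star,c_0)\;=\;\lim_n \tU(\hq^\star,c_n)\;=\;\lim_n M(c_n)\;=\;M(c_0).
\]
With this in hand your decomposition goes through, and the remainder of your argument (including the symmetric treatment of $M_+$) is correct.
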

\begin{proof}
    We start by considering $M_-(c)$.
    For any $\psi$, the interval 
    $[c_0 - \psi, 1]$ 
    is compact and so 
    by the Heine-Cantor theorem it suffices to show that $M_-(c)$ is pointwise continuous at 
    each $c$ in the interval. 
    Let $P \subset \bbR$ be the (finite) set 
    $\{\brV(\pmu{\delta_{\hq}}{\cdot}, \hq): \hq \in \hcQ\}$.
    Now, for any $c$ in the interval, one of the following two 
    cases must hold:
    \begin{itemize}
        \item[(A)]$c \not\in P$, i.e., there does not exist a $\hq \in \hcQ$ 
        whose corresponding platform payoff is equal to~$c$. 
        In this case, 
        we can always find a $\delta$ small enough such that for all 
        $c' \in (c - \delta, c + \delta)$, the set 
        $\hcQ_-(c') = \hcQ_-(c)$ does not change (by the finite nature of $\hcQ$). 
        We then use the continuity of
        $\tU(\hq, c)$ in $c$ for each $\hq \in \hcQ$, 
        and the fact that a maximum of continuous functions 
        is continuous to show that $M_-$ is continuous at $c$.

        \item[(B)]$c \in P$, i.e., 
        there exists at least one user model in $\hcQ$
        whose corresponding payoff is $c$ (in other words, 
        $\hcQ_=(c) \neq \emptyset$).
        Let $\hq \in \hcQ_=(c)$ be any such model.
        Define $\bar{c} = c + 2\psi$, so that $\bar{c} > c_0$. 
        We will first show by contradiction that there exists 
        a user model $\hq' \in \hcQ_-(\bar{c})$ such that
        $\tU(\hq, \bar{c}) < \tU(\hq', \bar{c})$. 
        In particular, if this were not the case,
        we would have $\tU(\hq, \bar{c}) = M_-(\bar{c})$, 
        and since $\bar{c} > c_0$ we would have that 
        $\tU(\hq, \bar{c}) \geq M(\bar{c})$,
        and thus by \eqref{eq:middle_bound},
        \begin{align*}
            (\brV(\pmu{\delta_{\hq}}{\cdot}, \hq) - \bar{c})^2 = \gamma > 0.
        \end{align*}
        By supposition, $\brV(\pmu{\delta_{\hq}}{\cdot}, \hq) = c$. 
        Thus, by setting $\psi$ small enough (i.e., as $\bar{c} \to c$), 
        we reach a contradiction. We can also set $\psi$ small enough so that
        $\hcQ_-(\bar{c}) = \hcQ_-(c) \cup \hcQ_=(c)$. In this case, the 
        logic above applies to any $\hq \in \hcQ_=(c)$, and so it must be 
        that for some $\hq' \in \hcQ_-(c)$,
        \[
            \eta \coloneqq \min_{\hq \in \hcQ_=(c)} \lr{\tU(\hq', \bar{c}) - \tU(\hq, \bar{c})} > 0,
        \]
        For the same $\hq'$ (and again, any $\hq \in \hcQ_=(c)$),
        and any $c' \in (c, \bar{c})$,
        \begin{align*}
            \tU(\hq', c') - \tU(\hq, c') 
            &\geq \eta + 
            (\tU(\hq', \bar{c}) - \tU(\hq', c'))
            - (\tU(\hq, \bar{c}) - \tU(\hq, c')).
        \end{align*}
        Observing that
        \begin{align*}
            \tU(\hq', \bar{c}) - \tU(\hq', c') 
            &= 2(\bar{c} - c')(\brV - c) + (\bar{c} - c')^2 \\
            &= 4\psi(\brV - c) + 4\psi^2,
        \end{align*}
        we can again set $\psi$ small enough so that 
        $\tU(\hq', c') > \tU(\hq, c')$. 
        As a result, we have shown that 
        for any $c' \in (c, \bar{c}),$
        the maximizer corresponding to $M_-(c')$
        is not a member of $\hcQ_=(c)$, i.e.,
        \[
            \arg\max_{\hq \in \hcQ_-(c')} \tU(\hq, c') \cap \hcQ_=(c) = \emptyset.
        \]

        Now, there exists a $\delta$ such that for all 
        $c' \in [c - \delta, c + \delta]$, 
        \[
            \hcQ_-(c') = \begin{cases}
                \hcQ_-(c) &\text{ if } c' < c \\
                \hcQ_-(c) \cap \hcQ_=(c) &\text{ if } c' > c,
            \end{cases}
        \]
        and in both cases 
        $\arg\max_{\hq \in \hcQ_-(c')} \subset \hcQ_-(c)$,
        and thus we can use continuity of 
        $\tU(\hq, c)$ in $c$.

    \end{itemize}
    The same logic implies that $M_+(c)$ is uniformly continuous on 
    the interval 
    $[0, c_1 + \psi]$.
\end{proof}
Note that by definition of $c_1$, and because $c_1 + \psi > c_1$, it must be that 
$M_+(c_1 + \psi) < M(c_1 + \psi)$ 
(otherwise $c_1$ would not be an upper bound on the set of 
which it is the $\sup$), which means that 
$M_-(c_1 + \psi) = M(c_1 + \psi)$, 
which means that $c_1 + \psi \geq c_0$. 
Conversely, $c_0 - \psi \leq c_1$, and thus 
$[c_0 - \psi, c_1 + \psi]$ is a well-defined interval on which 
both $M_-(c)$ and $M_+(c)$ are uniformly continuous.

Consider the function $h(c) \coloneqq M_+(c) - M_-(c)$ on the interval 
$[c_0 - \psi, c_1 + \psi]$. At the beginning of the interval, 
$h(c) > 0$, while at the end of the interval $h(c) < 0$.
By the intermediate value theorem, 
there exists $c^* \in [c_0 - \psi, c_1 + \psi]$ such that $h(c^*) = 0$.

Now, $\hcQ^*(c^*)$ is surely not a singleton, as it must contain at least 
one element $\hq^+ \in \hcQ_+(c^*)$ and one element $\hq^- \in \hcQ_-(c^*)$. 
We apply the following procedure to each $\hq \in \hcQ^*(c^*)$:
\begin{enumerate}
    \item Use \cref{applem:perturbable} to find a distribution 
    $r_{\hq} \in \Delta(\cZ)$
    such that $\mathcal{W}_1(r_{\hq}, \pmu{\delta_{\hq}}{\cdot}) < \eps_0$,
    and $\brU(r_{\hq}, \hq) \neq \brU(\pmu{\delta_{\hq}}{\cdot}, \hq)$.
    For simplicity, we assume that $c^*$ is not one of the finite number of
    values of $c$ such that we cannot apply \cref{applem:perturbable}---if 
    this is not the case, we can simply perturb $c^*$ by some 
    sufficiently small amount so as not to affect the calculations in the 
    rest of the proof.
    \item If $\brU(r_{\hq}, \hq) > \brU(\pmu{\delta_{\hq}}{\cdot}, \hq)$, terminate.
    \item Otherwise, continue to the next $\hq \in \hcQ^*(c^*)$.
    \item If that is the last $\hq \in \hcQ^*(c^*)$, terminate and do not perform Step 1. 
\end{enumerate}
At the end of the procedure, 
we will have some $\hq^* \in \hcQ^*(c^*)$ such that 
$\hq^*$ is strictly preferred by the user to any other 
$\hq \in \hcQ^*(c^*)$ (and thus, to any $\hq \in \hcQ$) 
under the constructed proposition distributions.
In particular, we can use \cref{applem:patching} to construct a 
new algorithm $p'$ such that
\[
    d_\cP(p, p') < \eps_0 \qquad \text{ and } \qquad 
    \arg\max_{\hq \in \hcQ} \brU(p'(\cdot;\delta_{\hq}), \hq) = \{\hq^*\}.
\]
Without loss of generality,
suppose that 
$\hq^* \in \hcQ_{-}(c^*)$.

Now, let $\eps_2 > 0$ be a small enough constant to ensure that
$Q^*(c^* - \eps_2) \subset Q_+(c^*)$ 
(note that $Q^*(c^* - \eps_2) \subset Q_+(c^* - \eps_2)$ for all $\eps_2 > 0$;
then, by setting $\eps_2$ small enough we can also ensure that $Q_+(c^* - \eps_2) = Q_+(c^*)$).
Thus, we can set $\eps_2$ small enough to ensure the following two conditions:
\begin{itemize}
    \item[(a)] $\arg\max_{\hq \in \hcQ} \brU(p(\cdot;\delta_{\hq}), \hq) \subset \hcQ_+(c^*),$ and
    \item[(b)] $\arg\max_{\hq \in \hcQ} \brU(p'(\cdot;\delta_{\hq}), \hq) \subset \hcQ_-(c^*).$
\end{itemize}
In particular, in combination with \eqref{eq:middle_bound}, these two conditions 
imply that 
\begin{itemize}
    \item[(a)] For all $\hq^*(p) \in \arg\max_{\hq \in \hcQ} \brU(p(\cdot;\delta_{\hq}), \hq)$,
    we have $\brV(p(\cdot;\delta_{\hq}), \hq) \geq c^* + \sqrt{\gamma'}$
    \item[(b)] For all $\hq^*(p') \in \arg\max_{\hq \in \hcQ} \brU(p'(\cdot;\delta_{\hq}), \hq)$,
    we have $\brV(p(\cdot;\delta_{\hq}), \hq) \leq c^* - \sqrt{\gamma'}$,
\end{itemize}
where we define 
\[
    \gamma' \coloneqq \frac{\max_{\hq \in \hcQ} \brV(p'(\cdot; \delta_{\hq}), \hq)
    - \min_{\hq \in \hcQ} \brV(p'(\cdot; \delta_{\hq}), \hq)}{4}
    - (\beta - \alpha) > 0,
\]
which can be made arbitrarily close to $\gamma$ by setting $\eps_0$ 
small enough.
Putting these two together, we get that 
\[
    \left|\brV(p(\cdot;\delta_{\hq^*(p')}), \hq^*(p'))
    - 
    \brV(p(\cdot;\delta_{\hq^*(p)}), \hq^*(p))\right|
    \geq 2\sqrt{\gamma'}.
\]

To conclude the proof, we show that for a sufficiently granular 
$\eps$-net, the user does not gain much by deviating from the 
set of user models $\hcQ$.
\begin{restatable}{lemma}{epsnet}
    \label{applem:eps_net}
    Fix any any user strategy $q$ and any full-support algorithm $p$,
    and suppose that 
    $\hcQ$ is an $\eps$-net 
    hypothesis class
    for some $\eps \in (0, \frac{1}{|\cB|})$.
    Then for every user model $\hq \in \hcQ$,
    \begin{align*}
        \hq \in S^\infty(q, p, \hcQ) \implies
        \max_{Z \in \cZ} \KL(q(\cdot|Z), \hq(\cdot|Z)) \leq \log\left(
            \frac{1}{1-|\cB|\cdot \eps}
        \right),
    \end{align*}
    and as a result, $\tU(\hq, c) \geq \tU(q, c) - \sqrt{-\frac{1}{2}\log(1-|\cB|\eps)}$.
\end{restatable}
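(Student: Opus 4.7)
\emph{Plan.} The plan is to prove the KL bound by contradiction, via a one-step ``swap'' argument inside the iterated elimination of \cref{def:KLdom}, and then obtain the payoff bound as a corollary through Pinsker's inequality. The crucial ingredient is to exhibit, for any user strategy $q$, a specific element $\hq^\star \in \hcQ$ with $\KL(q(\cdot|Z), \hq^\star(\cdot|Z)) \leq \log(1/(1-|\cB|\eps))$ for \emph{every} $Z$. Given such an $\hq^\star$, the product structure of the $\eps$-net hypothesis class $\hcQ = \Delta_\eps(\cB)^\cZ$ lets us modify any single ``coordinate'' $Z_0$ of $\hq$ independently, converting a single bad proposition into a uniformly dominating hypothesis and contradicting $\hq \in S^\infty$.

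\emph{Step 1 (constructing the approximator $\hq^\star$).} This is the step I expect to be most delicate: a naive choice---e.g.\ the nearest net element to $q(\cdot|Z)$---can assign zero mass to some $B$ with $q(B|Z) > 0$, blowing up the KL. Instead, I would first smooth $q(\cdot|Z)$ by setting $d_Z(B) \coloneqq (1 - |\cB|\eps)\, q(B|Z) + \eps$. The assumption $\eps < 1/|\cB|$ ensures $d_Z \in \Delta(\cB)$ with all entries at least $\eps$. The $\eps$-net property then supplies $\hq^\star(\cdot|Z) \in \Delta_\eps(\cB)$ with $\|\hq^\star(\cdot|Z) - d_Z\|_\infty \leq \eps$, so $\hq^\star(B|Z) \geq d_Z(B) - \eps = q(B|Z)(1 - |\cB|\eps)$. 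A one-line computation then yields $\KL(q(\cdot|Z), \hq^\star(\cdot|Z)) \leq \log(1/(1-|\cB|\eps))$, and collecting $\{\hq^\star(\cdot|Z)\}_{Z \in \cZ}$ gives a valid $\hq^\star \in \hcQ$ by the product structure.

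\emph{Step 2 (swap argument).} Suppose for contradiction that $\hq \in S^\infty(q, p, \hcQ)$ violates the KL bound at some $Z_0 \in \cZ$. Define $\hq' \in \hcQ$ to equal $\hq$ on every $Z \neq Z_0$ and to equal $\hq^\star(\cdot|Z_0)$ at $Z_0$; $\hq' \in \hcQ$ because $\hcQ$ is a product. Applying the KL chain rule to the product distributions, the only term in which $\hq$ and $\hq'$ differ is the one indexed by $Z_0$, yielding
\begin{align*}
	\KL(p(\cdot;\mu) \times q, p(\cdot;\mu) \times \hq) - \KL(p(\cdot;\mu) \times q, p(\cdot;\mu) \times \hq') = p(Z_0;\mu) \cdot \bigl[\KL(q(\cdot|Z_0), \hq(\cdot|Z_0)) - \KL(q(\cdot|Z_0), \hq^\star(\cdot|Z_0))\bigr].
\end{align*}
By the full-support assumption on $p$ and the hypothesis on $Z_0$, the right-hand side is strictly positive for every $\mu \in \Delta(\hcQ)$. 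Hence $\hq' \succ^q_{p(\cdot;\mu)} \hq$ for all such $\mu$, so $\hq \notin R(\hcQ)$ already at the first round of elimination, contradicting $\hq \in S^\infty(q, p, \hcQ)$. This establishes the max-KL bound.

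\emph{Step 3 (from KL to payoff).} Pinsker's inequality applied to the per-$Z$ KL bound yields $\max_{Z} \text{TV}(q(\cdot|Z), \hq(\cdot|Z)) \leq \sqrt{-\tfrac{1}{2}\log(1-|\cB|\eps)}$. Since $V \in [0, 1]$ implies $(V-c)^2 \in [0, 1]$, the standard bound $|\bbE_P f - \bbE_{P'} f| \leq \text{TV}(P, P')$ for $f \in [0,1]$ gives
\begin{align*}
	\bigl|\brU(p(\cdot;\delta_\hq), q, c) - \brU(p(\cdot;\delta_\hq), \hq, c)\bigr| \leq \max_{Z} \text{TV}(q(\cdot|Z), \hq(\cdot|Z)).
\end{align*}
Finally, because $\hq \in S^\infty(q, p, \hcQ)$ means $\delta_\hq \in \Delta(S^\infty(q, p, \hcQ))$, the definition of $\tU(q, c)$ as a minimum gives $\brU(p(\cdot;\delta_\hq), q, c) \geq \tU(q, c)$. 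Combining this with the TV bound and the identity $\tU(\hq, c) = \brU(p(\cdot;\delta_\hq), \hq, c)$ (valid since $\hq \in \hcQ$ and $p$ has full support) yields the second assertion.
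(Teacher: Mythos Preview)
Your proposal is correct and follows essentially the same approach as the paper: the smoothing $d_Z(B) = (1-|\cB|\eps)q(B|Z) + \eps$ is exactly the paper's mixture $q_\nu$ with $\nu = |\cB|\eps$, your swap argument is a more explicit version of the paper's one-line observation that any $\hq \in S^\infty$ must be per-$Z$ KL-optimal over $\Delta_\eps(\cB)$, and your Step~3 matches the paper's Pinsker-then-TV chain verbatim.
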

\begin{proof}
    Note that because $\hcQ$ is the Cartesian product of $\Delta_\eps(\cB)$ 
    across $Z$, when $p$ has full support 
    every element $\hq \in S^\infty(q, p, \hcQ)$ must satisfy 
    \[
        \hq(\cdot|Z) \in \arg\min_{p_B \in \Delta_\eps(\cB)} \KL(q(\cdot|Z), p_B)\ \forall\ Z \in \cZ.
    \] 
    In particular, we can find a $\hq'$ that strictly dominates 
    any $\hq$ that violates this condition by 
    swapping its behavior at violating values of $Z$.
    Now, for 
    $\nu = \eps \cdot |\cB|$,
    let $q_\nu$ be a mixture of 
    $q$ with the uniform distribution over $\cB$, i.e.,
    \[
        q_\nu(\cdot|Z) \coloneqq \nu \cdot \frac{1}{|\cB|} + (1 - \nu) \cdot q(\cdot|Z).
    \]
    By definition of the $\eps$-net, there must be some $\hq_i$ such that 
    $\hq_i(B|Z) \geq q_\nu(B|Z) - \eps$, and in turn,
    \begin{align*}
        \KL(q(\cdot|Z), \hq_i(\cdot|Z)) 
        &\leq \mathbb{E}_{B \sim q(\cdot|Z)} \left[
            \log\lr{
                \frac{q(\cdot|Z)}{\nu \cdot \frac{1}{|\cB|} + (1 - \nu) \cdot q(\cdot|Z) - \eps}
            }
        \right] 
    =
        \log\lr{ \frac{1}{1 - \nu} }.
    \end{align*}
    Next, observe that the function 
    $U(B, Z) = (V(B, Z) - c)^2$ is 
    
    bounded in $[0, 1]$, and so
    \begin{align*}
        \tU(q, c) 
        &= \min_{\mu \in S^\infty(q, p, \hcQ)} \brU(\pmu{\mu}{\cdot}, q) \\
        &\leq \brU(\pmu{\delta_{\hq}}{\cdot}, q) &\text{(since $\hq \in S^\infty$)} \\
        &\leq \brU(\pmu{\delta_{\hq}}{\cdot}, \hq) + \max_{Z \in \cZ}\ \text{TV}(\hq(\cdot|Z), q(\cdot|Z)) &\text{(since $U$ is bounded in $[0, 1]$)}\\
        &\leq \tU(\hq, c) + \sqrt{\frac{1}{2}\log\lr{\frac{1}{1-\nu}}},
    \end{align*}
    where above we used the definition of total variation distance,
    as well as the fact that for any two probability distributions $A$ and $B$,
    \(
        \text{TV}(A, B) < \sqrt{\frac{1}{2}\KL(A, B)}
    \)
    by Pinsker's inequality.
\end{proof}
\noindent For any $\eps_1 > 0$, we can use \cref{applem:eps_net} and set
\[
    \eps = \frac{1 - \exp\lr{
        -\eps_1^2
    }}{|\cB|}
\]
to get that $\tU(\hq, c) \geq \tU(q, c) - \eps_1$.
Thus, if 
$\hq \in S^\infty(q, p, \hcQ)$ for some strategy $q$,
then $q$ 
cannot yield a significantly higher payoff than $\hq$.
This allows us to reduce the case of picking the 
optimal user strategy $q$ to picking the optimal user model $\hq \in \hcQ$.
That is, if the user {\em strictly} prefers a user model $\hq^*$ to any 
other user model, 
we can set $\eps$ sufficiently small so that the user strictly
prefers their globally stable set to be $\{\hq^*\}$,
which entails playing a strategy close to $\hq^*$.

Thus, the smallest possible gap between the platform's predicted payoff under 
$p'$ and its true payoff under $p'$ is given by
\begin{align*}
    \min_{\mu \in \Delta(\{\hq^*(p)\})}
    \left|
        \widehat{V}(p', \mu) - \brV^*(p')
    \right| 
    &= 
    \left|
        \widehat{V}(p', \delta_{\hq^*(p)}) - \brV(p'(\cdot; \delta_{\hq^*(p')}, q^*(p')))
    \right|  \\ 
    &= \left|
        \brV(p'(\cdot; \delta_{\hq^*(p)}), \hq^*(p)) - \brV(p'(\cdot; \delta_{\hq^*(p')}, q^*(p')))
    \right|  \\ 
    &\geq 2\sqrt{\gamma'} - \left|
        \brV(p'(\cdot; \delta_{\hq^*(p')}, \hq^*(p'))) - \brV(p'(\cdot; \delta_{\hq^*(p')}, q^*(p')))
    \right|  \\ 
    &\geq 2\sqrt{\gamma'} - \eps_1,
\end{align*} 
where the last inequality follows from \cref{applem:eps_net}.

\end{proof}

\clearpage

\subsection{Proof of \cref{prop:cf_omega}}
\label{app:cf_omega_pf}

We prove this result using a highly oversimplified example in order to illustrate the
main principle behind the proof.
First, let the user's payoff function be $U(B, Z) = V(B, Z) + \lambda \cdot g(Z)$,
where $V(B, Z)$ is the platform payoff and $g(Z)$ is a function 
to be specified later (along with the scalar $\lambda$). 
Even without specifying $g(Z)$,
it is clear that for any $Z \in \cZ$,
$q^\BR(\cdot|Z) = \arg\max_{B \in \cB} V(B, Z)$.

Now, find the following two user models $\hq_1$ and $\hq_2$:
\begin{align*}
    \hq_1 &= \arg\min_{\hq \in \hcQ} \KL\left[p(\cdot; \delta_{\hq}) \times q^\BR, p(\cdot; \delta_{\hq}) \times \hq\right], \\
    \hq_2 &= \arg\max_{\hq \in \hcQ} \mathcal{W}\left[p(\cdot; \delta_{\hq_1}), p(\cdot; \delta_{\hq})\right].
\end{align*}
(In the literature, $\hq_1$ is referred to as a Berk-Nash equilibrium \citep{esponda2016berk,frick2020stability}.)
Consider the distributions
\( p(\cdot;\delta_{\hq_1}) \) 
and 
\( p(\cdot;\delta_{\hq_2}) \), and define 
\[
    g(\cdot) \coloneqq \arg\max_{\|f\|_L \leq 1}
    \mathbb{E}_{Z \sim p(\cdot;\delta_{\hq_2})}[f(Z)] -
    \mathbb{E}_{Z \sim p(\cdot;\delta_{\hq_1})}[f(Z)],
\]
where $\|f\|_L$ represents the Lipschitz constant of the function $f: \cZ \to
\bbR$.

By construction of $g$, we have that for any user strategy $q$,
\begin{align*}
    \brU(p(\cdot; \delta_{\hq_2}), q) - \brU(p(\cdot; \delta_{\hq_1}), q) 
    &\geq \lambda \cdot \mathcal{W}(\pmu{\delta_{\hq_1}}{\cdot}, \pmu{\delta_{\hq_2}}{\cdot}) - 1 \\ 
    &\geq \frac{\lambda}{2}\max_{\hq, \hq' \in \hcQ} \mathcal{W}(\pmu{\delta_{\hq}}{\cdot}, \pmu{\delta_{\hq'}}{\cdot}) - 1 \\
    &> 0,
\end{align*}
as long as $\lambda > \frac{2}{R}$.

Now, let $\hcQ_1 = \{\hq_2\}$ and $\hcQ_2 = \{\hq_2, \hq_1\}$.
Under $\hcQ_1$, the platform will trivally converge to $\hq_2$
regardless of the user's behavior, and so the user is incentivized
to play according to $q^\BR$. 

When $\hq_1$ is added to the set of user models, 
the naive user strategy $q^\BR$ will lead to the user model 
$\hq_1$ never being eliminated from the globally stable set 
(due to its status as a Berk-Nash equilibrium for $q^\BR$).
By construction of the user payoff function, this is always
suboptimal for the user, since $\delta_{\hq_1}$ will be in 
$\Delta(S^\infty(q, p, \hcQ))$, 
and so they will be incentivized to switch to a strategy that 
ensures only $\hq_2$ is in the globally stable set.

Since, by assumption, $V(B, Z)$ has a unique maximizer for each $Z \in \cZ$,
the strategic user's new strategy must result in strictly lower platform 
payoff, concluding the proof of the theorem.

\begin{remark}
    Note that while the example presented in this proof is rather contrived, 
    the principle behind it is actually quite general.
    The principle is that the strategic user wants to induce a specific 
    proposition distribution from the platform, but does not want to stray 
    too far from their best-response behavior. 
    Thus, they purposefully pick a strategy that the platform misinterprets (in
    this case, that strategy is just $q^\BR$, but it could be any strategy $q_1^*$).
    When the platform gets ``better'' at capturing their behavior by 
    adding a user model that is close to $q_1^*$, the user is forced to 
    move even further away from their best-response behavior, making things 
    worse for the platform.
\end{remark}

\clearpage
\subsection{Proof of \cref{prop:smooth_p_BR}}
\label{app:best_response_predictable_proof}
\bestresponsepredictable*
\begin{proof}
    We first restate the two quantities being compared:
    \begin{align*}
        \hV(\pcf, \mu) &= \bbE_{\hq \sim \mu}\left[
            \brV(\pcf(\cdot; \mu), \hq)
        \right] .
        \\
        \brV_\BR ( 
        \pcf
        )
        &= 
        \min_{\mu \in \Delta(S^\infty (  q^\BR , \pcf))}
        \brV 
        \left( 
        \pcf(\cdot; \mu) , 
        q^\BR
        \right) .
    \end{align*}
    Using the triangle inequality,
    \begin{align*}
		\max_{\mu \in \Delta(S^\infty(q^\BR, p))} \left|
		\hV ( \pcf, \mu)
		- 
		\brV_\BR ( 
		\pcf
		)
		\right| 
        &\leq\!\!\!\!\!
        \max_{\mu_1, \mu_2 \in \Delta(S^\infty(q^\BR, p))}\! \left|
            \bbE_{\hq \sim \mu_1}\!\left[
                \brV(\pcf(\cdot; \mu_1), \hq)
            \right]
            \!-\! 
            \brV\!
            \left( 
            \pcf(\cdot; \mu_2) , 
            q^\BR
            \right) 
        \right|  \\
        &\leq  \max_{\mu_1, \mu_2 \in \Delta(S^\infty(q^\BR, p))}\! \left|
            \bbE_{\hq \sim \mu_1}\!\left[
                \brV(\pcf(\cdot; \mu_1), \hq)
            \!-\! 
            \brV\!
            \left( 
            \pcf(\cdot; \mu_2) , 
            \hq
            \right) 
            \right]
        \right|  
        \\
        &\qquad + \left|
            \bbE_{\hq \sim \mu_1}\!\left[
                \brV(\pcf(\cdot; \mu_2), \hq)
            \!-\! 
            \brV\!
            \left( 
            \pcf(\cdot; \mu_2) , 
            q^\BR
            \right) 
            \right]
        \right|   \\
        &\leq  \max_{\mu_1, \mu_2 \in \Delta(S^\infty(q^\BR, p))}\! \left(\left|
            \bbE_{\hq \sim \mu_1}\!\left[
                \brV(\pcf(\cdot; \mu_1), \hq)
            \!-\! 
            \brV\!
            \left( 
            \pcf(\cdot; \mu_2) , 
            \hq
            \right) 
            \right] \!\!\!\!\!\!\!\!\!\!\!\!\!\phantom{q^{\BR^2}_2}
        \right| \right.
        \\
        &+ \left.\max_{\hq \in S^\infty(q^\BR, p)} \left|
                \brV(\pcf(\cdot; \mu_2), \hq)
            \!-\! 
            \brV\!
            \left( 
            \pcf(\cdot; \mu_2) , 
            q^\BR
            \right) 
        \right|  \right) 
    \end{align*}
    Note that the second term above is bounded by
    the maximum total variation distance between $\hq$ and 
    $q^\BR$, which we can bound using \cref{applem:eps_net}, restated below:
    \epsnet*
    \noindent We can then bound the first term using the well-behavedness condition 
    which implies that
    \begin{align*}
        d_{\cP}(\pcf(\cdot; \mu_1), \pcf(\cdot; \mu_2)) 
        &\leq L_{\cP} \cdot \mathbb{E}_{\hq_1 \sim \mu_1,\,\hq_2 \sim \mu_2}\left[
            \max_{Z \in \cZ}\ 
            \text{TV}(\hq_1(\cdot|Z), \hq_2(\cdot|Z))
            \right] \\
        \leq L_{\cP} \cdot &\mathbb{E}_{\hq_1 \sim \mu_1,\,\hq_2 \sim \mu_2}\left[
            \max_{Z \in \cZ}\ 
            \text{TV}(\hq_1(\cdot|Z), q^\BR(\cdot|Z))
            + \text{TV}(\hq_2(\cdot|Z), q^\BR(\cdot|Z))
            \right],
    \end{align*}
    which we can again bound using \cref{applem:eps_net}.
    Thus,
    \begin{align*}
        \max_{\mu \in \Delta(S^\infty(q^\BR, p))} \left|
            \hV ( \pcf, \mu)
            - 
            \brV_\BR ( 
            \pcf
            )
        \right| 
        \leq (2\cdot L_{\cP}+1) \sqrt{\frac{1}{2}\log\left(\frac{1}{1-|\cB|\cdot \eps}\right)}
        \leq (2L_{\cP} + 1) \cdot \sqrt{|\cB|\cdot\eps},
    \end{align*} 
    where the last inequality follows from $-\log(1-x) \leq 2x$ for all $x \in [0, \frac{1}{2}]$.
\end{proof}

\end{document}